%% file: main.tex
\documentclass[a4paper,UKenglish,numberwithinsect, autoref, thm-restate]{oasics-v2021}

\nolinenumbers
\usepackage{hyperref}

\usepackage{array,enumerate}
\usepackage{xcolor,xspace}
\usepackage{amsmath}
\usepackage{dutchcal}
\usepackage{amssymb} 
\usepackage{tikz}
\usetikzlibrary{automata, arrows.meta, positioning}
\usepackage{yhmath} 
\usepackage{minted} 

\usepackage[textsize=scriptsize,textwidth=4.3cm]{todonotes}
\usepackage{stmaryrd}
\usepackage{setspace}

\usepackage[capitalize,nameinlink]{cleveref}

\usepackage{caption}
\usepackage{subcaption}
\input{macros}

\usepackage{stmaryrd}

\newcommand{\ie}{\mbox{i.e.,\@}\xspace}%
\newcommand{\eeg}{\mbox{e.g.,\@}\xspace}%
\newcommand{\resp}{resp.\@\xspace}%
\newcommand{\nonl}{\renewcommand{\nl}{\let\nl\oldnl}}

\newcommand{\logicsym}[1][]{\ensuremath{\mathit{{#1}T^{lin}}}\xspace}
\newcommand{\logic}{\logicsym[]}%
\makeatletter
\newcommand\bigDiamond{\mathop{\mathpalette\bigDi@mond\relax}}
\newcommand\bigDi@mond[1]{%
  \vcenter{\hbox{\m@th
    \scalebox{\ifx#1\displaystyle 2\else1.2\fi}{$#1\Diamond$}%
  }}%
}
\newcommand\bigLozenge{\mathop{\mathpalette\bigL@zenge\relax}}
\newcommand\bigL@zenge[2]{%
  \vcenter{\hbox{\m@th
    \scalebox{\ifx#1\displaystyle 2\else1.2\fi}{$#1\blacklozenge$}%
  }}%
}
\makeatother

\title{\centerline{Revisiting the expressiveness of metric temporal logic:} \centerline{A tale of ``Je t'aime, moi non plus.''}}

  \author{Mohammed Aristide Foughali}{IRIF, Universit\'{e} Paris Cit\'{e}, France}{}{}{} 
\authorrunning{M.A. Foughali}
\titlerunning{Revisiting the expressiveness of metric temporal logic}

\begin{document}
\maketitle

\input{0.abstract}
\input{1.intro}
\input{2.preliminaries-linear-time}

\input{3.logics}

\input{4.monitors}

\input{8.conclusion}

\bibliographystyle{plain}
\bibliography{biblio}
\newpage
\appendix
\input{incomparability}
\input{mixed}


\end{document}

%% file: macros.tex
\newcommand{\BNFsep}{\;\;|\;\;}

\newcommand{\sem}[1]{\llbracket #1 \rrbracket}

\newcommand{\act}{\Sigma}

\newcommand{\rplus}{\mathbb{R}_{\geq 0}}

\newcommand{\icompile}[1]{\llparenthesis #1 \rrparenthesis_{\itww}}
\newcommand{\pcompile}[1]{\llparenthesis #1 \rrparenthesis_{\pww}}

\newcommand{\trho}{\ensuremath{\tilde{\rho}}}
\newcommand{\tsig}{\ensuremath{\tilde{\sigma}}}
\newcommand{\ttau}{\ensuremath{\tilde{\tau}}}

\newcommand{\pww}{\ensuremath{\mathit{pw}}}

\newcommand{\itww}{\ensuremath{\mathit{itw}}}
\newcommand{\itss}{\ensuremath{\mathit{its}}}

\newcommand{\mxx}{\ensuremath{\mathit{mx}}}

\newcommand{\pw}{\ensuremath{\models_{\pww}}}

\newcommand{\its}{\ensuremath{\models_{\itss}}}
\newcommand{\itw}{\ensuremath{\models_{\itww}}}

\newcommand{\npw}{\ensuremath{\nvDash_{\pww}}}

\newcommand{\nitw}{\ensuremath{\nvDash_{\itww}}}

\newcommand{\mx}{\ensuremath{\models_{\mxx}}}
\newcommand{\nmx}{\ensuremath{\nvDash_{\mxx}}}
\newcommand{\mtl}{\ensuremath{\mathit{MTL}}}

\newcommand{\lf}[2]{\ensuremath{\sem{\mathit{#1}}_{\mathit{#2}}^{f}}}
\newcommand{\ilf}[2]{\ensuremath{\sem{\mathit{#1}}_{\mathit{#2}}^{\mathit{inf}}}}
\newcommand{\elf}[2]{\ensuremath{\sem{\mathit{#1}}_{\mathit{#2}}^{\mathit{fi}}}}
\newcommand{\alf}[2]{\ensuremath{\mathit{\sem{#1}_{#2}}}}

\newcommand{\Kappa}{\mathrm{K}}
\newcommand{\Rho}{\mathrm{P}}
\newcommand{\tRho}{\mathit{\tilde{\Rho}}}
\newcommand{\nxt}[2]{\bigcirc_{#1}\, #2}

\newcommand{\until}[3]{\ensuremath{{#1}\, \, \mathcal{U}_{#2}\, {#3}}}

\newcommand{\uuntil}[2]{\ensuremath{{#1}\, \, \mathcal{U}\, {#2}}}
\newcommand{\globally}[2]{\ensuremath{{\square}_{#1}\, {#2}}}
\newcommand{\uglobally}[1]{\ensuremath{{\square}\, {#1}}}
\newcommand{\eventually}[2]{\ensuremath{{\textstyle\bigDiamond}_{#1}\, {#2}}}
\newcommand{\ueventually}[1]{\ensuremath{{\textstyle\bigDiamond}\, {#1}}}

\newcommand{\naturals}{\mathbb{N}}

\newcommand{\bvSym}[1]{\mathit{bv}}


%% file: 0.abstract.tex
\begin{abstract}

The expressiveness of Metric Temporal Logic (MTL) has been extensively studied throughout the last two decades. %
In particular, it has been shown that the \emph{interval-based} semantics of MTL is strictly more expressive than the \emph{pointwise} one. %
These results may suggest that enabling the evaluation of formulae at arbitrary time points \emph{instead of} positions of timed events  increases the expressive power of MTL. %
In this paper, we formally argue otherwise. %
We demonstrate that under standard models of finite or non-Zeno infinite (action-based) timed executions, the interval-based and the pointwise semantics are incomparable. %
We then propose a new \emph{mixed} semantics 
that embeds both the pointwise and the interval-based ones. %

\ccsdesc{Theory of computation~Logic and verification}
\keywords{Timed logics, Metric temporal logic}

\end{abstract}

%% file: 1.intro.tex
\section{Introduction}
\label{sec:intro}

Metric Temporal Logic (MTL)~\cite{koymans1990specifying} is a popular \emph{timed} extension of the Linear-time Temporal Logic LTL~\cite{pnueli1977temporal}. %
The importance of MTL motivated a large body of research investigating its expressive power~\cite{d2007expressiveness,bouyer2005expressiveness,bouyer2010expressiveness,haase2010process,ouaknine2008some,furia2007expressiveness}. %
In this paper, we focus on the expressiveness of the \emph{pointwise} semantics of MTL as compared to its \emph{interval-based} one. %
In order to present the existing results on such comparison, we first (informally) introduce the model of \emph{timed words}. %

A \emph{timed event} is a couple containing a letter, representing an \emph{action}, and a real number corresponding to its \emph{timestamp}, \ie the global time at which it occurred. A timed word is a finite or infinite sequence of timed events. %
Following the least constrained model of timed words~\cite{alur1994determinizable,ouaknine2005decidability,ouaknine2007decidability}, which we refer to hereafter as the \emph{general model}, a (finite or infinite) timed word satisfies time monotonicity, \ie timestamps are non-decreasing, and an infinite timed word obeys time divergence (also called non-Zenoness), that is global time does not converge towards some bounded value. %
For instance, over the alphabet $\{a,b,c\}$, the sequence $(a,0)(b,1)(c,1)(a,5.3)$ is a finite timed word whereas $(a,0)(b,1)(c,0.8)$ is not (time monotonicity is violated). %
Similarly, the infinite timed word $(a,\frac{1}{2})(b,\frac{3}{4})(b,\frac{7}{8})(c,\frac{15}{16})\ldots$, \ie where the difference between two consecutive timestamps is halved at every next action, is excluded (global time converges towards $1$). %

Under the pointwise semantics, formulae are evaluated at the \emph{positions} of timed events within the timed word, whereas the interval-based semantics is given over continuous intervals within, typically, a timed \emph{signal}~\cite{ouaknine2008some}. %
To enable the comparison between both semantics, %
D'Souza and Prabhakar~\cite{d2007expressiveness} propose to interpret the interval-based semantics over timed words (instead of timed signals), allowing the evaluation of formulae at arbitrary time points (including those where no events occur, \ie not corresponding to a timestamp). %
They demonstrate accordingly that the interval-based semantics is strictly more expressive than the pointwise one over finite timed words, using a constrained model where time monotonicity is strict (\eeg the timed word $(a,0)(b,1)(c,1)(a,5.3)$ used as an example above is excluded by design). %
We refer to this existing result as \textbf{ER}:

\begin{center}
\textbf{ER}:  The interval-based semantics is strictly more expressive\\ than
 the pointwise one over strictly monotone finite timed words.
\end{center}
 
However, we argue that while strict monotonicity of time is reasonable is many cases, it is not justified in general. %
Indeed, simultaneous interleaving in timed systems, theorised  since~\cite{dill1989timing,alur1991logics}, is of sheer importance when formally reasoning about concurrent behaviour, \eeg in multiprocessor real-time scheduling (see for instance~\cite{DBLP:journals/jsa/FoughaliHZ23,brandenburg2011scheduling,DBLP:conf/compsac/FoughaliMZ24}), and distributed applications~\cite{liebig1999event,fidge2002logical,lamport2019time}. %

In a remotely related work, Bouyer et al. proved in two seminal papers~\cite{bouyer2005expressiveness,bouyer2010expressiveness} that TPTL~\cite{alur1994really} is strictly more expressive than MTL. %
These two publications~\cite{bouyer2005expressiveness,bouyer2010expressiveness} are of particular interest to us for the following two reasons: 
\begin{itemize}
\item{} They use a clever encoding of timed words into \emph{timed state sequences} to reason over the interval-based semantics. This encoding will inspire our incomparability proofs (contributions \textbf{C1}, \textbf{C2} and \textbf{C3} below). 
\item{} While considering the general model of infinite timed words~\cite[Sect. 2]{bouyer2005expressiveness}\cite[Sect. 2]{bouyer2010expressiveness}, they make the following side claim~\cite[Section 1]{bouyer2010expressiveness}\cite[Sect. 1]{bouyer2005expressiveness}:

\begin{center}
``As side results, we get that MTL is strictly more expressive under \\ the interval-based semantics than under the pointwise one''
\end{center}

We formally show that this claim does not hold (\cref{thm:inc}). %
\end{itemize}

There is therefore a need to complete the picture of \textbf{ER} and, in case of incomparability of the interval-based and the pointwise semantics, unify them in order to benefit from their respective advantages. %

\textbf{Contributions.} In this paper, we (i) answer the comparability question by the negative under the general model of timed words and (ii) devise a new mixed semantics that remedies such incomparability. %
More precisely, we provide the following contributions: 
\begin{itemize}
\item{\textbf{C1}} The interval-based and the pointwise semantics are incomparable under the general model of finite timed words, %
\item{\textbf{C2}} The interval-based and the pointwise semantics are incomparable under the general model of infinite timed words,
\item{\textbf{C3}} The incomparability holds even under reasonable restrictions such as \emph{non-stuttering},
\item{\textbf{C4}} A mixed semantics embedding both the interval-based and the pointwise ones is proposed. 
\end{itemize} 
\textbf{C1}, \textbf{C2} and \textbf{C3} complete the picture of \textbf{ER}, showing that it does not hold under the more general (non-strict) monotonicity, regardless of whether timed words are finite or infinite. %
In addition, the mixed semantics (\textbf{C4}) conjugates the expressive powers of the interval-based and the pointwise ones, while embedding both semantics through automatic compilers.

\textbf{Outline.} %
The rest of this paper is organised as follows. %
\cref{sec:preliminaries} is dedicated to preliminaries, whereas the core of our contributions is detailed in \cref{sec:logics} (contributions  \textbf{C1}, \textbf{C2} and \textbf{C3}) 
and~\cref{sec:monitors} (contribution \textbf{C4}). %
~\cref{sec:conclusion} wraps up with concluding remarks and future work directions. %

%% file: 2.preliminaries-linear-time.tex
\section{Preliminaries}
\label{sec:preliminaries}

\subsection{Binary relations and functions}

Let $X$ and $Y$ be two non-empty sets. 
\begin{definition}[Binary relation]
A binary relation $\mathcal{R}$ over $X$ and $Y$ is a subset of the cartesian product 
$X \times Y$. %
The inverse relation $\mathcal{R}^{-1}$ of $\mathcal{R}$ is defined as $\mathcal{R}^{-1} = \{(y,x)\in Y \times X \, \, | \, \, (x,y)\in \mathcal{R}\}.$

For any $x\in X$, we let $\mathcal{R}[x] = \{y\in Y\, \, |\, \, (x,y)\in\mathcal{R}\}$ denote the set of all \emph{images} of $x$ by $\mathcal{R}$. %
$\mathcal{R}$ is said left total iff every element of $X$ has at least an image by $\mathcal{R}$, 
\ie $\forall x\in X: \mathcal{R}[x]\neq \emptyset$, 
right total iff $\mathcal{R}^{-1}$ is left total, and total if it is both right and left total. %
\end{definition}

\begin{definition}[Function]
A binary relation $\mathcal{R}$ over $X$ and $Y$ is called a \emph{function} from $X$ to $Y$ iff every element in $X$ has exactly one image in $Y$, \ie 
$\forall x\in X: |\mathcal{R}[x]| = 1$. 
We use the standard notation for functions, \ie $\mathcal{R}: X\rightarrow Y$ for ``$\mathcal{R}$ a function from $X$ to $Y$'' and  $\mathcal{R}(x) = y$ (instead of $\mathcal{R}[x] = \{y\}$) for ``$y$ is the image of $x$ by $\mathcal{R}$''. %
\end{definition}

\begin{definition}[Injective, surjective and bijective functions]\label{def:inj}
A function $\mathcal{R}$ is said:
\begin{itemize}
\item{} \emph{injective} iff all images are distinct: $\forall y\in Y \, \,  :
|\mathcal{R}^{-1}[y]| \leq 1$,
\item{} \emph{surjective} iff it is right total ($\mathcal{R}^{-1}$ is therefore a total relation), %
\item{} \emph{bijective} iff it is both injective and surjective ($\mathcal{R}^{-1}$ is therefore a bijective function). 
\end{itemize}
\end{definition}

\subsection{Timed words and timed state sequences}
Let $s$ be a sequence over a non-empty set $X$. %
We let sequences start at position $0$, and $|s|$ denote the length of $s$ defined in the usual way, with $|s| = \omega$ if $s$ is infinite. %

\subsubsection{Timed words}
The definitions provided in this subsection follow the general model of timed words, they can be seen as a generalisation of those given in~\cite{bouyer2005expressiveness} to both finite and infinite timed words.

\begin{definition}[Time sequence]\label{def:ts}
A \emph{time sequence} $\tau$ is a non-empty sequence over $\rplus$ satisfying the following conditions:
\begin{itemize}
\item{Monotonicity:} $\forall  i \in \naturals, 0 < i < |\tau| : \tau_{i-1} \leq \tau_{i}$,
\item{Time divergence:} $|\tau| = \omega \Rightarrow (\forall t \in \rplus\, \,  \exists i\in \naturals: \tau_i > t)$.
\end{itemize}
\end{definition}

\begin{remark} In \cref{def:ts}, time divergence is only required for infinite time sequences. This is because every finite time sequence satisfying monotonicity is a prefix of a time-divergent infinite time sequence~\cite{alur1997modularity}. %
\end{remark}

Let $\act$ be a non-empty, finite set of letters, $\act^+$ and $\act^\omega$ denote, respectively, the set of non-empty finite and infinite words over $\Sigma$, and $\act^\infty = \act^+ \cup \act^\omega$. %

\begin{definition}[Timed word]\label{def:tw}
A timed word is a pair $\rho = (\sigma,\tau)$ with $\sigma\in\act^\infty$ and $\tau$ a time sequence such that $|\sigma| = |\tau|$. %
The length of $\rho$ is $|\rho| = |\sigma|$, and each $(\sigma_i,\tau_i)$ is called a timed event. %
For some $i\in \naturals$, $i < |\rho|$, $\tau_i$ is called the \emph{timestamp} of $\sigma_i$. %
\end{definition}

Accordingly, if $\rho = (\sigma,\tau)$ is finite, say of length $n\in\naturals_{>0}$ (\resp infinite), $\sigma = \sigma_0\sigma_1\ldots\sigma_{n-1}$ is a finite word in $\act^+$ and $\tau = \tau_0\tau_1\ldots\tau_{n-1}$ is a finite time sequence (\resp $\sigma = \sigma_0\sigma_1\ldots$ is an infinite word in $\act^\omega$ and $\tau = \tau_0\tau_1\ldots$ is an infinite time sequence). %
The duration of $\rho$, denoted $\mu(\rho)$, equals the timestamp of the last letter if $\rho$ is finite, \ie $\mu(\rho) = \tau_{|\rho| - 1}$. %
Otherwise, $\mu(\rho) = \infty$ thanks to time divergence (\cref{def:ts}). %
We sometimes use the alternative notation of sequences of timed events (\eeg $(\sigma_0,\tau_0)(\sigma_1,\tau_1)\ldots(\sigma_{|\rho|-1},\tau_{|\rho|-1})$ instead of $(\sigma,\tau)$ for some finite timed word). %
We denote by $^m\rho$, $m<|\rho|$, the prefix of $\rho = (\sigma,\tau)$ up to and including position $m$, \ie $^m\rho = (\sigma_0,\tau_0)\ldots(\sigma_m,\tau_m)$.

\begin{remark}
A timed word $\rho = (\sigma,\tau)$ inherits the properties of the time sequence $\tau$, \ie time monotonicity and, if $\rho$ is infinite, time divergence (\cref{def:ts}). %
\end{remark}

\begin{example}\label{ex1} Let $\Sigma = \{a,b,c\}$. The sequence $\rho_1 = (a,1)(b,1.3)(c,3.5)(b,3.5)$ is a timed word of length $4$ and duration $3.5$. %
The sequence $\rho_2 = (a,1)(b,1.3)(c,1.2)(b,3.5)$ is not a timed word (violation of time monotonicity). %
The sequence $\rho_3 = (\sigma_i,\tau_i)$, $i\in \naturals$ defined next is an infinite timed word:  $\sigma_{2i} = a$, $\sigma_{2i+1} = b$; $\tau_0 = 0$ and for all $i$ in $\naturals_{>0}$: $\tau_{i} = \tau_{i-1} + 1$ if $i\bmod 3 = 0$ and $\tau_{i} = \tau_{i - 1}$ otherwise. %
Its prefix $^6\rho_3$ equals $(a,0)(b,0)(a,0)(b,1)(a,1)(b,1)(a,2)$. %
 \end{example}

We now define some subclasses of timed words. %
A strongly stutter-free (\resp strictly monotone) timed word is exempt of the behaviour where two occurrences of the same letter (\resp of any two letters) happen at different positions with the same timestamp. %

\begin{definition}[Stutter freeness and strict monotonicity]\label{def:stut}

A timed word $\rho = (\sigma,\tau)$ is said: 
\begin{enumerate}
 \item{} \emph{stutter free} iff $\forall i\neq j\in \naturals, i,j<|\rho|: \tau_i = \tau_{j} \Rightarrow \sigma_i \neq  \sigma_{j}$.\label{def:stut1} 
 \item{} \emph{strictly monotone} iff $\tau$ is strictly monotone: $\forall  i \in \naturals, 0< i < |\tau| : \tau_{i-1} < \tau_{i}$.\label{def:stut2}  
 \end{enumerate}
 \end{definition}

\begin{remark} While strict monotonicity implies stutter freeness, none of them is related to time-divergence. %
For instance, the infinite timed word $\rho_3$ of~\cref{ex1} is time divergent but not stutter free (and therefore not strictly monotone). %
\end{remark}

\begin{example} $(a,0)(c,3)$ is strictly monotone (and therefore stutter free), $(a,0)(b,1)(a,1)(c,3)$ is stutter free, and $(a,0)(b,1)(a,1)(b,1)(c,3)$ is not stutter free. %
\end{example}

 \subsubsection{Timed state sequences}\label{sec:tss}
 
The definitions given in this subsection generalise the model of infinite timed state sequences given in~\cite{bouyer2005expressiveness,bouyer2010expressiveness} to both the finite and infinite cases.  
 
A \emph{non-empty interval} $I$ over $\rplus$ has the form $[l,u]$, $(l,u]$, $[l,u)$ or $(l,u)$ with its left bound $I^l = l\in \rplus$,  right bound $I^u = u\in \rplus\cup\{\infty\}$, $I^l\leq I^u$ if $I$ is right- and left-closed (\ie has the form $[l,u]$) and $I^l < I^u$ otherwise,  and $\infty]$ excluded; let $\mathcal{I}$ be the set of all such intervals. %
An element of $\mathcal{I}$ denotes a non-empty convex subset of $\rplus$ defined uniquely in the usual way. Two intervals $I_1,I_2\in \mathcal{I}$ are \emph{adjacent} iff $I_1\cap I_2 = \emptyset$, $I_1\cup I_2 \in \mathcal{I}$ and $x<y$ for every $x\in I_1$ and $y\in I_2$. %
An interval $I$ is \emph{punctual} iff it denotes a singleton set, \ie it has the form $[l,l]$. %

\begin{definition}[Interval sequence] \label{def:is}
An interval sequence $\mathcal{\iota}$ is a non-empty sequence over $\mathcal{I}$ satisfying:
\begin{itemize}
\item{Adjacency:} $\forall i \in \naturals, 0 < i < |\mathcal{\iota}| : \iota_{i-1} \text{ and } \iota_{i} \text{ are adjacent}$,
\item{Progress:} $|\mathcal{\iota}| = \omega \Rightarrow \forall t\in \rplus\, \, \exists i\in \naturals: t\in \iota_i$.
\end{itemize} 
\end{definition}

\begin{definition}[Timed state sequence] Let $\mathit{AP}$ be a finite set of atomic propositions. 
A timed state sequence over $\mathit{AP}$ is a pair $\kappa = (\eta,\iota)$ where $\eta$ is a sequence over $\mathcal{P}(AP)$ (the powerset of $\mathit{AP}$), $\iota$ an interval sequence and $|\eta| = |\iota|$. %
\end{definition}

Similarly to timed words, given a timed state sequence $\kappa = (\eta,\iota)$, we (i) sometimes use the alternative sequence notation $(\eta_i,\iota_i)$, $i\in\naturals$, $i<|\kappa|$ instead of $(\eta,\iota)$ and  %
 (ii) define $\mu(\kappa)$, the duration of $\kappa$ as $\iota^u_{|\kappa| -1}$ if $\kappa$ is finite (requiring, without loss of generality, $\iota_{|\kappa| -1}$ to be right closed), and $\infty$ otherwise (thanks to the progress condition, \cref{def:is}).  %
 Moreover, given $t$ in $\rplus$ (satisfying $t\leq \mu(\kappa)$ if $\kappa$ is finite), we denote by $\kappa(t)$ the set $\eta_i$ where $t\in \iota_i$ (such $\iota_i$ is uniquely defined thanks to the adjacency property). 

\begin{example}\label{ex:ts}
Let $\mathit{AP} = \{p,q\}$.  The sequence $(\{p\},[1,2))(\{p,q\},[2,3])(\{p\},[3,3.4])$ is not a timed state sequence (violation of interval adjacency). The sequence \\%
 $\kappa = (\{p\},[1,2))(\{p,q\},[2,3))(\{q\},[3,3]),(\{p\},(3,3.4])$ is a timed state sequence with length $4$ and duration $3.4$, and we have $\kappa(1.2) = \kappa(3.1) = \{p\}$, $\kappa(2.2) = \{p,q\}$ and $\kappa(3) = \{q\}$.
\end{example}

\subsection{Timed words vs. timed state sequences}\label{sec:vs} A timed word is a natural representation of an \emph{action-based} execution of a timed system. %
For example, $(a,2)(b,2)(c,4.1)$ corresponds to an execution where the system waited for two time units, performed action $a$ then immediately $b$, then waited for 2.1 time units and finally performed $c$. %
A timed state sequence, on the other hand, is suitable for signals: %
\eeg $(\{p\},[0,2))(\{q,r\},[2,3])$ represents a signal where $p$ held for two time units (excluding $2$), then $q$ and $r$ held for one time unit. %
While there is a two-way translation between the untimed versions of these models modulo bisimulation~\cite{de1990action}, this is not the case in the timed setting, see \eeg the discussion on signals vs. timed words in~\cite{asarin2002timed} (this heterogeneity led to the proposition of signal-event frameworks as in~\cite{berard2006refinements}). %
We adopt an action-based view for the simple reason that it is the only way to enable any comparison involving the pointwise semantics. %
We still need timed state sequences as their use to encode timed words
~\cite{bouyer2005expressiveness,bouyer2010expressiveness} inspire our incomparability proofs. %

\subsection{Metric Temporal Logic (MTL)}\label{sec:mtl}
Let $\mathcal{I}_q$ be the set of all non-empty intervals over $\rplus$ with rational endpoints, \ie defined similarly to $\mathcal{I}$ (\cref{sec:tss}) but with $l\in\mathbb{Q}_{\geq 0}$, $u\in \mathbb{Q}_{\geq 0}\cup\{\infty\}$. %
In order to define clean syntax and semantics of MTL under both the pointwise and the interval-based semantics, we equate hereafter the sets of atomic propositions $\mathit{AP}$ and letters (actions) $\act$ and refer to both as $\act$. %
The syntax of MTL is given below with $a\in \act$ and $I\in \mathcal{I}_q$. %

  $$
    \begin{array}{lll}
      \phi,\psi &:=& a \BNFsep \neg \phi \BNFsep \phi \wedge \psi \BNFsep  
        \until{\phi}{I}{\psi}  
    \end{array}
  $$

Where $\until{}{I}{}$ denotes the \emph{timed Until} operator (note that we use its standard strict version~\cite{bouyer2005expressiveness,ouaknine2008some,ouaknine2005decidability}). %
The syntactic sugar is defined as usual: $\phi \vee \psi \equiv \neg(\neg\phi \wedge \neg\psi)$, $\top \equiv a \vee \neg a$, $\nxt{I}{\phi} \equiv \until{(\neg\bigvee_{a\in\Sigma}a)}{I}{\phi}$ (\emph{next} $\phi$), $\eventually{I}\phi \equiv \until{\top}{I}{\phi}$ (\emph{eventually} $\phi$) and $\globally{I}{\phi} \equiv \neg\eventually{I}{\neg\phi}$ (\emph{globally} $\phi$). %
We will use often the handy formula $\varphi_\Sigma = \bigvee_{a\in\Sigma}a$ (utilised to define the next operator above) hereafter. %
We omit the subscript $I$ when $I = [0,\infty)$. %

Let $\rho = (\sigma,\tau)$ be a timed word and $\kappa = (\eta,\iota)$ a timed state sequence. %
Following~\cite{bouyer2005expressiveness,d2007expressiveness}, we force $\tau_0$ to equal $0$ and $\iota_0$ to be left closed at $0$ without loss of generality. %
We let $\Rho$ and $\Kappa$ be, respectively, the set of all such timed words and timed state sequences. %
We give below the pointwise semantics (over elements of $\Rho$) and the two versions of the interval-based semantics (one over $\Rho$ as in~\cite{d2007expressiveness}, and one over $\Kappa$ following~\cite{bouyer2005expressiveness}).  
  
  \subsubsection{Pointwise semantics} Formulae are evaluated over a timed word $\rho = (\sigma,\tau)\in \Rho$ and a position $i \in \naturals$ with $i < |\rho|$. %
  To avoid unnecessary cluttering of notation, $j < |\rho|$ is assumed in the semantics of $\until{\phi}{I}{\psi}$. %
  $$
  \begin{array}{lll}
      \rho,i\pw a &\text{ iff }& \sigma_i = a
      \\
      \rho,i\pw \neg \phi &\text{ iff }& \rho,i\npw \phi
      \\
      \rho,i\pw \phi \wedge \psi &\text{ iff }& \rho,i\pw \phi \text{ and } \rho,i\pw \psi
      \\
      \rho,i\pw \until{\phi}{I}{\psi} &\text{ iff} & \exists j > i: \tau_j-\tau_i\in I \text{ and } \rho,j\pw \psi \text{ and } \forall i < k < j : \rho,k\pw \phi%
    \end{array}
$$

\subsubsection{Interval-based semantics}

\textbf{Over timed words.} Formulae are evaluated over a timed word $\rho = (\sigma,\tau)\in \Rho$ and a timepoint $t \in \rplus$ (with $t\leq \mu(\rho)$ if $\rho$ is finite). %
To avoid heavy notation, $|\rho| \neq \omega \Rightarrow t'\leq \mu(\rho)$ is assumed in the semantics of $\until{\phi}{I}{\psi}$. %
   
   $$
  \begin{array}{lll}
      \rho,t\itw a &\text{ iff }& \exists i < |\rho| : \sigma_i = a \text{ and } \tau_i = t
      \\
      \rho,t\itw \neg \phi &\text{ iff }& \rho,t\nitw \phi
      \\
      \rho,t\itw \phi \wedge \psi &\text{ iff }& \rho,t\itw \phi \text{ and } \rho,t\itw \psi
      \\
      \rho,t\itw \until{\phi}{I}{\psi} &\text{ iff} & \exists t' > t  : t'-t\in I \text{ and } \rho,t'\itw \psi \text{ and } \forall t < t'' < t' : \rho,t''\itw \phi%
    \end{array}
$$

\textbf{Over timed state sequences.} Formulae are interpreted over a timed state sequence $\kappa = (\eta,\iota)\in \Kappa$ and a timepoint $t \in \rplus$ (with $t\leq \mu(\kappa)$ if $\kappa$ is finite). We give the definition for $\phi = a$ only, the remaining cases can be obtained from $\models_{itw}$ by replacing $\rho$ with $\kappa$. %

$$
  \begin{array}{lll}
      \kappa,t\its a &\text{ iff }& a \in \kappa(t)
    \end{array}
$$

We say that $\rho$ satisfies $\phi$, written $\rho \models_{sem} \phi$ iff $\rho,0 \models_{sem} \phi$ with $\mathit{sem\in \{pw,itw\}}$ and similarly $\kappa \models_{its} \phi$ iff $\kappa,0 \models_{its} \phi$. %
Let $\mathit{sem\in \{pw,itw,its\}}$. %
We write $\lf{\phi}{sem}$ (\resp $\ilf{\phi}{sem}$) to denote the set of all finite (\resp infinite) sequences $s$ such that $s\models_{\mathit{sem}} \phi$, and let $\alf{\phi}{sem} = \lf{\phi}{sem} \cup \ilf{\phi}{sem}$. %
For instance, $\alf{\phi}{pw} = \{\rho\in \Rho\, \, | \, \, \rho \pw \phi\}$. %
When $\mathit{sem\in \{pw,itw\}}$, each of $\lf{\phi}{sem}$, $\ilf{\phi}{sem}$ and $\alf{\phi}{sem}$ defines a \emph{timed language}. %

\begin{example}\label{exincomp} Let $\rho_1 = (a,0)(b,1)(a,1)(c,3.3)$, $\zeta_1 = \ueventually{(b \wedge \nxt{[0,0]}{a})}$ and $\zeta_2 = \eventually{(0,1)}{\eventually{[0,3.5]}{c}}$. %
For the formula $\zeta_1$, that reads ``eventually $b$ will occur followed immediately by an $a$'', we have $\rho_1 \pw \zeta_1$ but $\rho_1 \nitw \zeta_1$ (the closest formula to $\zeta_1$ under \itw\,  is $\ueventually{(a\wedge b)}$ where the order is lost, see below). %
On the other hand, for $\zeta_2$ that reads ``from some time within $(0,1)$, $c$ will occur between $0$ and $3.5$ time units'' we have  $\rho_1 \itw \zeta_2$ but $\rho_1 \npw \zeta_2$. %
\end{example}
The above example is an excellent motivator for the remainder of this paper: its counterintuitive outcome (as $\rho_1$ ``should'' satisfy both formulae) owes to the fact that the pointwise semantics can reason only at positions of timed events, whereas the interval-based one loses the notion of order over simultaneous events as we will see throughout~\cref{sec:logics}.

%% file: 3.logics.tex
\section{The Pointwise and Interval-Based Semantics are Incomparable}
\label{sec:logics}

\subsection{Main observation}\label{sec:int}

\textbf{Over finite words.} To explain the intuition behind the incomparability of both semantics, an excellent starting point is the existing result \textbf{ER} (\cref{sec:intro}), since D'Souza and Prabhakar~\cite{d2007expressiveness} use a clean version of pointwise and interval-based semantics both interpreted over finite timed words, \ie the satisfaction relations $\models_{pw}$ and  $\models_{itw}$ (\cref{sec:mtl}) where $|\rho| < \omega$. %
To arrive at \textbf{ER}, the authors use the formula $\phi_{\mathit{dp}}$ defined below\footnote{We slightly rewrote this formula in terms of the standard (strict) Until used in this paper.} over $\Sigma = \{a,b\}$  and prove formally that there exists no formula $\psi$ such that $\lf{\psi}{pw} = \lf{\phi_{\mathit{dp}}}{itw}$. 

$$\phi_{\mathit{dp}} = \neg\ueventually{(a\wedge(\uuntil{\neg \varphi_\Sigma}{a}) \wedge (\uuntil{\neg a}{(\neg a \wedge \eventually{[1,1]}{(a \vee b)})}))}$$

The language $\lf{\phi_{\mathit{dp}}}{itw}$ contains all finite timed words such that for every two consecutive occurrences of the action $a$ at times $t$ and $t'$, there is no action that occurs within the interval $[t+1,t'+1]$. %
This is an important expressiveness result, yet we argue that half of the picture is missing: the authors require timed words to be strictly monotone~\cite{d2007expressiveness}[Sect. 1], which impairs the expressive power of the pointwise semantics. %
To explain this, let us fall back on the general model of finite timed words, \ie consider the set $\{\rho \in \Rho\, \, |\, \, |\rho| < \omega\}$. %
Consider now the formula $\zeta_1 = \ueventually{(b\wedge \nxt{[0,0]}{a}})$ of~\cref{exincomp}. %
In the pointwise semantics, this formula captures exactly its informal description (\cref{exincomp}), \ie $\lf{\zeta_1}{pw}$  defines the language containing all finite timed words such that there exists a position $i>0$ with $\tau_i = \tau_{i+1}$, $\sigma_i = b$ and $\sigma_{i+1} = a$. %
Under $\itw$, however, the language $\lf{\zeta_1}{itw}$ is empty, and the closest language to $\lf{\zeta_1}{pw}$ is $\lf{\varphi}{itw}$ with $\varphi = \ueventually{(a\wedge b})$. %
Yet, $\lf{\varphi}{itw}$ is more permissive than $\lf{\zeta_1}{pw}$. %
Let us show this through an example. %

\begin{example}\label{exincomp2} Consider $\rho_1 = (a,0)(b,1)(a,1)(c,3.3)$ from \cref{exincomp} and  $\rho_2 = (a,0)(a,1)(b,1)(c,3.3)$. %
Then we have $\rho_1\in \lf{\zeta_1}{pw}$ and $\rho_2\notin \lf{\zeta_1}{pw}$, but both $\rho_1$ and $\rho_2$ belong to the language $\lf{\varphi}{itw}$. %
\end{example}

That is, $\zeta_1$ can distinguish between $\rho_1$ and $\rho_2$ under the pointwise semantics, whereas $\varphi$ cannot under the interval-based one. %
As we will demonstrate later, there is in fact no MTL formula that distinguishes between $\rho_1$ and $\rho_2$ under the interval-based semantics. %

\textbf{Over infinite timed words.} The same observation above continues to hold for infinite timed words as illustrated in the following example. %

\begin{example}\label{exincinf}
Consider the two infinite timed words $\rho = (\sigma,\tau)$ and $\rho' = (\sigma',\tau')$ s.t.  $^3\rho =\rho_1$, $^3\rho' =\rho_2$ and $\forall i > 3 : \sigma'_i = \sigma_i = c \wedge \tau'_i = \tau_{i} = \tau_{i-1} + 1$, with $\rho_1$ and $\rho_2$ as defined in~\cref{exincomp2}. %
That is, $\rho$ and $\rho'$ have respectively $\rho_1$ and $\rho_2$ as a prefix and agree otherwise on an infinite suffix containing only $c$ actions. %
The above observation carries on to $\rho$ and $\rho'$ as we have $\rho\in\ilf{\zeta_1}{pw}$, $\rho'\notin\ilf{\zeta_1}{pw}$ but no formula can distinguish between $\rho$ and $\rho'$ under \itw\, (this will be proven formally later on). %
\end{example}

We observe the same weakness of the interval-based semantics over timed state sequences, \ie when considering the relation $\its$ instead of $\itw$. %
To explain this, we rely on the clever remark in Bouyer et al.~\cite{bouyer2005expressiveness}[Sect. 2],~\cite{bouyer2010expressiveness}[Example 1]:

``an [infinite] timed word can be seen as a special case of timed state sequences, \eeg $\rho = (a,0)(a,1.1)(b,2)\ldots$ corresponds to the timed state sequence $\kappa = (\{a\},[0,0])(\emptyset,(0,1.1))(\{a\},$\\$[1.1,1.1])(\emptyset,(1.1,2))(\{b\},[2,2])\ldots$''

Which clearly holds under strict monotonicity. %
However, under the general model, the clearly different $\rho$ and $\rho'$ of~\cref{exincinf}  will correspond to the same timed state sequence starting with $(\{a\},[0,0])(\emptyset,(0,1))(\{a,b\},[1,1])(\emptyset,(1,3.3))(\{c\},[3.3,3.3])$. %
That is, under non-strict monotonicity of time, the encoding from infinite timed words to infinite timed state sequences, suggested in Bouyer et al.'s remark above, cannot correspond to an injective function, and therefore timed words become no longer a special case of timed state sequences. If one tries to have two different images for $\rho$ and $\rho'$, \eeg by enforcing an order for the counterpart of $\rho$, they would obtain the $\hat{\kappa} = (\{a\},[0,0])(\emptyset,(0,1))(\{b\},[1,1])(\{a\},[1,1])\ldots$, where interval adjacency is violated: $\hat{\kappa}$ is therefore \emph{not} a timed state sequence\footnote{Strictly speaking, $\hat{\kappa}$ is a \emph{weakly-monotonic} timed state sequence~\cite{alur1991logics}.}, and importantly, $\hat{\kappa}(1)$ is no longer defined.   %
This shows how the observation over finite timed words and $\itw$ above extends to finite and infinite timed words over $\itw$ and $\its$ alike. %

\subsection{Approach}\label{sec:app}

\subsubsection{Equating $\its$ and $\itw$}
In order to prove the incomparability between the pointwise and interval-based semantics, we first need to deal with the heterogeneity between timed words and timed state sequences. %
In other words, there is no way to compare $\pw$ and $\its$ if we allow arbitrary timed state sequences, due to the notable difference in nature with timed words, as explained in~\cref{sec:vs}. %
For example, the timed state sequence $(\{a,b\},[0,1))(\{c\},[1,1])$ does not make much sense in the context of action-based timed executions: actions $a$ and $b$, atomic by definition, cannot remain true continuously over the dense interval $[0,1)$. %
We need therefore to isolate the largest subset of timed state sequences that faithfully represents action-based timed executions modulo the interval-based semantics. %

We show that such subset, which we call $\Kappa_a$, can be obtained using natural properties of action-based timed executions. %
In mathematical terms, $\Kappa_a$ is the largest set of images of elements of $\Rho$ by a total relation $\mathcal{F}$ guaranteeing that $\its$ and $\itw$ coincide. %

\textbf{Action-based timed state sequences.} To construct $\Kappa_a$, we use the fact that actions occur only at punctual intervals. %
Then, since the first action occurs at $0$ and the execution, if finite, terminates with an action, the first interval is $[0,0]$ and the last interval, if exists, is punctual. %
Finally, between every two punctual intervals, we have a non-punctual one with no action associated. %
This gives us an alternation of punctual intervals with sets of actions at even positions and non-punctual ones with empty sets at odd positions. %

\begin{definition}[Action-based timed state sequences]\phantomsection
\label{def:ackk}
The set of action-based timed state sequences $\Kappa_a$ is defined as follows: 
\begin{align*}
\Kappa_a = \{\kappa = (\eta,\iota) \in \Kappa \, \, |  \, \, & \iota_0 = [0,0]  \, \, \text{ and } \\
& |\kappa| \in \naturals \Rightarrow \iota_{|\kappa|-1} \text{ is punctual } \text{ and }\\
&\forall i\in \naturals, i < |\kappa|, i\bmod 2 = 0: \eta_i\neq \emptyset \text{ and } \iota_i  \text{ is punctual } \text{and }\\
&\forall i\in \naturals, i < |\kappa|, i\bmod 2 = 1: \eta_i = \emptyset \text{ and } \iota_i  \text{ is not punctual }\}
\end{align*}
\end{definition}

\begin{example} $(\{a\},[0,0])(\emptyset,(0,1))(\{a,b\},[1,1])$ is an element of $\Kappa_a$, corresponding to an action-based execution where $a$ occurred at time $0$, and $a$ and $b$ at time $1$. %
In contrast, $(\{a\},[0,0])(\{a,b\},(0,1))(\emptyset,[1,1])$ and $(\emptyset,[0,0])(\{b\},(0,1))(\{a,b\},[1,1])$ do not represent any action-based execution and are naturally not elements of $\Kappa_a$. %
\end{example}
 
\begin{restatable}{lemma}{odd}\label{lemodd} The length of any finite element in $\Kappa_a$ is an odd number. \end{restatable}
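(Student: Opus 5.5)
The plan is to argue directly from \cref{def:ackk}, using only the parity constraints imposed on positions. Let $\kappa = (\eta,\iota)\in\Kappa_a$ be finite, and write $n = |\kappa|$. Since $\iota$ is an interval sequence, it is non-empty (\cref{def:is}), so $n\geq 1$ and $n-1$ is a valid position. The whole proof reduces to showing that $n-1$ is even, because then $n$ is odd.

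To establish this, I will use two clauses of \cref{def:ackk} that pull in opposite directions. First, the clause for finite elements states that $\iota_{n-1}$ is punctual. Second, the clause for odd positions states that every odd position $i<n$ has an interval $\iota_i$ that is \emph{not} punctual. Now suppose, for contradiction, that $n$ is even. Then $n-1$ is odd, so the odd-position clause applies to $i = n-1$ and says $\iota_{n-1}$ is not punctual. This contradicts the finite-length clause. Hence $n$ is odd.

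Two boundary points need a brief check. The case $n=1$ is consistent: the only interval is $\iota_0 = [0,0]$, which is punctual and sits at an even position. The quantification ranges in \cref{def:ackk} genuinely include $i = n-1$, since $n-1 < n$.

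I expect no real obstacle here. The only subtle point is making sure that ``punctual'' and ``not punctual'' are used as exact complements, and by the definition of a punctual interval (one denoting a singleton) they are.
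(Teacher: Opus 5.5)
Your proposal is correct and matches the paper's own proof: if $|\kappa|$ were even, then $|\kappa|-1$ would be an odd position, so \cref{def:ackk} would force $\iota_{|\kappa|-1}$ to be non-punctual, contradicting the requirement that the last interval of a finite element be punctual. Your extra checks (non-emptiness giving $|\kappa|\geq 1$, and the $|\kappa|=1$ case) are sound details that the paper leaves implicit.
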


\textbf{Relating $\Rho$ and $\Kappa_a$.} In order to define a relation between $\Rho$ and $\Kappa_a$, and to avoid heavy notations, we use an intermediary representation that allows to group events happening simultaneously while preserving their order. %
We refer to such representation, which uses the notions of \emph{flat time sequences} (\cref{def:fts}) and \emph{maximal flat subsequences} (\cref{def:mfs}), as \emph{compact timed words} (\cref{def:compact}). %

\begin{definition}[Flat time sequence]\label{def:fts}
A finite time sequence $\tau$ is said flat iff $\forall i \in \naturals, 0 < i < |\tau| : \tau_{i-1} = \tau_{i}$. 
\end{definition}

\begin{definition}[Maximal flat subsequence]\label{def:mfs}
Let $\tau$ be a time sequence and $i,j\in \naturals$ such that $i\leq j < |\tau|$. %
The finite sequence $\tau' = \tau_i\tau_{i+1}\ldots\tau_{j}$ is called a maximal flat subsequence (MFS) of $\tau$ iff $\tau'$ is a flat sequence and:
\begin{itemize}
\item{} $i > 0 \Rightarrow \tau_i > \tau_{i-1}$ and
\item{} $j+1 < |\tau| \Rightarrow \tau_{j} < \tau_{j+1}$.
\end{itemize}
\end{definition}

Note that when $i=j$ the MFS is \emph{trivial} (contains only one element). %
It is easy to see that every time sequence $\tau$ is a sequence of MFSs. %
The number of MFSs is finite if $\tau$ is finite and infinite otherwise. %
If $\tau$ is infinite, the finiteness of each MFS is guaranteed by time divergence. %
Examples will follow. %

Let $s\tau$ be the sequence over $\naturals$ defined as follows: 
\begin{itemize}
\item{} $s\tau_0 = j$ s.t. $\tau_0\ldots\tau_j$ is an MFS of $\tau$, 
\item{} $s\tau_{i+1} = k$ s.t. $\tau_{s\tau_i+1}\ldots\tau_k$  is an MFS  of $\tau$.
\end{itemize}
That is, $s\tau_i$ corresponds to the index of $\tau$ delimiting its $i^{th}$ MFS, and $|s\tau|$ corresponds to the number of MFSs in $\tau$. %
Now we are ready to define the compact version of an arbitrary timed word. %
In the following, let $\tilde{\Rho}$ denote the set of all pairs $(\tilde{\sigma},\tilde{\tau})$ where $\tilde{\sigma}$ is a (finite or infinite) sequence of non-empty finite sequences over $\Sigma$, $\tilde{\tau}$ a time sequence starting at $0$ and $|\tilde{\sigma}| = |\tilde{\tau}|$. %

  \begin{definition}[Compact timed word]\label{def:compact}
  We define the function $\mathcal{C}:\Rho\rightarrow\tilde{\Rho}$ as follows. $\mathcal{C}(\rho = (\sigma,\tau)) =  (\tilde{\rho} = (\tilde{\sigma}, \tilde{\tau}))$ iff:
  \begin{enumerate}
  \item{} $|\tilde{\rho}| = |s\tau|$,
   \item{} $\forall 0\leq  i < |\tilde{\rho}|: \tilde{\tau}_i = \tau_{s\tau_i}$,
   \item{} $\tilde{\sigma}_0 = \sigma_0\ldots \sigma_{s\tau_0}$ and $\forall 0 < i < |\tilde{\rho}|: \tilde{\sigma}_i = \sigma_{s\tau_{i-1} + 1}\ldots\sigma_{s\tau_i}$. 
  \end{enumerate}
\end{definition}

That is, $\mathcal{C}$ relates every $\rho = (\sigma,\tau) \in\Rho$ to its compact version, \ie where every $\tau_i\ldots\tau_j$ MFS of $\tau$ is reduced into a single element associated with the sequence of actions $\sigma_i\ldots\sigma_j$. 

\begin{example}\label{exnotinj} Consider $\rho_1 = (a,0)(b,1)(a,1)(c,3.3)$ and  $\rho_2 = (a,0)(a,1)(b,1)(c,3.3)$ from~\cref{exincomp2}. Then we have $\tilde{\rho}_1 = ((a),0)((b,a),1)((c),3.3)$ and $\tilde{\rho}_2 = ((a),0)((a,b),1)((c),3.3)$. 
\end{example}

\begin{restatable}{lemma}{cbij} $\mathcal{C}$ is bijective. %
\end{restatable}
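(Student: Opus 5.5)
The plan is to prove injectivity and surjectivity of $\mathcal{C}$ separately. The codomain has to be read carefully. For surjectivity to hold, $\tilde{\Rho}$ must effectively consist of pairs $(\tilde{\sigma},\tilde{\tau})$ with $\tilde{\tau}$ \emph{strictly} monotone. If $\tilde{\tau}$ could repeat a value, then $((a),0)((b),0)$ would have no preimage, since the MFS decomposition always merges equal consecutive timestamps. So the first step is to observe that the image of $\mathcal{C}$ lies in the strictly monotone part of $\tilde{\Rho}$, and then to argue bijectivity onto that set, which is the intended reading of the codomain. The observation follows from \cref{def:mfs}: consecutive MFSs satisfy $\tau_{s\tau_i} < \tau_{s\tau_i+1} = \tau_{s\tau_{i+1}}$. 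Well-definedness of $\mathcal{C}$ also needs a check. The sequence $s\tau$ is uniquely determined, because the MFS starting at a given index is unique: it is the longest flat run. Each MFS is finite, since an infinite flat tail would contradict time divergence. If $\rho$ is infinite, the number of MFSs is infinite, so $\tilde{\tau}$ is infinite and still diverges, being a subsequence of $\tau$.

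For \textbf{injectivity}, suppose $\mathcal{C}(\rho)=\mathcal{C}(\rho')=\tilde{\rho}$. I would prove by induction on $i$ that $s\tau_i = s\tau'_i$ and that the blocks of $\sigma,\tau$ and $\sigma',\tau'$ between $s\tau_{i-1}+1$ and $s\tau_i$ coincide. The length of block $i$ is $|\tilde{\sigma}_i|$ in both words, so $s\tau_i = s\tau'_i = \sum_{k\le i}|\tilde{\sigma}_k| - 1$. Within that block, every timestamp equals $\tilde{\tau}_i$ by flatness, and the letters are read off from $\tilde{\sigma}_i$. Since every index of $\tau$ lies in some MFS, the blocks exhaust $\rho$, so $\rho=\rho'$. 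Both words also have the same length, namely $\sum_i|\tilde{\sigma}_i|$, or $\omega$ when $|\tilde{\rho}|=\omega$.

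For \textbf{surjectivity}, take $(\tilde{\sigma},\tilde{\tau})$ with $\tilde{\tau}_0=0$ strictly increasing, and build $\rho$ by concatenation. For each $i$ and each letter $x$ of $\tilde{\sigma}_i$ in order, emit the event $(x,\tilde{\tau}_i)$. Then I check that the result is in $\Rho$. Monotonicity holds because times are constant inside a block and strictly increase between blocks. We have $\tau_0=0$. If the word is infinite, time divergence is inherited from $\tilde{\tau}$; if $\tilde{\rho}$ is finite, the word is finite, since each block is finite. Strict increase between blocks shows that the blocks are exactly the MFSs of $\tau$, so $s\tau_i = \sum_{k\le i}|\tilde{\sigma}_k|-1$ and $\mathcal{C}(\rho)=\tilde{\rho}$ by \cref{def:compact}.

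The main obstacle is the codomain subtlety above, not any calculation. Strict monotonicity of $\tilde{\tau}$ is what makes the blocks coincide with the MFSs in the surjectivity step. I would state explicitly that $\tilde{\Rho}$ is taken with strictly monotone $\tilde{\tau}$, or equivalently restrict the claim to $\mathcal{C}(\Rho)$. I would also point out that non-strictly monotone pairs are not images of $\mathcal{C}$.
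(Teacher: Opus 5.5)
Your proof is correct, and its skeleton matches the paper's (injectivity, then surjectivity by rebuilding a preimage block by block). It differs from the paper's proof in two substantive ways, both in your favour.

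First, your codomain remark is right, and the paper misses it. $\tRho$ is defined with $\ttau$ only a time sequence. The paper's surjectivity step merely asserts that some $s\tau$ with $\tau_{s\tau_i}=\ttau_i$ exists ``by definition''. That fails for $((a),0)((b),0)$, because consecutive MFSs always have strictly increasing timestamps. The paper tacitly relies on strictness elsewhere anyway: it needs uniqueness of $\ttau_i$ in the definition of $\trho(t)$, and non-punctuality of $(\ttau_{i-1},\ttau_i)$ in $\mathcal{R}$. So restricting to strictly monotone $\ttau$, i.e.\ to $\mathcal{C}(\Rho)$, is the correct repair of the statement.

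Second, the paper's injectivity argument has a gap. It dismisses $|\rho|\neq|\rho'|$ by appeal to $|s\tau|=|s\tau'|$, and concludes $\tau=\tau'$ from $\tau_{s\tau_i}=\tau'_{s\tau'_i}$. Neither inference is valid on its own. Compare $(a,0)$ with $(a,0)(a,0)$, or the time sequences $0,0,1$ and $0,1,1$: each pair has the same number of MFSs with the same values. What actually separates such words is the block lengths $|\tsig_i|$. Your identity $s\tau_i=\sum_{k\le i}|\tsig_k|-1$ supplies exactly this missing step. Likewise, your explicit construction in the surjectivity part replaces the paper's unproved existence claim.
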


\textbf{Relating $\tRho$ and $\Kappa_a$.} We now move to define a relation between compact timed words and action-based timed state sequences, where we use the function $\mathit{setof()}$ that transforms a sequence over $\Sigma$ to a set preserving its elements. %

\begin{definition}[Relation between $\tRho$ and $\Kappa_a$]\label{def:rel}
Let $\mathcal{R}:\tRho\rightarrow\Kappa_a$ be the function defined as follows. %
$\mathcal{R}(\trho = (\tsig,\ttau)) = (\kappa = (\eta,\iota))$ iff:
\begin{enumerate}
\item{}  $|\kappa| = |\trho| = \omega \text{ if $\trho$ is infinite}, |\kappa| = 2|\tilde{\rho}| - 1 \text{ otherwise and}$
\item{} $\forall i \in \{1\ldots |\tilde{\rho}| - 1\}: \eta_{2i} = \mathit{setof}(\tilde{\sigma}_i),  \, \iota_{2i} = [\tilde{\tau}_i,\tilde{\tau}_i], \, \eta_{2i-1} = \emptyset  \, \, \text{ and } \iota_{2i-1} = (\tilde{\tau}_{i-1},\tilde{\tau}_{i})  \, \, \text{ and}$
\item{}  $\eta_{0} = \mathit{setof}(\tilde{\sigma}_{0}), \iota_{0} = [\tilde{\tau}_{0},\tilde{\tau}_{0}]$.
\end{enumerate}
\end{definition}

\begin{example}\label{exinj} Consider 
$\tilde{\rho}_1 = ((a),0)((b,a),1)((c),3.3)$ and $\tilde{\rho}_2 = ((a),0)((a,b),1)((c),3.3)$ from~\cref{exnotinj}. We have $\mathcal{R}(\trho_1) = \mathcal{R}(\trho_2) = (\{a\},[0,0])(\emptyset,(0,1))(\{a,b\},[1,1])(\emptyset,(1,3.3))(\{c\},[3.3,3.3])$. %
\end{example}

In other words, the ``holes'' between the timestamps in $\trho$ are ``filled'' by the timed state sequence $\mathcal{R}(\trho)$ (\ie thanks to the non-punctual intervals between the punctual ones). %

Notice how $\mathcal{R}$ is \emph{not} injective due to the first component of a timed state sequence being a sequence of sets. %
We will show that this captures exactly the weakness of the relations $\its$ and $\itw$ alike, hinted at in~\cref{sec:int}. %

\begin{restatable}{lemma}{rsurj}\label{lem:surj} $\mathcal{R}$ is surjective. \end{restatable}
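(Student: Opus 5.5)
To prove \cref{lem:surj}, take an arbitrary $\kappa = (\eta,\iota)\in\Kappa_a$ and build explicitly some $\trho = (\tsig,\ttau)\in\tRho$ with $\mathcal{R}(\trho) = \kappa$. The construction reads off $\kappa$ only at its even positions. Since punctual intervals sit at even indices, each $\iota_{2i}$ has the form $[t_i,t_i]$. We set $\ttau_i = t_i$. We let $\tsig_i$ be any finite non-empty sequence over $\Sigma$ listing the elements of $\eta_{2i}$, for instance in a fixed order on $\Sigma$, each element once. This is possible because $\eta_{2i}\neq\emptyset$ and $\Sigma$ is finite.

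For the length, in the finite case \cref{lemodd} gives $|\kappa| = 2n-1$ for some $n\geq 1$. We then take $|\trho| = n$, so that the last index $2(n-1)$ of $\kappa$ is even and matches $\ttau_{n-1}$. In the infinite case we take $|\trho| = \omega$.

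Next we check that $\trho\in\tRho$, i.e.\ that $\ttau$ is a time sequence starting at $0$.
\begin{itemize}
\item $\ttau_0 = 0$ because $\iota_0 = [0,0]$.
\item For monotonicity, note that $\iota_{2i-1}$ is adjacent to $\iota_{2i-2} = [t_{i-1},t_{i-1}]$ on its left and to $\iota_{2i} = [t_i,t_i]$ on its right. Adjacency (disjointness, convex union, ordering) forces $\iota_{2i-1} = (t_{i-1},t_i)$. Since this odd interval is non-punctual and non-empty, we get $t_{i-1} < t_i$.
\item For time divergence in the infinite case, use the progress condition. Every $t$ lies in some $\iota_k$, and $\iota_k$ is bounded above by some $t_j$ with $j\ge\lceil k/2\rceil$, so the $t_i$ are unbounded.
\end{itemize}

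It remains to check the clauses of \cref{def:rel}.
\begin{itemize}
\item Lengths: the lengths agree by the choice above.
\item Position $0$: $\eta_0 = \mathit{setof}(\tsig_0)$ and $\iota_0 = [\ttau_0,\ttau_0]$ by construction.
\item Even positions $2i$ with $i\ge1$: $\mathit{setof}(\tsig_i) = \eta_{2i}$ and $\iota_{2i} = [\ttau_i,\ttau_i]$.
\item Odd positions $2i-1$: $\eta_{2i-1} = \emptyset$ by \cref{def:ackk}, and $\iota_{2i-1} = (\ttau_{i-1},\ttau_i)$ by the adjacency argument above.
\end{itemize}
Hence $\mathcal{R}(\trho) = \kappa$.

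The main obstacle is the adjacency step pinning down the odd intervals exactly as the open interval $(t_{i-1},t_i)$. We will argue it carefully as follows. Because $\iota_{2i-2}\cup\iota_{2i-1}$ is an interval disjoint union with $\iota_{2i-2}$ entirely to the left, the left endpoint of $\iota_{2i-1}$ must be $t_{i-1}$ and excluded. Symmetrically, its right endpoint must be $t_i$ and excluded. Every other step is immediate from the definitions.
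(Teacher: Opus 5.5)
Your proof is correct and follows the paper's own argument. You build a preimage by reading off the even positions of $\kappa$, taking $\ttau_i$ as the left endpoint of $\iota_{2i}$ and $\tsig_i$ as a fixed-order enumeration of $\eta_{2i}$, and you use \cref{lemodd} to get length $(|\kappa|+1)/2$ in the finite case. Your version is more complete than the paper's, since you also check that $\ttau$ is a genuine time sequence (monotonicity via adjacency, divergence via progress) and that adjacency forces each odd interval to be exactly $(\ttau_{i-1},\ttau_i)$, steps the paper leaves implicit.
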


\begin{lemma} $\mathcal{R}$ is \emph{not} injective. \end{lemma}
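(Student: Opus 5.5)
To refute injectivity of $\mathcal{R}$ it suffices to exhibit two distinct elements of $\tRho$ with the same image. The natural witnesses are the compact timed words $\trho_1 = ((a),0)((b,a),1)((c),3.3)$ and $\trho_2 = ((a),0)((a,b),1)((c),3.3)$ of~\cref{exnotinj}, which are the compact versions $\mathcal{C}(\rho_1)$ and $\mathcal{C}(\rho_2)$ of the timed words of~\cref{exincomp2}.

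The argument has three steps. First, I check that $\trho_1$ and $\trho_2$ are indeed elements of $\tRho$. Each $\tsig_i$ is a non-empty finite sequence over $\Sigma=\{a,b,c\}$. The second components coincide and equal the time sequence $0,1,3.3$, which starts at $0$. Finally, both words have length $3$.

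Second, I show $\trho_1 \neq \trho_2$. Their first components differ at position $1$, since $(b,a)\neq(a,b)$ as sequences.

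Third, I compute both images from~\cref{def:rel}. Since both words are finite of length $3$, each image has length $2\cdot 3-1=5$. The intervals depend only on $\ttau$, which is shared:
\[
\iota = [0,0]\,(0,1)\,[1,1]\,(1,3.3)\,[3.3,3.3].
\]
The odd positions carry $\emptyset$. The even positions carry $\mathit{setof}$ of the respective $\tsig_i$, and the only delicate one is position $2$, where
\[
\mathit{setof}((b,a)) = \{a,b\} = \mathit{setof}((a,b)).
\]
Hence $\mathcal{R}(\trho_1)=\mathcal{R}(\trho_2)$, which is exactly the timed state sequence displayed in~\cref{exinj}.

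By~\cref{def:inj}, the element $\mathcal{R}(\trho_1)$ then has two distinct preimages, so $|\mathcal{R}^{-1}[\mathcal{R}(\trho_1)]|\geq 2$ and $\mathcal{R}$ is not injective.

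There is no real obstacle here. The only point needing care is confirming membership in $\tRho$, and noting that $\mathit{setof}$ forgets order, so two distinct sequences with the same letters collapse to the same set.
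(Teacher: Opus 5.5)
Your proposal is correct and follows the paper's proof, which also uses the counterexample of~\cref{exinj}: the two distinct compact timed words $((a),0)((b,a),1)((c),3.3)$ and $((a),0)((a,b),1)((c),3.3)$ have the same image under $\mathcal{R}$ because $\mathit{setof}$ forgets order. The paper just cites the example, while you additionally check membership in $\tRho$ and compute the image from~\cref{def:rel}.
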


\begin{proof} The proof is straightforward through the counterexample given in~\cref{exinj}. %
\end{proof}
 In the following, we let $\mathcal{F} = \mathcal{R}\circ\mathcal{C}$. 
 
\begin{lemma}\label{lem:tot} $\mathcal{F}$ is total. %
\end{lemma}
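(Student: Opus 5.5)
We have $\mathcal{F} = \mathcal{R}\circ\mathcal{C}$, viewed as a relation over $\Rho$ and $\Kappa_a$. By the paper's definition, total means both left total and right total, so the plan treats these two directions separately. Both reduce directly to properties of $\mathcal{C}$ and $\mathcal{R}$ already established.

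\textbf{Left totality.} Here we check that every $\rho\in\Rho$ has an image. Since $\mathcal{C}$ is a function from $\Rho$ to $\tRho$ and $\mathcal{R}$ is a function from $\tRho$ to $\Kappa_a$, their composition is a function from $\Rho$ to $\Kappa_a$. Hence $\mathcal{F}[\rho]=\{\mathcal{R}(\mathcal{C}(\rho))\}\neq\emptyset$.

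The only point needing care is that these are genuinely well-defined functions, with the images landing in the stated codomains. For $\mathcal{C}$, the MFS decomposition exists and is unique, and the sequence $s\tau$ is well defined; in the infinite case this uses time divergence. For $\mathcal{R}$, the image must satisfy \cref{def:ackk}:
\begin{itemize}
\item the intervals alternate between punctual intervals $[\ttau_i,\ttau_i]$ and open intervals $(\ttau_{i-1},\ttau_i)$;
\item the open intervals are non-empty because $\ttau$ is strictly increasing across MFSs, so adjacency holds;
\item every $\eta_{2i}=\mathit{setof}(\tsig_i)$ is non-empty;
\item $\iota_0=[0,0]$;
\item in the infinite case, progress follows from the divergence of $\ttau$.
\end{itemize}
I will take these facts as part of the definitions being sound, since the paper declares $\mathcal{R}:\tRho\rightarrow\Kappa_a$. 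If needed, I will verify them briefly. This verification is the main (mild) obstacle.

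\textbf{Right totality.} Let $\kappa\in\Kappa_a$. By \cref{lem:surj}, there is some $\trho\in\tRho$ with $\mathcal{R}(\trho)=\kappa$. Since $\mathcal{C}$ is bijective, and in particular surjective onto $\tRho$, there is some $\rho\in\Rho$ with $\mathcal{C}(\rho)=\trho$. Then $\mathcal{F}(\rho)=\kappa$, so $\mathcal{F}^{-1}[\kappa]\neq\emptyset$. This shows $\mathcal{F}$ is right total, hence surjective.

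One subtlety: the bijectivity of $\mathcal{C}$ must be onto the pairs of $\tRho$ whose $\ttau$ is strictly increasing, because these are the only compact words arising from MFS decompositions. If $\tRho$ as literally defined allowed non-strict $\ttau$, then both that lemma and \cref{lem:surj} would implicitly restrict to the strictly increasing ones. Right totality still goes through, because any $\trho$ produced from $\kappa\in\Kappa_a$ by \cref{lem:surj} has strictly increasing timestamps, since the odd intervals are non-punctual. Combining both directions, $\mathcal{F}$ is total.
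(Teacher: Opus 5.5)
Your proof is correct and follows the paper's route: left totality because $\mathcal{F}=\mathcal{R}\circ\mathcal{C}$ is a composition of functions, and right totality from the surjectivity of $\mathcal{R}$ (\cref{lem:surj}). You are slightly more careful than the paper's one-line argument. It cites only that $\mathcal{C}$ is a function, whereas surjectivity of the composite also needs $\mathcal{C}$ to be onto the strictly monotone part of $\tRho$, which is exactly what you invoke and justify.
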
 

\begin{proof}
$\mathcal{F}$ is a surjective function since (i) $\mathcal{C}$ is a function and (ii) $\mathcal{R}$ is a surjective function (~\cref{lem:surj}). %
Therefore $\mathcal{F}$ is right and left total and consequently total. %
\end{proof}

\textbf{\itw\, \,  and \its\, \,  are equivalent.} We are now ready to equate $\its$ and $\itw$. %
The following proposition is central as it shows that \itw\, and \its\,  enjoy exactly the same expressive power over \emph{all} timed words and \emph{all} action-based timed state sequences. %
Note that we do not need to prove the equivalence for the version with every $\kappa\in\Kappa_a$ and elements of $\mathcal{F}^{-1}[\kappa]$ since $\mathcal{F}$ is total (\cref{lem:tot}). %

\begin{restatable}{proposition}{eqitw} \label{itw=its}
For every $\phi\in\mtl$, $\rho\in \Rho$ and $t\in\rplus$ ($t\leq \mu(\rho)$ if $|\rho|\in\naturals$):
 $$\rho, t\itw \phi \Leftrightarrow \mathcal{F}(\rho),t\its\phi$$
 \end{restatable}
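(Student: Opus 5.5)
The plan is a structural induction on $\phi$, proving the statement simultaneously for all $t$. Before the induction I would establish one auxiliary fact about the domains. Write $\kappa = \mathcal{F}(\rho) = \mathcal{R}(\mathcal{C}(\rho))$ with $\mathcal{C}(\rho) = (\tsig,\ttau)$.

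\textbf{Auxiliary fact.} The admissible time points coincide on both sides: $\mu(\rho) = \mu(\kappa)$. In the finite case both equal $\ttau_{|\trho|-1} = \tau_{|\rho|-1}$, since the last MFS ends at the last position and $\iota_{2|\trho|-2} = [\ttau_{|\trho|-1},\ttau_{|\trho|-1}]$. In the infinite case both durations are $\infty$. This guarantees that the quantifiers over $t'$ and $t''$ in the two Until clauses range over the same set, using the implicit bound $t'\leq\mu$ when finite.

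\textbf{Base case $\phi = a$.} This is the only real content. I would prove, for every admissible $t$:
\[
a\in\kappa(t) \iff \exists i<|\rho|: \sigma_i = a \wedge \tau_i = t .
\]

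For ($\Leftarrow$), position $i$ lies in some MFS, say the $k$-th, so $s\tau_{k-1} < i \leq s\tau_k$ (with the convention $s\tau_{-1}=-1$). Then $\ttau_k = \tau_{s\tau_k} = \tau_i = t$ by flatness, and $\sigma_i$ occurs in $\tsig_k$. Hence $a\in\mathit{setof}(\tsig_k) = \eta_{2k}$. Since $t\in\iota_{2k} = [\ttau_k,\ttau_k]$, we get $\kappa(t) = \eta_{2k}$.

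For ($\Rightarrow$), $\kappa(t) = \eta_m$ is well defined by adjacency. Since $a\in\eta_m$, the set is non-empty, so $m$ is even, say $m=2k$ (odd positions carry $\emptyset$ by \cref{def:rel}). Then $t = \ttau_k$, and $a$ occurs in $\tsig_k$ at some position $i$ of the $k$-th MFS, whose timestamp is $\tau_i = \ttau_k = t$.

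A small point to check is that $\iota$ really is an interval sequence covering $[0,\mu]$, so that $\kappa(t)$ is defined for every admissible $t$. This follows from strict increase of $\ttau$ (consecutive MFSs have distinct timestamps) and from time divergence in the infinite case. I expect the index bookkeeping between $\rho$, $\trho$ and $\kappa$ to be the main obstacle, though it is routine.

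\textbf{Inductive cases.} Negation and conjunction follow immediately from the induction hypothesis. For $\until{\phi}{I}{\psi}$, the two semantic clauses are syntactically identical once $\rho$ is replaced by $\kappa$. The witnesses $t'$ range over the same set by the auxiliary fact. Applying the induction hypothesis to $\psi$ at $t'$ and to $\phi$ at every $t''\in(t,t')$ converts each clause into the other. Note that this step uses the induction hypothesis at arbitrary time points, which is why the claim must be quantified over all $t$.
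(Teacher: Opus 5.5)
Your proposal is correct and follows the same route as the paper: structural induction on $\phi$, with all the content in the atomic case (tracing position $i$ through its maximal flat subsequence to the punctual interval $\iota_{2k}$, and conversely using that only even positions carry non-empty sets), while negation, conjunction and Until follow directly from the induction hypothesis applied at every admissible time point. Your explicit checks that $\mu(\rho)=\mu(\mathcal{F}(\rho))$ and that $\mathcal{F}(\rho)$ is a genuine interval sequence covering all admissible $t$ are left implicit in the paper, but they are exactly what makes the quantifier domains in the Until case match.
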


The main takeaway is summed by the following theorem, a direct consequence of~\cref{itw=its} and~\cref{lem:tot}, stating that $\mathcal{F}$ preserves the full expressiveness of $\its$ and $\itw$ over all timed words and all action-based timed state sequences. %

\begin{theorem}\label{thm1} For every $\phi\in\mtl$: 
\begin{itemize}
\item{} $\elf{\phi}{its} \cap \Kappa_a = \{\mathcal{F}(\rho)\, \, |\, \, \rho\in \elf{\phi}{itw}\}$ and $\elf{\phi}{itw}  = \bigcup \{\mathcal{F}^{-1}[\kappa]\, \, |\, \, \kappa\in \elf{\phi}{its}\cap \Kappa_a\}$ with $\mathit{fi\in\{f,inf\}}$,
\item{} $\alf{\phi}{its} \cap \Kappa_a = \{\mathcal{F}(\rho)\, \, |\, \, \rho\in \alf{\phi}{itw}\}$ and $\alf{\phi}{itw}  = \bigcup \{\mathcal{F}^{-1}[\kappa]\, \, |\, \, \kappa\in \alf{\phi}{its}\cap \Kappa_a\}$.
\end{itemize}  
\end{theorem}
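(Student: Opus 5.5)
Theorem~\ref{thm1} follows directly from Proposition~\ref{itw=its} and Lemma~\ref{lem:tot}. I will prove the two set equalities of the first item for a fixed $\mathit{fi}\in\{f,inf\}$, and then obtain the second item by taking the union over $\mathit{fi}$. Before that I record one structural fact: $\mathcal{F}$ preserves finiteness. If $|\rho|<\omega$, then $\mathcal{C}(\rho)$ has finitely many MFSs, and by Definition~\ref{def:rel} $\mathcal{F}(\rho)$ has length $2|\tilde\rho|-1$. If $\rho$ is infinite, $\mathcal{F}(\rho)$ is infinite. In addition, $\mu(\mathcal{F}(\rho))=\mu(\rho)$, because the last punctual interval is $[\tilde\tau_{|\tilde\rho|-1},\tilde\tau_{|\tilde\rho|-1}]$ and $\tilde\tau_{|\tilde\rho|-1}=\tau_{|\rho|-1}$. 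Hence $\mathcal{F}$ maps finite timed words to finite elements of $\Kappa_a$ and infinite ones to infinite ones. Finally, the case $t=0$ of Proposition~\ref{itw=its} gives, for every $\rho\in\Rho$, $\rho\itw\phi \Leftrightarrow \mathcal{F}(\rho)\its\phi$.

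\textbf{First equality.} Take $\rho\in\elf{\phi}{itw}$. Then $\mathcal{F}(\rho)\in\Kappa_a$, it has the same finiteness type as $\rho$, and it satisfies $\phi$ under $\its$; so it lies in $\elf{\phi}{its}\cap\Kappa_a$. Conversely, take $\kappa\in\elf{\phi}{its}\cap\Kappa_a$. By Lemma~\ref{lem:tot} (right totality) there is some $\rho$ with $\mathcal{F}(\rho)=\kappa$. By finiteness preservation, $\rho$ has the same type $\mathit{fi}$ as $\kappa$, and by the equivalence above $\rho\itw\phi$. Hence $\kappa=\mathcal{F}(\rho)$ with $\rho\in\elf{\phi}{itw}$.

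\textbf{Second equality.} Take $\rho\in\elf{\phi}{itw}$. Its image $\kappa=\mathcal{F}(\rho)$ lies in $\elf{\phi}{its}\cap\Kappa_a$ by the previous step, and trivially $\rho\in\mathcal{F}^{-1}[\kappa]$. Conversely, take $\rho\in\mathcal{F}^{-1}[\kappa]$ with $\kappa\in\elf{\phi}{its}\cap\Kappa_a$. Since $\mathcal{F}$ is a function, $\mathcal{F}(\rho)=\kappa$; Proposition~\ref{itw=its} then gives $\rho\itw\phi$, and finiteness preservation gives that $\rho$ has type $\mathit{fi}$.

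The step needing the most care, though it is mild, is the finiteness and duration bookkeeping. It ensures that the side condition $t\le\mu$ in the two semantics matches, and that the preimage of a finite (respectively infinite) $\kappa$ is finite (respectively infinite). All the real work is in Proposition~\ref{itw=its}, which we assume. The second item then follows because $\alf{\phi}{sem}$ is the union of the finite and infinite languages, and both image and preimage commute with unions.
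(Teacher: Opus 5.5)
Your proof is correct and is essentially the paper's own argument: the paper states \cref{thm1} as a direct consequence of \cref{itw=its} (used at $t=0$) and the totality of $\mathcal{F}$ from \cref{lem:tot}, exactly as you do. The only addition on your side is the explicit check that $\mathcal{F}$ preserves finiteness and duration. The paper leaves this implicit, but it is what justifies splitting the equalities by $\mathit{fi}\in\{f,inf\}$ and taking preimages within the right class.
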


\subsubsection{Proving Incomparability}

Since $\its$ and $\itw$ are proven equivalent, we can devise our incomparability proofs for $\pw$ \textit{vs.} $\itw$ only, while exploiting the equalities of~\cref{thm1}. %

We first formalise the notion of incomparability. %
Let $\lf{\mtl}{sem}$ denote the set of all languages over finite timed words (\ie a set of sets) defined by MTL formulae %
under the semantics corresponding to $\mathit{sem \in \{pw,itw\}}$: $\lf{\mtl}{sem} = \{\lf{\phi}{sem}\, \, |\, \,  \phi\in \mtl\}$. %
Similarly, $\ilf{\mtl}{sem} = \{\ilf{\phi}{sem}\, \, |\, \,  \phi\in \mtl\}$ and $\alf{\mtl}{sem} = \lf{\mtl}{sem} \cup \ilf{\mtl}{sem}$. %
Let \# be the binary operator defined over sets as: $S_1\#S_2 \equiv S_1\nsubseteq S_2 \wedge S_2\nsubseteq S_1$, \ie none of its arguments is included in the other. %
We can now formally define incomparability of $\pw$ and $\itw$ over MTL. 

\begin{definition}[Incomparability]\label{def:incomp}
The MTL semantics defined by $\pw$ and $\itw$ are incomparable over:
\begin{itemize}
\item{} Finite timed words iff $\lf{\mtl}{\pww} \#\, \,  \lf{\mtl}{\itww}$, 
\item{} Infinite timed words iff  $\ilf{\mtl}{\pww} \#\, \,  \ilf{\mtl}{\itww}$. 
\end{itemize}
The semantics defined by $\pw$ and $\itw$ are incomparable iff they are incomparable over finite and infinite timed words, \ie  $\alf{\mtl}{\pww} \#\, \,  \alf{\mtl}{\itww}$.
\end{definition}

In order to show that \eeg $\lf{\mtl}{pw} \# \lf{\mtl}{itw}$, it suffices to find two formulae $\phi_1,\phi_2\in \mtl$ such that for all $\psi\in\mtl$, the language of finite timed words defined by $\phi_1$ (\resp $\phi_2$) under $\pw$ (\resp $\itw$) is different from the language defined by $\psi$ under $\itw$ (\resp $\pw$) as generalised by the following proposition. %

\begin{restatable}{proposition}{propinc}\label{prop:comp}
$\lf{\mtl}{\pww} \# \lf{\mtl}{\itww}$ iff $\exists\phi_1,\phi_2\in\mtl\, \, \forall\psi\in\mtl$:
\begin{itemize}
\item $\lf{\phi_1}{\pww} \neq \lf{\psi}{\itww}$ and
\item  $\lf{\phi_2}{\itww} \neq \lf{\psi}{\pww}$. 
\end{itemize}
(The equivalence with $\ilf{\mtl}{\pww} \# \ilf{\mtl}{\itww}$ is defined analogously). %
\end{restatable}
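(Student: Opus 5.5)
We prove the two directions separately, unfolding the definition of $\#$ directly in terms of membership in the two families $\lf{\mtl}{\pww}$ and $\lf{\mtl}{\itww}$. The infinite case is literally the same argument with $\ilf{\cdot}{\cdot}$ in place of $\lf{\cdot}{\cdot}$, so we only write the finite one.

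\textbf{($\Leftarrow$)} Assume $\phi_1,\phi_2$ exist as in the statement. Since $\lf{\phi_1}{\pww}$ is by definition an element of $\lf{\mtl}{\pww}$, it suffices to show that it is not an element of $\lf{\mtl}{\itww}$. Every element of $\lf{\mtl}{\itww}$ has the form $\lf{\psi}{\itww}$ for some $\psi\in\mtl$. The first bullet gives $\lf{\phi_1}{\pww}\neq\lf{\psi}{\itww}$ for all such $\psi$, hence $\lf{\phi_1}{\pww}\notin\lf{\mtl}{\itww}$ and therefore $\lf{\mtl}{\pww}\nsubseteq\lf{\mtl}{\itww}$. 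Symmetrically, the second bullet gives $\lf{\phi_2}{\itww}\in\lf{\mtl}{\itww}\setminus\lf{\mtl}{\pww}$, so $\lf{\mtl}{\itww}\nsubseteq\lf{\mtl}{\pww}$. Together these yield $\lf{\mtl}{\pww}\#\lf{\mtl}{\itww}$.

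\textbf{($\Rightarrow$)} Assume $\lf{\mtl}{\pww}\#\lf{\mtl}{\itww}$.
\begin{itemize}
\item From $\lf{\mtl}{\pww}\nsubseteq\lf{\mtl}{\itww}$ there is a language $L\in\lf{\mtl}{\pww}$ with $L\notin\lf{\mtl}{\itww}$. By definition of $\lf{\mtl}{\pww}$ we have $L=\lf{\phi_1}{\pww}$ for some $\phi_1\in\mtl$. Since $L\notin\{\lf{\psi}{\itww}\mid\psi\in\mtl\}$, we get $\lf{\phi_1}{\pww}\neq\lf{\psi}{\itww}$ for every $\psi$.
\item Symmetrically, $\lf{\mtl}{\itww}\nsubseteq\lf{\mtl}{\pww}$ yields $\phi_2\in\mtl$ with $\lf{\phi_2}{\itww}\neq\lf{\psi}{\pww}$ for every $\psi$.
\end{itemize}
Both witnesses are chosen before $\psi$ is quantified, so the order $\exists\phi_1,\phi_2\,\forall\psi$ in the statement is respected.

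There is no real obstacle here. The only point requiring care is the quantifier order: $\phi_1,\phi_2$ must be fixed independently of $\psi$, and this is exactly what set non-inclusion provides, namely a single witness language that differs from every language in the other family.
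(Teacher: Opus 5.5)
Your proof is correct and follows essentially the paper's route: unfold $\#$ into the two non-inclusions between $\lf{\mtl}{\pww}$ and $\lf{\mtl}{\itww}$, and observe that each witness language has the form $\lf{\phi_1}{\pww}$ or $\lf{\phi_2}{\itww}$, which is exactly the $\exists\phi\,\forall\psi$ condition. You also write out the ($\Leftarrow$) direction that the paper leaves as ``analogous'', and you take the witnesses directly from the non-inclusions, so you do not need the paper's remark that neither family is empty.
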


\textbf{Proving $\lf{\mtl}{pw} \# \lf{\mtl}{itw}$.} %
We start with proving $\lf{\mtl}{pw} \nsubseteq \lf{\mtl}{itw}$. %
It suffices to find two distinct finite timed words that have the same image by $\mathcal{F}$ and a formula $\phi$ that distinguishes between them under $\pw$. %
By definition, no formula under $\its$ can distinguish between their images through $\mathcal{F}$ (since it is the same image) and we conclude that no formula can distinguish between $\rho_1$ and $\rho_2$ under $\itw$ since $\its$ and $\itw$ are equally expressive. %

\begin{restatable}{lemma}{pwf}\label{lem:pwf} Let $\zeta_1 = \ueventually{(b \wedge \nxt{[0,0]}{a})}$ from~\cref{exincomp}. %
The following statement holds: $$\forall\psi\in\mtl: \lf{\zeta_1}{pw} \neq \lf{\psi}{itw}$$
\end{restatable}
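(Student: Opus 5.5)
The plan is to argue by contradiction using the two finite timed words of \cref{exincomp2}, $\rho_1 = (a,0)(b,1)(a,1)(c,3.3)$ and $\rho_2 = (a,0)(a,1)(b,1)(c,3.3)$. Suppose some $\psi\in\mtl$ satisfies $\lf{\zeta_1}{pw} = \lf{\psi}{itw}$. I will show that $\rho_1$ lies in this language and $\rho_2$ does not, while no interval-based formula can tell them apart.

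The first step is to check the pointwise side directly. For $\rho_1$, position $1$ carries $b$, and position $2$ carries $a$ with $\tau_2-\tau_1 = 0\in[0,0]$. No position lies strictly between them, so the $\neg\varphi_\Sigma$ requirement of the next operator holds vacuously. Hence $\rho_1,1\pw b\wedge\nxt{[0,0]}{a}$, and since $1>0$ we get $\rho_1,0\pw\zeta_1$.

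For $\rho_2$, the only $b$ is at position $2$. For any $j>2$ in the Until defining $\nxt{[0,0]}{a}$, the intermediate positions $k$ must satisfy $\neg\varphi_\Sigma$, which fails at every position since each carries a letter. So only $j=3$ is possible. But $\sigma_3 = c \neq a$, and $\tau_3-\tau_2 = 2.3\notin[0,0]$. Thus $\rho_2\npw\zeta_1$, so $\rho_1\in\lf{\zeta_1}{pw}$ and $\rho_2\notin\lf{\zeta_1}{pw}$.

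The second step is to observe, as computed in \cref{exnotinj} and \cref{exinj}, that $\mathcal{F}(\rho_1) = \mathcal{R}(\mathcal{C}(\rho_1)) = \mathcal{R}(\mathcal{C}(\rho_2)) = \mathcal{F}(\rho_2)$. Both equal the timed state sequence $(\{a\},[0,0])(\emptyset,(0,1))(\{a,b\},[1,1])(\emptyset,(1,3.3))(\{c\},[3.3,3.3])$.

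Then, by \cref{itw=its} at $t=0$, for every formula $\psi$ we have $\rho_1\itw\psi \Leftrightarrow \mathcal{F}(\rho_1),0\its\psi \Leftrightarrow \mathcal{F}(\rho_2),0\its\psi \Leftrightarrow \rho_2\itw\psi$. So $\rho_1\in\lf{\psi}{itw}$ iff $\rho_2\in\lf{\psi}{itw}$. This contradicts $\lf{\psi}{itw}=\lf{\zeta_1}{pw}$, which contains $\rho_1$ but not $\rho_2$.

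The main obstacle is not in this argument but in its reliance on \cref{itw=its}. That result is assumed here; if it had to be proved, I would use structural induction on $\phi$. The key case is the atoms: $\rho,t\itw a$ iff $a$ occurs at timestamp $t$ iff $a\in\mathcal{F}(\rho)(t)$, because $\mathit{setof}$ collects every letter of the MFS at $t$, and the odd-indexed intervals carry $\emptyset$ exactly at non-timestamps. The Until case then transfers directly, since both semantics quantify over the same time points. The only remaining care point is verifying the vacuous condition in the pointwise step above.
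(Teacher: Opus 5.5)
Your proposal is correct and follows essentially the same route as the paper's proof: the same pair $\rho_1,\rho_2$, their common image under $\mathcal{F}$, and the equivalence of $\itw$ and $\its$. The differences are only cosmetic. You check the pointwise memberships explicitly, which the paper merely asserts. You also apply \cref{itw=its} at $t=0$ directly, whereas the paper goes through its corollary \cref{thm1}.
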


\begin{proof}
Let $\rho_1 = (a,0)(b,1)(a,1)(c,3.3)$ and $\rho_2 = (a,0)(a,1)(b,1)(c,3.3)$ from~\cref{exincomp2}. %
We have $\rho_1\in\lf{\zeta_1}{pw}$ and $\rho_2\notin\lf{\zeta_1}{pw}$. 
Now, $\rho_1$ and $\rho_2$ have the same image by $\mathcal{F}$ that we will refer to as $\kappa$. %
By definition, for every $\psi\in\mtl$, we have either $\kappa\in\lf{\psi}{its}$, and by~\cref{thm1}, $\rho_1\in\lf{\psi}{itw}$ and $\rho_2\in\lf{\psi}{itw}$, \emph{or} $\kappa\notin\lf{\psi}{its}$,  and again by~\cref{thm1} $\rho_1\notin\lf{\psi}{itw}$ and $\rho_2\notin\lf{\psi}{itw}$. And therefore $\lf{\zeta_1}{pw}\neq \lf{\psi}{itw}$. 
\end{proof}

\begin{remark}\label{rem:ltl} Note that this weakness of the interval-based semantics has a direct consequence on the coherence between MTL and its untimed version, LTL. %
Consider \eeg $\rho_1$ and $\rho_2$ above. %
Their untimed counterparts are, respectively, the words $abac$ and $aabc$ which are temporally different (in terms of LTL). Yet, under $\itw$ or $\its$ alike, it is impossible to distinguish between them whenever the two letters in the middle occur simultaneously.
\end{remark} 

For the part $\lf{\mtl}{itw} \nsubseteq \lf{\mtl}{pw}$, we depart from

$$\phi_{\mathit{dp}} = \neg\ueventually{(a\wedge(\uuntil{\neg (\varphi_\Sigma}{a}) \wedge (\uuntil{\neg a}{(\neg a \wedge \eventually{[1,1]}{(a \vee b)})}))}$$

The formula proven to have no equivalent under $\pw$ over strictly monotone finite timed words~\cite{d2007expressiveness} (\cref{sec:int}). %
The idea here is rather simple: find a formula $\gamma_1$ that captures exactly the language of $\phi_{\mathit{dp}}$ under the general model of finite timed words then conclude. %

\begin{restatable}{lemma}{itwf}\label{lem:itwf} Let $\gamma_1 = \phi_{\mathit{dp}} \wedge \phi_s$ with\footnote{The first clause of $\phi_s$ is necessary due to the strict nature of the Until operator, from which \uglobally{} is derived.} $\phi_s = (\neg \bigvee_{a\neq b\in\Sigma} (a\wedge b)) \wedge \uglobally (\neg \bigvee_{a\neq b\in\Sigma} (a\wedge b))$ \\
The following statement holds: $\forall\psi\in\mtl: \lf{\gamma_1}{itw} \neq \lf{\psi}{pw}$
\end{restatable}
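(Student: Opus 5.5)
Suppose, for contradiction, that some $\psi\in\mtl$ satisfies $\lf{\gamma_1}{itw} = \lf{\psi}{pw}$. The idea is to reduce to the D'Souza--Prabhakar result \textbf{ER}: there is no MTL formula $\psi'$ with $\lf{\psi'}{pw}\cap S = \lf{\phi_{\mathit{dp}}}{itw}\cap S$, where $S$ denotes the set of strictly monotone finite timed words over $\Sigma=\{a,b\}$ (this is how their result is stated in~\cite{d2007expressiveness}, where both semantics are restricted to $S$). To reach a contradiction, it suffices to exhibit such a $\psi'$ from $\psi$. The proof then has three steps.

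\textbf{Step 1: $\phi_s$ carves out exactly $S$ under $\itw$.} Under $\itw$, a timed word $\rho$ satisfies $a\wedge b$ at time $t$ iff $a$ and $b$ both occur with timestamp $t$. The formula $\phi_s$ forbids this at $t=0$ through its first clause, and at every $t>0$ through $\uglobally$, which is strict. So $\rho\itw\phi_s$ iff no two distinct letters share a timestamp.

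This is not yet strict monotonicity, because two occurrences of the same letter may still be simultaneous. I would handle this first by checking whether $\phi_{\mathit{dp}}$ already excludes such stuttering. I expect it does not fully, since $(a,0)(a,0)$ seems to collapse to a single $a$ under $\itw$. So I would instead strengthen the argument: the intersection with $S$ is all that matters. Concretely, on $S$ we have $\gamma_1$ and $\phi_{\mathit{dp}}$ equivalent under $\itw$, because $\phi_s$ holds trivially on strictly monotone words. Hence
\[
\lf{\phi_{\mathit{dp}}}{itw}\cap S=\lf{\gamma_1}{itw}\cap S=\lf{\psi}{pw}\cap S,
\]
which directly contradicts \textbf{ER}.

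\textbf{Step 2: the only delicate point is the exact form of \textbf{ER}.} \textbf{ER} must be read as a statement relative to $S$, i.e.\ about languages of strictly monotone words under both semantics. That is indeed the setting of~\cite{d2007expressiveness}, since their model contains only strictly monotone words. With this reading, the intersection identity of Step 1 is exactly what is needed, and $\phi_s$ serves mainly to make $\gamma_1$ meaningful on the general model; the statement itself does not depend on it. I would also note that the restriction of $\Sigma$ to $\{a,b\}$ matches their alphabet, so no relativisation of the alphabet is needed.

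\textbf{Step 3: check the rewriting of $\phi_{\mathit{dp}}$.} I expect the main obstacle to be verifying that the strict-Until rewriting of $\phi_{\mathit{dp}}$ given here defines the same language on $S$ as the original formula in~\cite{d2007expressiveness}. If it does not, I would reprove non-expressibility directly. The plan there is to adapt their argument: build two families of strictly monotone words with timestamps chosen so that the two families agree on every pointwise formula up to a given modal depth and granularity, via an Ehrenfeucht--Fra\"iss\'e-style game for pointwise MTL, while $\phi_{\mathit{dp}}$ separates them under $\itw$.

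Once the rewriting is verified, the lemma follows immediately from Step 1.
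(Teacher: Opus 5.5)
Your argument is correct and is essentially the paper's proof: both reduce to the D'Souza--Prabhakar non-definability result by restricting to strictly monotone words, on which $\phi_s$ holds trivially, so that $\gamma_1$ and $\phi_{\mathit{dp}}$ coincide under $\itw$. Your explicit intersection with $S$ is slightly more careful than the paper's one-line proof, which treats $\phi_s$ as characterising strict monotonicity even though stuttering words such as $(a,0)(a,0)$ satisfy $\gamma_1$ under $\itw$. As you note, this makes $\phi_s$ inessential, and, like the paper, you rely on the citation for the strict-Until rewriting of $\phi_{\mathit{dp}}$.
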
 

\begin{proof} 
The language $\lf{\gamma_1}{itw}$ contains the set of strictly monotone (satisfying $\phi_s$) finite timed words satisfying $\phi_{\mathit{dp}}$. %
The proof then follows directly from~\cite{d2007expressiveness}[Sect. 3].
\end{proof}

\begin{proposition}\label{prop:incf} The pointwise and the interval based semantics are incomparable over finite timed words.  \end{proposition}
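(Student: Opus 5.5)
The plan is to combine \cref{def:incomp} and \cref{prop:comp} with the two lemmas just established. By \cref{def:incomp}, incomparability over finite timed words means exactly $\lf{\mtl}{\pww} \# \lf{\mtl}{\itww}$. By \cref{prop:comp}, it therefore suffices to exhibit two formulae $\phi_1,\phi_2\in\mtl$ such that, for every $\psi\in\mtl$, we have $\lf{\phi_1}{\pww}\neq\lf{\psi}{\itww}$ and $\lf{\phi_2}{\itww}\neq\lf{\psi}{\pww}$.

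We take $\phi_1 = \zeta_1 = \ueventually{(b \wedge \nxt{[0,0]}{a})}$ and $\phi_2 = \gamma_1 = \phi_{\mathit{dp}}\wedge\phi_s$.

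\begin{itemize}
\item The first condition is precisely \cref{lem:pwf}. That lemma rests on \cref{thm1}: $\rho_1$ and $\rho_2$ share the same image under $\mathcal{F}$, yet only $\rho_1$ satisfies $\zeta_1$ under $\pw$.
\item The second condition is precisely \cref{lem:itwf}. It reduces, via $\phi_s$, to the strictly monotone setting of D'Souza and Prabhakar.
\end{itemize}

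Both conditions are quantified over all $\psi$ independently, so their conjunction holds for every $\psi$, and \cref{prop:comp} gives the claim.

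If one prefers not to invoke \cref{prop:comp} as a black box, its relevant direction can be checked directly. Suppose $\lf{\mtl}{\pww}\subseteq\lf{\mtl}{\itww}$. Then $\lf{\zeta_1}{\pww}\in\lf{\mtl}{\itww}$, so $\lf{\zeta_1}{\pww}=\lf{\psi}{\itww}$ for some $\psi$, which contradicts \cref{lem:pwf}. Symmetrically, $\lf{\mtl}{\itww}\subseteq\lf{\mtl}{\pww}$ would yield some $\psi$ with $\lf{\gamma_1}{\itww}=\lf{\psi}{\pww}$, contradicting \cref{lem:itwf}. Hence neither inclusion holds, which is the definition of $\#$.

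The only step needing any care is the appeal to \cref{lem:itwf}. Its proof imports the result of~\cite{d2007expressiveness}, which separates languages only on strictly monotone finite words. We must make sure that a hypothetical $\psi$ with $\lf{\psi}{\pww}=\lf{\gamma_1}{\itww}$ over the general model would, restricted to strictly monotone words, yield a pointwise formula defining $\lf{\phi_{\mathit{dp}}}{\itww}$ there. This holds because $\lf{\gamma_1}{\itww}$ consists of strictly monotone words only. Intersecting with strictly monotone words gives exactly the strictly monotone words satisfying $\phi_{\mathit{dp}}$, and $\psi$ would then contradict D'Souza and Prabhakar's theorem. Since the lemma is already stated, this concern is settled, and the proposition follows immediately.
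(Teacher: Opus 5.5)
Your proposal is correct and is the paper's proof: combine \cref{lem:pwf} and \cref{lem:itwf} through \cref{prop:comp} and \cref{def:incomp}. Your direct check of the needed direction of \cref{prop:comp} is just a harmless unfolding of it.

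One correction to your side remark. $\lf{\gamma_1}{itw}$ does \emph{not} consist only of strictly monotone words. Under $\itw$, $\phi_s$ only forbids two \emph{distinct} letters at the same time point, so for instance $(a,0)(a,0)$ satisfies $\gamma_1$. Indeed, by \cref{thm1} no $\itw$-definable language can separate $(a,0)$ from $(a,0)(a,0)$. Your argument does not need that claim, however. Intersecting both sides of a hypothetical equality $\lf{\psi}{pw}=\lf{\gamma_1}{itw}$ with the strictly monotone words already gives a pointwise definition of the strictly monotone part of $\lf{\phi_{\mathit{dp}}}{itw}$, which contradicts D'Souza and Prabhakar.
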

\begin{proof} From~\cref{lem:pwf}, \cref{lem:itwf} and~\cref{prop:comp} we get $\lf{\mtl}{pw} \# \lf{\mtl}{itw}$ and conclude by~\cref{def:incomp}. \end{proof}

\textbf{Proving $\ilf{\mtl}{pw} \# \ilf{\mtl}{itw}$.} We first prove $\lf{\mtl}{pw} \nsubseteq \lf{\mtl}{itw}$, using the same formula from~\cref{lem:pwf}. %

\begin{lemma}\label{lem:ipw} Let $\zeta_1 = \ueventually{(b \wedge \nxt{[0,0]}{a})}$. The following statement holds: $$\forall\psi\in\mtl: \ilf{\zeta_1}{pw} \neq \ilf{\psi}{itw}$$
 \end{lemma}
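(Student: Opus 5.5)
The proof mirrors that of \cref{lem:pwf}, replacing the finite witnesses by the infinite timed words $\rho$ and $\rho'$ of \cref{exincinf}. Recall that $^3\rho = \rho_1 = (a,0)(b,1)(a,1)(c,3.3)$ and $^3\rho' = \rho_2 = (a,0)(a,1)(b,1)(c,3.3)$. Both words continue with the same infinite suffix of $c$ actions, whose timestamps increase by $1$ at each step. Each is a legitimate element of $\Rho$: timestamps are non-decreasing, $\tau_0 = 0$, and time diverges because the suffix grows unboundedly. Hence both belong to the universe of infinite timed words under consideration.

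\textbf{Step 1: $\zeta_1$ separates $\rho$ and $\rho'$ under $\pw$.} I would check directly that $\rho \in \ilf{\zeta_1}{pw}$. Position $1$ carries $b$, position $2$ carries $a$, and $\tau_2 - \tau_1 = 0 \in [0,0]$. The next operator is derived from strict Until with the left argument $\neg\varphi_\Sigma$, and there is no intermediate position between $1$ and $2$, so $\rho,1 \pw b \wedge \nxt{[0,0]}{a}$. Since $1 > 0$, the strict eventually then gives $\rho,0 \pw \zeta_1$.

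For $\rho' \notin \ilf{\zeta_1}{pw}$, note that the only $b$ in $\rho'$ is at position $2$. Its successor at position $3$ is $c$, with timestamp difference $2.3 \neq 0$. The pointwise next via $\neg\varphi_\Sigma \,\mathcal{U}$ can only reach the immediate successor, because every intermediate position satisfies $\varphi_\Sigma$. So no position satisfies $b \wedge \nxt{[0,0]}{a}$.

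\textbf{Step 2: $\rho$ and $\rho'$ have the same image by $\mathcal{F}$.} Computing $\mathcal{C}$, the MFSs of the two time sequences coincide: $\{0\}$, $\{1,2\}$, then singletons. The compact words therefore differ only in $\tsig_1$, which is $(b,a)$ for $\rho$ and $(a,b)$ for $\rho'$, with identical $\ttau$ and identical remaining components. Since $\mathit{setof}((b,a)) = \mathit{setof}((a,b)) = \{a,b\}$, \cref{def:rel} yields the same infinite $\kappa = \mathcal{R}(\mathcal{C}(\rho)) = \mathcal{R}(\mathcal{C}(\rho'))$, namely the one from \cref{exinj} extended by alternating $(\emptyset,(t,t+1))(\{c\},[t+1,t+1])$.

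\textbf{Step 3: conclude via \cref{thm1}.} Fix any $\psi \in \mtl$. Take the infinite-words instance of the first item of \cref{thm1} (with $\mathit{fi} = \mathit{inf}$), or equivalently apply \cref{itw=its} at $t = 0$. Then $\rho \in \ilf{\psi}{itw}$ iff $\kappa \in \ilf{\psi}{its}$ iff $\rho' \in \ilf{\psi}{itw}$. So $\ilf{\psi}{itw}$ contains both or neither of $\rho$ and $\rho'$, while $\ilf{\zeta_1}{pw}$ contains exactly one of them. Hence the two languages differ.

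The only mildly delicate step is Step 1's negative direction, since it requires unfolding the derived next operator under the strict Until to confirm that it cannot skip over positions. Everything else is a direct transfer of \cref{lem:pwf}.
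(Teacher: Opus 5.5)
Your proof is correct and follows the paper's own argument. The paper proves this lemma by reusing the proof of \cref{lem:pwf} with the infinite witnesses $\rho,\rho'$ of \cref{exincinf}: $\zeta_1$ separates them under $\pw$, they have the same image by $\mathcal{F}$, and \cref{thm1} concludes. You additionally write out the checks the paper leaves implicit, namely the unfolding of $\nxt{[0,0]}{a}$ under the strict Until and the computation of $\mathcal{C}$ and $\mathcal{R}$.
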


\begin{proof} The proof is very similar to that of~\cref{lem:pwf}, with $\rho\in\ilf{\zeta_1}{pw}$ and $\rho'\notin\ilf{\zeta_1}{pw}$ (\cref{exincinf}) having the same image by $\mathcal{F}$. 
\end{proof} 

For $\ilf{\mtl}{itw} \nsubseteq \ilf{\mtl}{pw}$, we use a formula already proven in~\cite{bouyer2005expressiveness}[Prop. 1 and Sect. 3.2] to have a language over infinite words that is not expressible under $\pw$. %
This language is informally defined as ``all infinite timed words containing an occurrence of $b$ within the absolute interval $(0,2)$ and an occurrence of  $c$ within the absolute interval $(1,2]$, such that the latter happens \emph{after} the former within the relative interval $(0,1)$''.

\begin{lemma}\label{lem:itw} Let $\gamma_2 = \eventually{[0,1]}{(\eventually{[0,1)}{b} \wedge \eventually{[1,1]}{c})}$. 
The following statement holds: $$\forall\psi\in\mtl: \ilf{\gamma_2}{itw} \neq \ilf{\psi}{pw}$$
\end{lemma}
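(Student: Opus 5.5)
The plan is to reduce the statement to the result of Bouyer et al.~\cite{bouyer2005expressiveness}[Prop.~1 and Sect.~3.2], in the same way \cref{lem:itwf} was reduced to~\cite{d2007expressiveness}. Their result is stated over (strictly monotone) infinite timed words, or over timed state sequences via their encoding. Here we work in the general model, so we first restrict attention to the strictly monotone words, where their encoding is faithful. On such words the pointwise semantics coincides with theirs. By \cref{thm1} and \cref{itw=its}, the semantics $\itw$ coincides with $\its$ on the image $\mathcal{F}(\rho)$. For strictly monotone $\rho$, this image is exactly the timed state sequence of Bouyer et al.'s remark, since every MFS is trivial and $\mathcal{F}$ is injective on these words.

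The argument is by contradiction. Suppose some $\psi$ satisfies $\ilf{\gamma_2}{itw} = \ilf{\psi}{pw}$. Intersect both sides with the set $S$ of strictly monotone infinite timed words. This gives $\ilf{\gamma_2}{itw}\cap S = \ilf{\psi}{pw}\cap S$.

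The next step is to express membership in $S$ pointwise, so that the restricted language is still MTL-definable under $\pw$. Put $\psi' = \psi \wedge \neg\ueventually{\nxt{[0,0]}{\top}}$, together with a guard for position $0$: the strict $\ueventually{}$ already covers the pair of positions $(0,1)$, since the next operator is evaluated from the current position. Then $\ilf{\psi'}{pw} = \ilf{\psi}{pw}\cap S$.

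On the interval-based side, $\gamma_2$ restricted to $S$ is exactly the language Bouyer et al.\ prove inexpressible pointwise: some $b$ at absolute time in $(0,2)$ and some $c$ in $(1,2]$, with the $c$ occurring within $(0,1)$ after the $b$. Hence $\ilf{\psi'}{pw}$ would define their language under $\pw$ over strictly monotone infinite words, contradicting their result.

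The main obstacle is checking that the setting matches. Their inexpressibility proof uses a family of words that must be strictly monotone, or at least one must verify that their pointwise semantics and their word model agree with ours on those words. If their proof is stated over general (non-strict) words instead, the restriction to $S$ is not even needed; we then just confirm that their $\pw$ and our $\pw$ coincide. The other point to verify is that $\gamma_2$ under $\itw$ indeed denotes that informal language, in particular the endpoint conventions given by the strict Until. This is a routine unfolding of the semantics. If a mismatch appears, the fallback is to reproduce their argument directly: a pair of families of infinite words that are indistinguishable by every pointwise formula of bounded modal depth but separated by $\gamma_2$ under $\itw$.
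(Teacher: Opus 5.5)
Your proposal is correct and is essentially the paper's argument: the paper simply invokes Bouyer et al.~\cite{bouyer2005expressiveness}[Prop.~1 and Sect.~3.2] for the language obtained by unfolding $\gamma_2$ under $\itw$, and that unfolding is the same one carried out in the proof of \cref{lem:exist}. Since, as the paper's introduction notes, Bouyer et al.\ already work over the general non-strictly monotone model of infinite timed words, your restriction to $S$ and the detour through $\mathcal{F}$ and $\its$ are unnecessary; it is your fallback branch that matches the paper, and note that the strict $\ueventually{}$ evaluated at position $0$ does \emph{not} cover the pair of positions $(0,1)$, which is precisely why your extra guard $\neg\nxt{[0,0]}{\top}$ is needed.
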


\begin{proposition}\label{prop:incinf} The pointwise and the interval-based semantics are incomparable over infinite timed words. \end{proposition}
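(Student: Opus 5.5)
The plan mirrors the proof of \cref{prop:incf}. Incomparability over infinite timed words means, by \cref{def:incomp}, showing $\ilf{\mtl}{\pww} \# \ilf{\mtl}{\itww}$. By the infinite-word version of \cref{prop:comp}, it suffices to exhibit two formulae $\phi_1,\phi_2\in\mtl$ such that, for every $\psi\in\mtl$, $\ilf{\phi_1}{\pww}\neq\ilf{\psi}{\itww}$ and $\ilf{\phi_2}{\itww}\neq\ilf{\psi}{\pww}$.

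For the first witness I will take $\phi_1=\zeta_1$, which works by \cref{lem:ipw}. For the second I will take $\phi_2=\gamma_2$, which works by \cref{lem:itw}. Both lemmas hold for all $\psi$, so both conjuncts of \cref{prop:comp} are satisfied simultaneously. Applying the ``if'' direction of \cref{prop:comp}, I obtain $\ilf{\mtl}{\pww}\nsubseteq\ilf{\mtl}{\itww}$ and $\ilf{\mtl}{\itww}\nsubseteq\ilf{\mtl}{\pww}$, and I then conclude by \cref{def:incomp}.

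The step I expect to need the most care is \cref{lem:itw}, since it is the only ingredient not transferred via $\mathcal{F}$. One has to check that the result of Bouyer et al.\ (stated for their setting) really applies to $\ilf{\gamma_2}{\itww}$ under the general model.

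I would argue as follows. Their non-expressibility proof builds a family of \emph{strictly monotone} infinite timed words that no pointwise formula can separate, while $\gamma_2$ separates them. On strictly monotone words, the encoding into timed state sequences is exactly $\mathcal{F}$, and \cref{thm1} identifies $\itw$ with $\its$ there. Hence $\gamma_2$ distinguishes the same pair of words under $\itw$. Any $\psi$ with $\ilf{\psi}{\pww}=\ilf{\gamma_2}{\itww}$ over all infinite words would in particular agree with $\gamma_2$ on this strictly monotone family, which contradicts their result.

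If \cref{lem:itw} is taken as established, the proposition itself is a direct combination of the cited lemmas, and there is no further obstacle.
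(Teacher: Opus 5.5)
Your proposal is correct and follows the paper's proof: you take $\zeta_1$ via \cref{lem:ipw} and $\gamma_2$ via \cref{lem:itw}, combine them through the infinite-word version of \cref{prop:comp}, and conclude by \cref{def:incomp}. Your added sketch of why the Bouyer et al.\ result carries over to the general model, by restricting to strictly monotone witnesses where their encoding coincides with $\mathcal{F}$, is a sensible check; the paper simply cites that reference for \cref{lem:itw}.
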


\begin{proof} From~\cref{lem:ipw}, \cref{lem:itw} and~\cref{prop:comp} we get $\ilf{\mtl}{pw} \# \ilf{\mtl}{itw}$ and conclude by~\cref{def:incomp}. \end{proof}

And finally we conclude with the following theorem, a direct consequence of~\cref{prop:incf} and~\cref{prop:incinf}. 
 
\begin{theorem}\label{thm:inc} The pointwise and the interval-based semantics of MTL are incomparable. 
\end{theorem}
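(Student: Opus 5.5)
We will prove \cref{thm:inc} by assembling the two propositions already established, so the argument is essentially bookkeeping over \cref{def:incomp}. By \cref{def:incomp}, the semantics defined by $\pw$ and $\itw$ are incomparable iff they are incomparable both over finite and over infinite timed words. Equivalently, it suffices to show $\alf{\mtl}{\pww} \#\, \alf{\mtl}{\itww}$.

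First, \cref{prop:incf} gives $\lf{\mtl}{\pww} \# \lf{\mtl}{\itww}$. Its ingredients are \cref{lem:pwf}, where $\zeta_1$ separates $\rho_1$ and $\rho_2$, which have the same image under $\mathcal{F}$, and \cref{lem:itwf}, where $\gamma_1$ restricts $\phi_{\mathit{dp}}$ to strictly monotone words. Second, \cref{prop:incinf} gives $\ilf{\mtl}{\pww} \# \ilf{\mtl}{\itww}$, from \cref{lem:ipw} and \cref{lem:itw}. Both conjuncts required by \cref{def:incomp} therefore hold, and the theorem follows.

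If one prefers the formulation $\alf{\mtl}{\pww} \# \alf{\mtl}{\itww}$ literally, one more observation is needed. Every language in $\lf{\mtl}{sem}$ consists only of finite words, and every language in $\ilf{\mtl}{sem}$ consists only of infinite words. So a finite-word witness language such as $\lf{\zeta_1}{pw}$ cannot coincide with any language in $\ilf{\mtl}{itw}$, apart from a possible coincidence of the empty language. This degenerate case is excluded because each witness language is non-empty: $\rho_1 \in \lf{\zeta_1}{pw}$, and, for instance, the single-event word $(a,0)$ satisfies $\gamma_1$ under $\itw$.

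Hence $\lf{\zeta_1}{pw} \in \alf{\mtl}{\pww} \setminus \alf{\mtl}{\itww}$, and symmetrically $\lf{\gamma_1}{itw} \in \alf{\mtl}{\itww} \setminus \alf{\mtl}{\pww}$. The only step requiring care is this non-emptiness check, which I expect to be routine.
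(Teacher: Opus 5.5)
Your proof is correct and follows the paper, which obtains the theorem directly from \cref{prop:incf} and \cref{prop:incinf} via \cref{def:incomp}. Your extra check for the literal union formulation $\alf{\mtl}{\pww} \# \alf{\mtl}{\itww}$ is sound: the witness languages are non-empty and contain finite words, so they cannot belong to the infinite-word families, and this makes explicit a step the paper glosses over with its ``i.e.'' in \cref{def:incomp}.
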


\textbf{Incomparability under stutter freeness.} Our incomparability results continue to hold under the reasonable restriction of stutter freeness (\cref{def:stut}, \cref{def:stut1}). %
For the finite case, the formula $\zeta_1$ given in~\cref{lem:pwf}, as shown in the proof of the same lemma,  distinguishes between two stutter-free finite timed words under $\pw$ whereas no formula under $\itw$ can do so (and the same goes for the infinite case, ~\cref{lem:ipw} and its proof). %
The only technical difference when disallowing stuttering is the fact that a finite timed state sequence, element of $\Kappa_a$, will then have an image of finite cardinality by the inverse relation $\mathcal{F}^{-1}$, %
For example, given $\kappa = (\{a\},[0,0])(\emptyset,(0,1))(\{a,b\},[1,1])$, its image by $\mathcal{F}^{-1}$ contains an infinity of finite timed words when stuttering is allowed $(a,0)(a,0)(a,1)(b,1)(b,1)$, $(a,0)(a,0)(a,1)(b,1)(a,1)(b,1)$ etc. but only two finite timed words under stutter freeness, \ie $(a,0)(a,1)(b,1)$ and $(a,0)(b,1)(a,1)$. %
In summary, stutter freeness
which has no effect on the incomparability theorem. %

%% file: 4.monitors.tex
\section{Mixed Semantics}%
\label{sec:monitors}

In this section, we present a new \emph{mixed} semantics that embeds both the pointwise and the interval-based ones. %
We do so in three steps. %
First, we present an intuitive version of the mixed semantics over MTL that,  leveraging the expressive power of the interval-based semantics, is strictly more expressive than the pointwise one (\cref{sec:du}). %
Second, in~\cref{sec:mixint}, we conjecture that such version is incomparable with the interval-based one (due mainly to the fact that the latter is more suited for signals). 
Third, we show that extending the syntax of MTL with a single atomic formula unleashes the power of the mixed semantics leading to the embedding of both the pointwise and the interval-based semantics, conjugating therefore their full expressive powers (\cref{sec:ext}). %
The embedding is automated through syntax-driven compilers. %

\subsection{Dual reasoning on time and positions}\label{sec:du} 
In this subsection, we undertake the first step described above, \ie we propose to increase the power of the pointwise semantics using the flexibility of the interval-based one. %
The reason we do it in this direction (rather than the other way around) is the fact that the interval-based semantics is more suitable for signals (\cref{sec:vs}) whereas we focus  on action-based timed executions (another example on the counter-intuitiveness of the interval-based semantics over timed words will be given in~\cref{sec:mixint}). 
We still provide in~\cref{sec:ext} an extended version of the mixed semantics that embeds both the pointwise and the interval-based ones. 

Given a compact timed word $\trho = (\tsig,\ttau)$, we define its duration in the obvious way, \ie $\mu(\trho) = \mu(\mathcal{C}^{-1}(\trho))$. %
For any $t\in\rplus$ (with $t\leq\mu(\trho)$ if $\trho$ is finite), we let $\trho(t)$ equal $\tsig_i$ where $\ttau_i = t$ (such $\ttau_i$ is uniquely defined if it exists) and the singleton sequence $(\vdash)$ otherwise, where  $\vdash$ is a ``helper'' symbole that represents ``no action'', \ie it will be semantically nothing else than equivalent to $\neg\varphi_{\Sigma}$. %

The idea is to evaluate an MTL formula over a compact timed word $\trho = (\tsig,\ttau)$, a time point $t\in\rplus$ (with $t\leq\mu(\trho)$ if $|\trho|<\omega$)  and a position $0\leq j < |\trho(t)|$. %
It is important to remark here that the condition  $0\leq j < |\trho(t)|$ is by no means impractical for verification purposes, as the satisfaction of some formula $\phi$ by some $\trho$ boils down to the satisfaction of $\phi$ by $\trho$, $t_0 = 0$ and $j_0 = 0$ (examples will follow). %

Now we are ready to formalise the mixed semantics: 

   $$
  \begin{array}{lll}
      \trho,t,j\mx a &\text{ iff }& a = \trho(t)(j)
      \\
      \trho,t,j\mx \neg \phi &\text{ iff }& \trho,t,j\nmx \phi
      \\
      \trho,t,j\mx \phi \wedge \psi &\text{ iff }& \trho,t,j\mx \phi \text{ and } \trho,t,j\mx \psi
      \\
      \trho,t,j\mx \until{\phi}{I}{\psi} &\text{ iff} & \exists (t,j) < (t',j')  : t'-t\in I \text{ and } \trho,t',j'\mx \psi \text{ and } \\
      &&\forall (t,j) < (t'',j'') < (t',j') : \trho,t'',j''\mx \phi%
    \end{array}
$$

Where the relation $<$, used for the $\until{}{I}{}$ operator, denotes a lexicographic  order defined in the usual way. %
We say $\trho$ satisfies $\phi$, written $\trho \mx \phi$, iff $\trho,0,0\mx\phi$. %

Let us explain, relying on the rules of the semantics above, how the relation $\mx$ increases the expressive power of $\pw$ (we will formally prove this soon).  %
The real $t$ enables the evaluation at any arbitrary time point, whereas the natural $j$ allows to ``access'' the positions of ordered events occurring at $t$, if any. %
The semantics of $\until{}{I}{}$ reflects this increased power of expressiveness: $\until{\phi}{I}{\psi}$ is evaluated at any timepoint and any position of the sequence of events occurring at $t$. %
Such position is guaranteed to exist: in particular, if $t$ belongs to a ``hole'' in $\trho$ (\ie strictly comprised between two timestamps), then $\trho(t)(0)$ exists and is equal to the $\vdash$ symbole which plays otherwise a transparent role as, in such case, $\trho,t,0\models \neg\varphi_\Sigma$ and $\trho,t,0\models \top$ as expected. %
The following example illustrates the intuitive nature of the mixed semantics. %

\begin{example}\label{exmx} Consider the setting of~\cref{exincomp} with $\rho_1 = (a,0)(b,1)(a,1)(c,3.3)$, $\zeta_1 = \ueventually{(b \wedge \nxt{[0,0]}{a})}$ and $\zeta_2 = \eventually{(0,1)}{\eventually{[0,3.5]}{c}}$. %
As explained within the same example, $\rho_1$ should, intuitively, satisfy both formulae, yet we have $\rho_1 \nvDash_{itw} \zeta_1$ and $\rho_1 \nvDash_{pw} \zeta_2$. 

Under the mixed semantics, both formulae are satisfied thanks to the flexibility of $\mx$, taking both time points and positions into account. %
To illustrate this, let us first encode in the obvious way $\rho_1 = (a,0)(b,1)(a,1)(c,3.3)$ into its compact counterpart $\trho_1 = ((a),0)((b,a),1)((c),3.3)$. %

Then we have $\trho_1 \mx \zeta_1$ since $\trho_1,0,0 \mx \zeta_1$: we have $\trho_1,t',j'$ with $(t' = 1, j' = 0)$, greater then $(0,0)$ in lexicographic order, satisfies $b$ (since $\trho_1(1)(0) = b$) and in turn, $\trho_1,t'',j''$ with $(t'' = 1,j'' = 1)$, greater than $(1,0)$ satisfies $a$ within $[0,0]$ and (i) for every $(0,0)<(t_1,j_1)<(t', j')$, $\trho_1,t_1,j_1 \mx\top$ and (ii) for every $(t',j')<(t_2,j_2)<(t'', j'')$, $\trho_1,t_2,j_2 \mx\neg\varphi_\Sigma$ (vacuous truth). %

Dually, $\trho_1,0,0 \mx \zeta_2$ because for any couple $(t',j'=0)$ with $t'\in(0,1)$, we have $\trho,t',j'\mx \eventually{[0,3.5]}{c}\, $  since at $(t''=3.3 ,j'' = 0)$ we have $\trho_1,t'',j''\mx c$ (since $\trho_1(t'')(j'')=c$) and $t'' - t' \in [0,3.5]$. %
\end{example}

\subsubsection{\mx\, is strictly more expressive than \pw} 
The mixed semantics is strictly more expressive than the pointwise one. %
To show this, we proceed as follows. %
First, we devise a compiler $\pcompile{-}$ over MTL such that  $\pcompile{\phi}$ under $\mx$ is equivalent to $\phi$ under $\pw$ (using the bijective function $\mathcal{C}$ to move between timed words and their compact version). %
Second, we give an example of a formula not expressible in the pointwise semantics (from~\cref{sec:app}) and show that it is expressible in the mixed semantics. %

\textbf{From \pw\,  to \mx.}
We reason following the main observation that, in $\pw$, formulae are evaluated only at timepoints where actions occur. %
Thus, $\pcompile{\phi}$ must satisfy $\varphi_\Sigma$, regardless of its argument. %
The case for $\until{}{I}{}$ is the involved one. %
Given a formula $\until{\phi}{I}{\psi}$ interpreted under $\pw$, to get its equivalent formula under $\mx$, we need also to ensure that (i) $\psi$ holds at  a timepoint corresponding to an action occurrence and (ii) $\phi$ does not need to hold at timepoints where no action occurs, as formalised by $\pcompile{\until{\phi}{I}{\psi}}$ in the following definition. %

\begin{definition}[$\pw$-preserving compilation]
The compiler $\pcompile{-}$ is defined inductively as follows: 

\begin{itemize}
\item{} $\pcompile{a} = a$, 
\item{} $\pcompile{\neg \phi} = \varphi_{\Sigma} \wedge \neg \pcompile{\phi}$,
\item{} $\pcompile{\phi\wedge\psi} =  \varphi_{\Sigma} \wedge \pcompile{\phi} \wedge \pcompile{\psi}$,
\item{} $\pcompile{\until{\phi}{I}{\psi}} = \varphi_\Sigma \wedge \until{(\pcompile{\phi}\vee\neg\varphi_\Sigma)}{I}{(\pcompile{\psi}\wedge\varphi_\Sigma)}$.
\end{itemize}
\end{definition} 

\begin{restatable}{proposition}{compileits}\label{prop:cits}

For every $\phi\in\mtl$ and $\rho\in \Rho: \rho\pw\phi \Leftrightarrow \mathcal{C}(\rho)\mx\pcompile{\phi}$. %

\end{restatable}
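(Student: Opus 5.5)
We prove a stronger statement, by structural induction on $\phi$, that ties positions of $\rho$ to mixed evaluation points of $\trho = \mathcal{C}(\rho)$. Write $\rho = (\sigma,\tau)$ and let $\pi$ be the map sending position $i < |\rho|$ to the pair $\pi(i) = (\tau_i, i - s\tau_{m-1} - 1)$, where $m$ is the index of the MFS containing $i$ (with the convention $s\tau_{-1} = -1$). By \cref{def:compact}, $\trho(\tau_i)(\pi(i)_2) = \sigma_i$. Moreover $\pi$ is a bijection from positions of $\rho$ onto the set $D$ of pairs $(t,j)$ at which $\trho,t,j \mx \varphi_\Sigma$, that is, pairs with $t$ a timestamp and $j < |\trho(t)|$. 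Finally $\pi$ is strictly order preserving from $(\naturals,<)$ into the lexicographic order, because increasing positions either keep the timestamp and increase $j$, or strictly increase the timestamp. Using \cref{lem:tot}, or simply that $\mathcal{C}$ is a function, the claim to prove is
$$\forall i<|\rho|:\quad \rho,i\pw\phi \Leftrightarrow \trho,\pi(i)\mx\pcompile{\phi},$$
together with the auxiliary fact that $\trho,t,j\mx\pcompile{\phi}$ implies $(t,j)\in D$. The proposition then follows by taking $i=0$, since $\tau_0 = 0$ gives $\pi(0) = (0,0)$.

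The auxiliary fact is immediate. Every clause of $\pcompile{-}$ has $\varphi_\Sigma$ as a top-level conjunct, except the atomic case $a$, and $a$ already implies $\varphi_\Sigma$ because $\vdash\notin\Sigma$. The base case follows from $\trho(\tau_i)(\pi(i)_2)=\sigma_i$. For negation, $\rho,i\pw\neg\phi$ holds iff $\neg(\trho,\pi(i)\mx\pcompile{\phi})$ by the induction hypothesis, and since $\pi(i)\in D$ this is equivalent to $\trho,\pi(i)\mx\varphi_\Sigma\wedge\neg\pcompile{\phi}$. Conjunction is similar.

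The Until case is the main step. For the forward direction, take a witness $j>i$. Then $\pi(j)>\pi(i)$, the time distance is $\tau_j-\tau_i\in I$, and $\pcompile{\psi}\wedge\varphi_\Sigma$ holds at $\pi(j)$ by the induction hypothesis. Now consider an intermediate pair $(t'',j'')$ lexicographically between $\pi(i)$ and $\pi(j)$. If it lies outside $D$, it satisfies $\neg\varphi_\Sigma$. If it lies in $D$, it equals $\pi(k)$ for some $k$, and order reflection gives $i<k<j$; hence $\rho,k\pw\phi$ holds and the induction hypothesis yields $\pcompile{\phi}$ at $\pi(k)$. For the converse, a witness $(t',j')$ satisfying $\pcompile{\psi}\wedge\varphi_\Sigma$ lies in $D$, so $(t',j') = \pi(j)$ for some position $j$. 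Order reflection gives $j>i$, and $t'-t = \tau_j-\tau_i\in I$. For each $k$ with $i<k<j$, the pair $\pi(k)$ is strictly between $\pi(i)$ and $\pi(j)$ and belongs to $D$, so it satisfies $\pcompile{\phi}\vee\neg\varphi_\Sigma$ with $\neg\varphi_\Sigma$ false. Hence $\pcompile{\phi}$ holds there, and the induction hypothesis gives $\rho,k\pw\phi$.

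The main obstacle is establishing carefully that $\pi$ is an order isomorphism onto $D$. This requires showing both that $D$ is exactly the image of $\pi$ and that the lexicographic order restricted to $D$ matches the position order. We also need that $\vdash$ fillers at non-timestamp times never satisfy any letter. Both points follow from the MFS decomposition: each timestamp appears in exactly one $\ttau_m$, and in that case $\trho(\tau_i)=\tsig_m$, whose length equals the size of the corresponding MFS. Once this lemma is isolated, the induction itself is routine.
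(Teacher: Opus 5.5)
Your proof is correct and is essentially the paper's argument. Your $\pi$ is the inverse of the paper's bijection $(t,j)\mapsto\mathcal{m}_{\trho}(\mathit{ind}_{\trho}(t),j)$, your auxiliary fact is \cref{aux1}, and your position-indexed induction is \cref{aux2} read through that bijection, instantiated at $\pi(0)=(0,0)$. You are also more explicit than the paper on the point it leaves implicit in the Until case: that this bijection is an order isomorphism between $(\naturals,<)$ and the lexicographic order on $D$, which is what makes the intermediate-point clauses correspond.
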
 

\textbf{A language definable under $\mx$ but not under $\pw$.}
Consider the formula $\gamma_2 = \eventually{[0,1]}{(\eventually{[0,1)}{b} \wedge \eventually{[1,1]}{c})}$ from~\cref{lem:itw}. %
As states the same lemma, $\gamma_2$ has no equivalent under the pointwise semantics over infinite timed words. %
The following lemma states that $\gamma_2$ does have an equivalent under \mx\, over infinite timed words. %

\begin{restatable}{lemma}{supmxpw}\label{lem:exist} Let $\gamma_2 = \eventually{[0,1]}{(\eventually{[0,1)}{b} \wedge \eventually{[1,1]}{c})}$ and $\gamma_3 = \eventually{(0,1]}{(\eventually{[0,1)}{b} \wedge \eventually{[1,1]}{c}})$. Then:
 $$\ilf{\gamma_2}{itw} = \{\mathcal{C}^{-1}(\trho)\, \, | \, \, \trho\in\ilf{\gamma_3}{mx}\}$$
\end{restatable}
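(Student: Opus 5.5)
The plan is to unfold both formulae into explicit first-order conditions on the underlying timed word and then compare them. Fix an infinite $\rho=(\sigma,\tau)\in\Rho$ and let $\trho=\mathcal{C}(\rho)$; since $\mathcal{C}$ is bijective, it suffices to show $\rho\itw\gamma_2 \Leftrightarrow \trho\mx\gamma_3$. I would first establish two elementary facts about $\mx$.

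\emph{Fact (i):} for a point $t$ that is not a timestamp of $\rho$, the only admissible position is $j=0$, with $\trho(t)=(\vdash)$. Such a point satisfies exactly the formulae that $\rho,t$ satisfies under $\itw$ when no action occurs at $t$.

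\emph{Fact (ii):} under $\mx$, $\eventually{I}{a}$ at $(t,j)$ holds iff some $(t'',j'')>(t,j)$ with $t''-t\in I$ satisfies $\trho(t'')(j'')=a$. For $\itw$ the same unfolding holds with the strict condition $t''>t$ in place of the lexicographic one.

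The outer operator explains the change from $[0,1]$ to $(0,1]$. Under $\itw$ the strict Until forces the outer witness $t'$ to satisfy $t'>0$, so $\eventually{[0,1]}{}$ effectively ranges over $(0,1]$. Under $\mx$, however, a witness $(0,j')$ with $j'>0$ would be admissible whenever several events share timestamp $0$, and choosing $(0,1]$ in $\gamma_3$ excludes this. I would therefore reduce both sides to the following statement: there is $t'\in(0,1]$ (and, for $\mx$, some $j'$) such that a $b$ occurs strictly later in the relevant order within relative distance $[0,1)$, and a $c$ occurs at exactly $t'+1$.

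For the $c$-conjunct, $t'+1>t'$ strictly, so the lexicographic and the real-time orders agree and any position $j''$ carrying $c$ at time $t'+1$ works on both sides. For the $b$-conjunct, the two orders agree whenever the $b$-witness lies at a time $t''>t'$. The direction $\itw\Rightarrow\mx$ is then immediate: take $j'=0$ and use Fact (i) or (ii) according to whether $t'$ is a timestamp.

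The main obstacle is the direction $\mx\Rightarrow\itw$ when the $\mx$-witness uses $t''=t'$ with $j''>j'$, that is, a $b$ occurring at the same timestamp as the chosen point but at a later position. My first attempt would be to shift $t'$ slightly to the left, so that the $b$ lands strictly in the future. This, however, moves the $c$-constraint from $t'+1$ to $t'-\epsilon+1$, and the punctual $[1,1]$ constraint breaks. A candidate such as $(a,0)(a,0.5)(b,0.5)(c,1.5)\ldots$ seems to separate the two semantics: the $\mx$-witness $(t'=0.5,j'=0)$ works, but no $\itw$-witness exists. So before writing the full proof I would check this boundary case carefully. If the candidate is confirmed, the fix is to adjust $\gamma_3$ so that the simultaneous-$b$ case is excluded. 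One option is to require the outer witness to be taken at position $0$ and to ask the $b$ to satisfy $\neg$-simultaneity, for instance via $\eventually{(0,1)}{b}\vee\bigl(\eventually{[0,0]}{b}\wedge\ldots\bigr)$ suitably restricted. The remaining cases are then settled by Facts (i)–(ii).
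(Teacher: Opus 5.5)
Your boundary case is a genuine counterexample, so the statement as printed is false and cannot be proved. Take $\rho=(a,0)(a,0.5)(b,0.5)(c,1.5)(a,3)(a,4)(a,5)\cdots$, so that $\mathcal{C}(\rho)=((a),0)((a,b),0.5)((c),1.5)((a),3)\cdots$.

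Under $\mx$, three pairs do the work:
the pair $(0.5,0)$ witnesses the outer $\eventually{(0,1]}{}$;
the pair $(0.5,1)>(0.5,0)$ carries $b$ at offset $0\in[0,1)$;
the pair $(1.5,0)$ carries $c$ at offset exactly $1$.
Hence $\mathcal{C}(\rho)\in\ilf{\gamma_3}{mx}$.

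Under $\itw$, the only $c$ in $(1,2]$ is at $1.5$, which forces the outer point to be $0.5$. The only $b$ is at $0.5$, which is not strictly later, so $\rho\notin\ilf{\gamma_2}{itw}$.

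Your $\itw\Rightarrow\mx$ direction (outer witness at position $j'=0$) is correct. So only the inclusion $\ilf{\gamma_2}{itw}\subseteq\{\mathcal{C}^{-1}(\trho)\mid\trho\in\ilf{\gamma_3}{mx}\}$ survives.

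The paper's proof breaks exactly where you located the obstacle. Its ``similar'' characterisation of $\ilf{\gamma_3}{mx}$ imposes $\ttau_j-\ttau_i\in(0,1)$. This drops, without comment, the witnesses where the $b$ sits at the outer point's own timestamp but at a later position. These are the words with $\tsig_i(i')=b$ for some $i'\geq 1$, $\ttau_i\in(0,1]$, and a $c$ at time $\ttau_i+1$, so that $\ttau_j-\ttau_i=1$. The lexicographic subtlety was handled for the outer operator (hence $(0,1]$) but not for the inner $[0,1)$.

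Your proposed repair is heavier than needed. Forcing the outer witness to position $0$ and adding a disjunct with $\eventually{[0,0]}{b}$ are both unnecessary. Under $\itw$ a $b$ simultaneous with the outer point never counts, so there is nothing for such a disjunct to recover. It suffices to open the inner interval, taking $\gamma_3'=\eventually{(0,1]}{(\eventually{(0,1)}{b}\wedge\eventually{[1,1]}{c})}$.

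Under $\itw$ this is equivalent to $\gamma_2$ at every time point, because the strict Until never realises offset $0$. Under $\mx$ both inner witnesses are now forced to strictly later times. The lexicographic and real-time orders therefore agree, the outer position $j'$ becomes irrelevant, and your two directions go through verbatim.

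With $\gamma_3'$ in place of $\gamma_3$, the uses of this lemma in \cref{thm:smxpw} and \cref{lem:supall} are unaffected, since the disjointness argument there only needs a $b$ in $(0,2)$.

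One minor point: your Fact (i) is overstated if read for arbitrary formulae. From a hole just before $(b,1)(a,1)$, $\zeta_1$ holds under $\mx$ but fails under $\itw$. You only need the fact for atoms and for eventualities whose witnesses lie strictly later, where it is correct.
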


\begin{remark}
Note the subtlety in $\gamma_3$ where the interval of the outer eventually operator is left-open at $0$. %
Under $\itw$, $\gamma_2$ and $\gamma_3$ are equivalent because the time point satisfying the formula must be greater than $0$ by the definition of the (strict) Until operator from which $\eventually{[0,1]}{}$ is derived. %
Under $\mx$, on the other hand, we have $\ilf{\gamma_3}{mx} \subsetneq \ilf{\gamma_2}{mx}$   since \eeg $(t=0,j'=2)$ is greater than $(t=0,j=0)$ (but $t$ is not greater than $0$); we have therefore the compact version of any infinite timed word starting with $(a,0)(c,0)(b,0)(c,1)(c,3)$ in $\ilf{\gamma_2}{mx}$ but not in $\ilf{\gamma_3}{mx}$. %
Hence the explicit exclusion of $0$ from the interval. %
\end{remark} 

\begin{theorem}\label{thm:smxpw} The mixed semantics is more expressive than the poitnwise one. %
\end{theorem}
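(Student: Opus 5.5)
The plan is to establish ``strictly more expressive'' as the conjunction of an inclusion and a strictness witness, transported through the bijection $\mathcal{C}$ of the earlier lemma. Concretely, the target is:
\begin{itemize}
\item[(i)] for every $\phi\in\mtl$ there is a $\psi\in\mtl$ with $\alf{\phi}{pw} = \{\mathcal{C}^{-1}(\trho)\,|\,\trho\in\alf{\psi}{mx}\}$, and the same separately for finite and for infinite words;
\item[(ii)] there is a $\psi'\in\mtl$ whose $\mx$-language, pulled back through $\mathcal{C}^{-1}$, is not $\alf{\phi}{pw}$ for any $\phi\in\mtl$.
\end{itemize}

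\textbf{Inclusion.} Fix $\phi$ and take $\psi=\pcompile{\phi}$. By \cref{prop:cits}, $\rho\pw\phi$ iff $\mathcal{C}(\rho)\mx\pcompile{\phi}$ for every $\rho\in\Rho$. Since $\mathcal{C}$ is bijective, it is in particular surjective onto the compact words, so every $\trho$ has the form $\mathcal{C}(\rho)$. Hence $\alf{\phi}{pw}=\{\mathcal{C}^{-1}(\trho)\,|\,\trho\in\alf{\pcompile{\phi}}{mx}\}$.

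$\mathcal{C}$ also preserves finiteness: a timed word is finite iff it has finitely many MFSs. The equality therefore splits into its finite and infinite parts. This part is routine.

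\textbf{Strictness.} Work over infinite timed words and take $\psi'=\gamma_3$. By \cref{lem:exist}, the language $\{\mathcal{C}^{-1}(\trho)\,|\,\trho\in\ilf{\gamma_3}{mx}\}$ equals $\ilf{\gamma_2}{itw}$. By \cref{lem:itw}, $\ilf{\gamma_2}{itw}\neq\ilf{\psi}{pw}$ for every $\psi\in\mtl$. So no pointwise formula defines the infinite-word language that $\gamma_3$ defines under $\mx$. Strictness over the combined sets $\alf{}{}$ follows, since restricting any language to infinite words would otherwise yield an equality that contradicts \cref{lem:itw}.

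\textbf{Conclusion and main difficulty.} Combining the two parts gives \cref{thm:smxpw}, in the sense of strict inclusion of the families of languages (modulo $\mathcal{C}$), in the style of \cref{def:incomp}. The step needing the most care is the bookkeeping that the transport through $\mathcal{C}$ respects language equality and the finite/infinite split. This rests on bijectivity of $\mathcal{C}$ and on MFS counting. Everything substantive is already contained in \cref{prop:cits}, \cref{lem:exist} and \cref{lem:itw}.
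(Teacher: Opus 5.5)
\textbf{Gap in the strictness step.} Your decomposition is the paper's own: its proof is just ``direct from \cref{lem:exist} and \cref{prop:cits}''. The inclusion half, via $\pcompile{\phi}$, bijectivity of $\mathcal{C}$ and the finite/infinite split, is fine. The problem is the strictness step ``by \cref{lem:exist}''. That lemma is false for $\gamma_3$ as written, so neither you nor the paper can feed it into \cref{lem:itw}.

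The outer interval was made $(0,1]$ to block witnesses at time $0$, but the inner $\eventually{[0,1)}{b}$ has the same defect. Under $\mx$, evaluated at $(t',j')$, it is witnessed by a $b$ at $(t',j_1)$ with $j_1>j'$. The distance is $0\in[0,1)$, which is exactly the mechanism used for $\nxt{[0,0]}{a}$ in \cref{exmx}. The strict Until of $\itw$, by contrast, needs $t_1>t'$.

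\textbf{Counterexample.} Let $\rho=(a,0)(a,0.5)(b,0.5)(c,1.5)(c,3)(c,4)(c,5)\ldots$. Then $\mathcal{C}(\rho)\mx\gamma_3$, witnessed by $(t',j')=(0.5,0)$, the $b$ at $(0.5,1)$ and the $c$ at $(1.5,0)$. But $\rho\nitw\gamma_2$: the only $c$ in $(1,2]$ is at $1.5$, which forces $t'=0.5$, and no $b$ occurs in $(0.5,1.5)$. Every $\itw$-model of $\gamma_2$ still pulls back into the $\mx$-language of $\gamma_3$. So $\{\mathcal{C}^{-1}(\trho)\mid\trho\in\ilf{\gamma_3}{mx}\}$ strictly contains $\ilf{\gamma_2}{itw}$, and \cref{lem:itw} says nothing about this larger language.

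\textbf{Repair.} Use $\gamma_3'=\eventually{(0,1]}{(\eventually{(0,1)}{b}\wedge\eventually{[1,1]}{c})}$ instead. Both inner witnesses now lie at times $t_1\in(t',t'+1)$ and $t'+1$, strictly after $t'$, so the position components play no role. A direct check gives $\{\mathcal{C}^{-1}(\trho)\mid\trho\in\ilf{\gamma_3'}{mx}\}=\ilf{\gamma_2}{itw}$, taking $j'=0$ when going from $\itw$ to $\mx$. This uses that $\eventually{(0,1)}{b}$ and $\eventually{[0,1)}{b}$ coincide under $\itw$. With $\gamma_3'$ in place of $\gamma_3$, your strictness argument goes through unchanged, including the lift from infinite-word languages to the combined ones.
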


\begin{proof} Direct from~\cref{lem:exist} and~\cref{prop:cits}. 
\end{proof}

\subsection{Mixed semantics vs. interval-based semantics.}\label{sec:mixint}
We now explain the fundamental difference between \itw\, and our \mx\, and why we argue that \mx\, is more suitable for action-based executions. 
Let $\phi = \until{(a\vee\neg\varphi_\Sigma)}{[1,2]}{b}$ and $\rho = (a,0)(a,0.5)(c,0.5)(c,1.5)(b,1.5)$. %
The (action-based) intuition would suggest that $\rho$ should not satisfy $\phi$, two occurrences of $c$ at different dates take place (which correspond to neither an $a$ nor ``no action'') before $b$ does, and indeed $\trho$ does not satisfy $\phi$ under the mixed semantics. %
In contrast, we have $\rho\itw\phi$, since the interval-based semantics is more suitable for signals. %
If $\rho$ was a signal, 
it would naturally make sense for it to satisfy $\phi$ since at every time point, either the proposition $a$ held or no proposition did until $b$ within $[1,2]$. %

Because of this ``permissive'' nature of the $\until{}{I}{}$ operator at its first argument under \itw (and its more ``restrictive'' nature at its second argument, see the second example below), we conjecture that the satisfaction relations \mx\, and \itw\, are incomparable over timed words. %
The proof of this conjecture is already halfway since we know that the $\mx$ is strictly more expressive than $\pw$ (\cref{thm:smxpw}), and given that $\itw$ and $\pw$ are incomparable (\cref{thm:inc}), $\mx$ is either more expressive than or incomparable with $\itw$. %
Actually, we can relatively easily construct an MTL formula defining a language under $\mx$ (after un-compacting with $\mathcal{C}^{-1}$) that cannot be characterised by any MTL formula under $\itw$ or $\pw$ alike, as states the following lemma.  
\begin{restatable}{lemma}{supall}\label{lem:supall}
Let $\gamma_3 = \eventually{(0,1]}{(\eventually{[0,1)}{b} \wedge \eventually{[1,1]}{c})}$ from~\cref{lem:exist}, $\zeta_1 = \ueventually{(b \wedge \nxt{[0,0]}{a})}$ from~\cref{lem:pwf}, $\gamma_4 = \globally{(0,2]}{\neg\varphi_\Sigma}$ and $\gamma_5 = \gamma_3 \vee (\zeta_1 \wedge \gamma_4)$. %
Then for all $\phi\in\mtl$: 
\begin{itemize}
\item{} $\ilf{\phi}{\pww} \neq \{\mathcal{C}^{-1}(\trho)\, \, |\, \, \trho\in\ilf{\gamma_5}{\mxx}\}$
\item{} $\ilf{\phi}{\itww} \neq \{\mathcal{C}^{-1}(\rho)\, \, |\, \, \trho\in\ilf{\gamma_5}{\mxx}\}$
\end{itemize}
\end{restatable}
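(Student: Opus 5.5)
Let $L_5 = \{\mathcal{C}^{-1}(\trho)\,|\,\trho\in\ilf{\gamma_5}{\mxx}\}$. The plan is to separate $L_5$ into two disjoint pieces, each of which witnesses the failure of one of the two semantics, and then transfer the known separations to $L_5$ by intersecting with suitable definable languages.

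The first step is to establish this decomposition. Let $L_3 = \{\mathcal{C}^{-1}(\trho)\,|\,\trho\in\ilf{\gamma_3}{\mxx}\}$, which by \cref{lem:exist} equals $\ilf{\gamma_2}{itw}$. Let $L_1$ be the set of $\rho$ such that $\mathcal{C}(\rho)\mx \zeta_1\wedge\gamma_4$.

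I will check directly that $\gamma_4$ under $\mx$ says that no action occurs at any point $(t',j')>(0,0)$ with $t'\in(0,2]$. Since $\zeta_1$ requires a $b$ immediately followed by an $a$ at the same timestamp and strictly after position $(0,0)$, any such pair must occur at a time $t>2$, or at $t=0$. I would also conjoin mentally the observation that words in $L_3$ have a $c$ in $(1,2]$, while words in $L_1$ have no action in $(0,2]$. Hence $L_3\cap L_1=\emptyset$ and $L_5=L_3\cup L_1$.

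The second step handles $\pw$. Suppose $\ilf{\phi}{pw}=L_5$. Take $\chi=\eventually{(0,2]}{\top}$ under $\pw$; for infinite words, this is satisfied iff some event other than position $0$ has timestamp in $(0,2]$. Then $\ilf{\phi\wedge\chi}{pw}=L_5\cap\ilf{\chi}{pw}=L_3$, because every word in $L_3$ has its $c$ in $(1,2]$ and every word in $L_1$ has no event in $(0,2]$. Events at time $0$ at later positions need care here; I would adjust $\chi$ to $\eventually{(0,2]}{c}$ to avoid the issue. This contradicts \cref{lem:itw}, since $L_3=\ilf{\gamma_2}{itw}$ is not $\pw$-definable.

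The third step handles $\itw$, symmetrically. Suppose $\ilf{\phi}{itw}=L_5$ and intersect with $\ilf{\globally{(0,2]}{\neg\varphi_\Sigma}}{itw}$; the same $\gamma_4$ formula expresses exactly the absence of actions in $(0,2]$ under $\itw$. This intersection yields exactly $L_1$. Now reuse the argument of \cref{lem:ipw}, but with the witnesses $\rho,\rho'$ of \cref{exincinf} shifted so that the critical simultaneous pair $(b,a)$ versus $(a,b)$ sits at a time greater than $2$, for instance at time $3$, with nothing in $(0,2]$. Then $\rho\in L_1$ and $\rho'\notin L_1$, yet $\mathcal{F}(\rho)=\mathcal{F}(\rho')$, so \cref{thm1} forbids any $\itw$ formula from separating them.

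The main obstacle is the precise bookkeeping under the lexicographic order in $\mx$. One issue is the behaviour at time $0$, where extra positions $(0,j)$ exist, both for $\gamma_4$ and for $\zeta_1$. The other is verifying that $\gamma_4$ really yields disjointness from $L_3$ and the intended characterisation under both $\pw$ and $\itw$; I expect this to require a careful case analysis of the strict Until at position $(0,0)$.
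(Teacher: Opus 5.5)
\textbf{Verdict: correct, by a partly different route.} Relative to the paper's earlier lemmas, your proof is correct; it differs from the paper mainly in the pointwise half.

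\textbf{Pointwise half.} The paper first shows that $L_1$ is itself pointwise-definable. It argues that $\zeta_1$ and $\gamma_4$ define the same languages under $\pw$ and $\mx$ modulo $\mathcal{C}$. It then removes $L_1$ using disjointness, obtaining $L_3=\ilf{\phi\wedge\neg(\zeta_1\wedge\gamma_4)}{pw}$.

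You instead intersect $L_5$ with a pointwise-definable set that contains $L_3$ and misses $L_1$. This needs no transfer of $\zeta_1$ or $\gamma_4$ between semantics. It only uses two facts: words of $L_3$ have a $c$ in $(1,2]$, and words of $L_1$ have no event in $(0,2]$. So your route is the more economical one.

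Your worry about later events at time $0$ is unfounded. Under $\pw$ the interval $(0,2]$ already excludes them, so $\eventually{(0,2]}{\top}$ works as it stands.

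\textbf{Interval-based half.} Here the two arguments coincide in substance. The paper directly exhibits two $\mathcal{F}$-equivalent words, with the $b,a$ pair at time $0$, one in $L_5$ and one not. Your preliminary intersection with $\gamma_4$ is a harmless detour. Your shifted witnesses have no event in $(0,2]$, so $\rho'\notin L_3$, and you get $\rho\in L_5$, $\rho'\notin L_5$ immediately.

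\textbf{Caveat affecting your proof and the paper's equally.} Both identify $L_3$ with $\ilf{\gamma_2}{itw}$ via \cref{lem:exist}. However, under $\mx$ the inner $\eventually{[0,1)}{b}$ can be witnessed by a $b$ at the same timestamp and a later index.

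For instance, take $(a,0)(a,\frac12)(b,\frac12)(c,\frac32)(c,\frac52)(c,\frac72)\ldots$.
\begin{itemize}
\item It lies in $L_3$: use the point $(\frac12,0)$, with $b$ at $(\frac12,1)$ and $c$ at $(\frac32,0)$.
\item It is not in $\ilf{\gamma_2}{itw}$: the only admissible $t$ is $\frac12$, and there is no $b$ in $(\frac12,\frac32)$.
\end{itemize}

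Your separation step survives, because every word of $L_3$ still has a $c$ in $(1,2]$. But the final appeal to \cref{lem:itw} needs an extra step. One option is to intersect further with the pointwise-definable set of strictly monotone words, $\neg\nxt{[0,0]}{\top}\wedge\uglobally{\neg\nxt{[0,0]}{\top}}$, on which $L_3$ and $\ilf{\gamma_2}{itw}$ coincide. You would then still have to check that the non-expressiveness argument behind \cref{lem:itw} already works over strictly monotone words.
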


To have a less biased view, let us now look at a trickier example that rather pleads in favour of \itw. %
Let $\rho = (c,0)(c,0.5)(c,1.5)(b,1.5)$ and $\phi = \until{(c\vee\neg\varphi_\Sigma)}{[1,2]}{(b\wedge \neg c)}$. %
Here, we have $\trho\mx \phi$ but $\rho\nitw\phi$. %
This is because $b\wedge \neg c$ is trivially satisfied at $(t,j) = (1.5,1)$ under \mx, whereas it is not at $t = 1.5$ under $\itw$. %
It is however legitimate for a user to mean by this formula ``verify that only occurrences of $c$ are seen until an occurrence of $b$ takes place between times $1$ and $2$  \emph{and} $c$ does not occur simultaneously with such $b$''. %
The mixed semantics fails to capture this, and at first glance the naive way to remedy this issue would be to use an immediate past operator that allows to look \emph{behind} $(t,j) = (1.5,1)$ while maintaining the same timepoint, \ie at $(1.5,0)$. %
In the next subsection, we show that this ``immediate past'' is an overkill as we can instead use a single atomic formula to extend the syntax of MTL in order to strictly embed the full expressive power of \itw\, into \mx. %

\subsection{Extending mixed semantics}\label{sec:ext}
Following the examples given in~\cref{sec:mixint}, and in order to embed \itw, the satisfaction relation \mx\,  needs to be able to force the evaluation of a formula at index $0$ within any $\trho(t)$, looking forward at the whole sequence returned by $\trho(t)$ (instead of complexifying the syntax of MTL, and \mx\,  thereof, with an immediate past operator). %
Let us consider the example $\until{(a\vee\neg\varphi_\Sigma)}{I}{b}$ satisfied by some $\rho,t$ under \itw, and let $t$ correspond to no timestamp of $\rho$ to simplify the explanation. %
To satisfy the same formula at $\trho,t,0$ under \mx,  intuitively we need to check for $(a\vee \eventually{[0,0]}{a})$ only at position $0$ for every $\trho(t') \neq (\vdash), t'>t$ until $(b\vee \eventually{[0,0]}{b})$ becomes true at position $0$ of some $\trho(t'')$ satisfying $t''- t\in I$. %

To do this in a simple and efficient way, the idea is to interpret the mixed semantics over $\Sigma\cup\{\beta\}$, where $\beta$ is an atomic proposition not in $\Sigma$. The syntax of MTL is accordingly extended with the atomic formula $\beta$, we call such extended syntax MTL$_{\beta}$. The semantics of $\beta$ is defined simply to hold only at the beginning of a sequence $\trho(t)$: $$\trho,t,j\mx \beta \text{ iff } j = 0$$ %

This seemingly minuscule addition comes with a major expressiveness gain. %
The mixed semantics now embeds the interval-based one using the compiler $\icompile{-} : \mathit{MTL\rightarrow MTL_{\beta}}$ defined inductively as follows: 
\begin{itemize}
\item{} $\icompile{a} = a \vee \eventually{[0,0]}{a}$, 
\item{} $\icompile{\neg\phi} = \neg \icompile{\phi}$,
\item{} $\icompile{\phi\wedge\psi} = \icompile{\phi}\wedge\icompile{\psi}$,
\item{} $\icompile{\until{\phi}{I}{\psi}} = 
{\until{(\beta\Rightarrow\icompile{\phi})}{I}{(\beta\wedge\icompile{\psi})}}$. 
\end{itemize}

The following proposition states that \mx\, over MTL$_{\beta}$ embeds \itw\, over MTL. 

\begin{restatable}{proposition}{compileitw}\label{prop:itw}
For all $\phi$ in MTL and $\rho\in\Rho$: $\rho\itw\phi \Leftrightarrow \mathcal{C}(\rho)\mx \icompile{\phi}$.
\end{restatable}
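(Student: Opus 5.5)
We would prove a stronger, pointwise-in-time statement by structural induction on $\phi$: for every $\phi\in\mtl$, every $\rho\in\Rho$ with $\trho=\mathcal{C}(\rho)$, and every $t\in\rplus$ (with $t\leq\mu(\rho)$ if $\rho$ is finite),
$$\rho,t\itw\phi \;\Leftrightarrow\; \trho,t,0\mx\icompile{\phi}.$$
The proposition is then the instance $t=0$, since $\trho\mx\psi$ is defined as $\trho,0,0\mx\psi$. The important design choice is to state the invariant \emph{only at index $j=0$}. The compiler is built so that $\beta$ filters out every evaluation at $j>0$, so no information about $\icompile{\phi}$ at positions $j>0$ should ever be needed.

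We first record two facts that follow directly from the definitions of $\mathcal{C}$ and $\trho(t)$.
\begin{itemize}
\item $\trho(t)$ is a non-empty sequence. It lists exactly the letters of $\rho$ whose timestamp is $t$, in their original order, if such letters exist. Otherwise it is $(\vdash)$.
\item The set of evaluation points $(t',j')$ with $j'<|\trho(t')|$ satisfies $\mu(\trho)=\mu(\rho)$. Hence the time ranges quantified over in $\mx$ and $\itw$ coincide.
\end{itemize}

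\emph{Atomic case.} Suppose $\trho,t,0\mx a\vee\eventually{[0,0]}{a}$. The first disjunct says $\trho(t)(0)=a$. The second disjunct, being a strict until with interval $[0,0]$, asks for some $(t',j')>(t,0)$ with $t'=t$ (so $j'>0$) and $\trho(t)(j')=a$. Its left argument is $\top$, so the intermediate points impose nothing. Together the two disjuncts say exactly that $a$ occurs somewhere in $\trho(t)$. By the first fact, this holds iff some $i$ has $\sigma_i=a$ and $\tau_i=t$, which is $\rho,t\itw a$. In a hole, $\trho(t)=(\vdash)$ has no index $j'>0$, so both sides are false.

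\emph{Boolean cases.} Negation and conjunction are evaluated at the same point $(t,0)$, so they follow immediately from the induction hypothesis.

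\emph{Until case, the main step.} Suppose $\trho,t,0\mx\until{(\beta\Rightarrow\icompile{\phi})}{I}{(\beta\wedge\icompile{\psi})}$. We unfold it using the lexicographic order.
\begin{itemize}
\item The witness $(t',j')$ must satisfy $\beta$, so $j'=0$. Since $(t',0)>(t,0)$, this forces $t'>t$ with $t'-t\in I$, and by the induction hypothesis $\icompile{\psi}$ at $(t',0)$ is equivalent to $\rho,t'\itw\psi$.
\item The intermediate points $(t'',j'')$ with $(t,0)<(t'',j'')<(t',0)$ are of three kinds:
\begin{itemize}
\item points $(t,j'')$ with $j''>0$;
\item points $(t'',j'')$ with $t<t''<t'$ and $j''>0$;
\item points $(t'',0)$ with $t<t''<t'$.
\end{itemize}
At the first two kinds $\beta$ is false, so $\beta\Rightarrow\icompile{\phi}$ holds vacuously. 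At the third kind the condition reduces to $\icompile{\phi}$ at $(t'',0)$, which by the induction hypothesis is $\rho,t''\itw\phi$.
\end{itemize}
The until formula therefore holds iff there is $t'>t$ with $t'-t\in I$, $\rho,t'\itw\psi$, and $\rho,t''\itw\phi$ for all $t<t''<t'$. This is exactly the $\itw$ clause, and the converse direction reads the same equivalences backwards. We expect this bookkeeping to be the main obstacle: one must check carefully that the lexicographic intermediate set decomposes as described, and that the witness cannot sit at $t'=t$ with $j'>0$. Both facts rest entirely on $\beta$ holding only at $j=0$, and the proof stays at the level of this case analysis.
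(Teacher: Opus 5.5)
Your proposal is correct and follows the paper's proof: you strengthen the claim to every time point $t$ evaluated at index $j=0$, then argue by structural induction. In the atomic case $a\vee\eventually{[0,0]}{a}$ collapses to ``$a$ occurs somewhere in $\trho(t)$'', and in the until case $\beta$ forces the witness to index $0$ (hence $t'>t$) and makes the left argument vacuous at all indices $j''>0$. Two things you add are only extra care: you spell out the three kinds of intermediate points in the lexicographic interval, and you quantify over $\trho=\mathcal{C}(\rho)$, which avoids relying on $\mathcal{C}^{-1}$ being defined on all of $\tRho$.
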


Strict expressiveness superiority follows directly from~\cref{prop:itw} and~\cref{thm:smxpw}. %

\begin{remark} 
The embedding of \itw\, into $\mx$ (\cref{prop:itw}) looks asymmetrical since we allow an extended syntax for \mx\, but not for \itw. %
However, such asymmetry is a consequence of simplifying the presentation rather than anything else. Adding a straightforward interpretation of $\beta$ under \itw, \ie $\rho,t\itw\beta \text{ iff } t = 0$ will not change anything: since we have $\rho\itw\phi$ iff $\rho,0\itw\phi$, we get $\rho\itw\beta$ iff $\rho,0\itw\beta$, and therefore $\alf{\beta}{itw} = \Rho = \alf{\top}{itw}$. %
In other words, the expressive power of \itw\,  over MTL and MTL$_{\beta}$ would be identical. %
\end{remark} 

We conclude with showing that the set of languages characterised by $\mx$ over MTL$_{\beta}$ is strictly larger than that characterised by the union of the two sets of languages characterised by $\itw$ and $\pw$ over MTL.

\begin{theorem}\label{thm:last}  $$\alf{\mtl}{\pww} \cup \alf{\mtl}{\itww} \subsetneq \{\{\mathcal{C}^{-1}(\tilde{\rho})\, \, | \, \, \tilde{\rho}\in\alf{\phi}{\mxx}\}\, \, |\, \, \phi\in\mtl_{\beta}\}$$ 
\end{theorem}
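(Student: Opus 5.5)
To prove Theorem~\ref{thm:last}, I establish the inclusion and its strictness separately. For brevity, write $\mathcal{L}_\beta$ for the right-hand side, \ie the set of languages $\{\mathcal{C}^{-1}(\trho)\, |\, \trho\in\alf{\phi}{\mxx}\}$ with $\phi\in\mtl_\beta$.

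\textbf{Inclusion.} First I make precise how the left-hand side is read. By \cref{def:incomp}, $\alf{\mtl}{sem}$ is the union of the sets of finite and of infinite languages. Each such language is $\lf{\phi}{sem}$ or $\ilf{\phi}{sem}$ for some $\phi$. I therefore need to show that, for each $\phi\in\mtl$, the sets $\lf{\phi}{\pww}$, $\ilf{\phi}{\pww}$, $\lf{\phi}{\itww}$ and $\ilf{\phi}{\itww}$ all lie in $\mathcal{L}_\beta$.

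For the pointwise side I take $\pcompile{\phi}$. Since $\mtl\subseteq\mtl_\beta$, this is a formula of $\mtl_\beta$. By \cref{prop:cits} and the bijectivity of $\mathcal{C}$,
\[\alf{\phi}{\pww}=\{\mathcal{C}^{-1}(\trho)\,|\,\trho\in\alf{\pcompile{\phi}}{\mxx}\}.\]
For the interval-based side I use $\icompile{\phi}$ and \cref{prop:itw} in exactly the same way.

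To isolate the finite (resp.\ infinite) part, note that $\mathcal{C}$ preserves finiteness. I then conjoin the compiled formula with a formula that holds at $(0,0)$ exactly on finite (resp.\ infinite) compact words. The natural candidate is $\ueventually{\globally{(0,\infty)}{\bot}}$, where $\bot\equiv\neg\top$; it is satisfiable only when the word ends. If this candidate turns out to be fiddly because of the lexicographic Until, I fall back on reading the right-hand side analogously to \cref{def:incomp}, \ie splitting it into finite and infinite languages. Under either reading, the union of the two images is contained in $\mathcal{L}_\beta$.

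\textbf{Strictness.} I use \cref{lem:supall}. The formula $\gamma_5\in\mtl\subseteq\mtl_\beta$ defines, over infinite words, a language $L$ that equals no $\ilf{\phi}{\pww}$ and no $\ilf{\phi}{\itww}$. It remains to check that $L$ also differs from every finite-word language in the left-hand side. This is immediate whenever $L$ is non-empty: $L$ consists of infinite words only, so it cannot equal any language of finite words. Non-emptiness I witness directly. Take $\rho$ from \cref{exincinf}; it satisfies $\zeta_1$ under $\pw$, and hence, via \cref{prop:cits}-style reasoning, $\mathcal{C}(\rho)$ satisfies $\zeta_1$ under $\mx$ as well. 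Modify $\rho$ so that no action occurs in $(0,2]$, or alternatively pick a word witnessing $\gamma_3$, using \cref{lem:exist}. Therefore $L\in\mathcal{L}_\beta$ but $L$ does not belong to the left-hand side.

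\textbf{Main obstacle.} I expect the main obstacle to be the bookkeeping between finite and infinite languages, namely defining finiteness under $\mx$ with the lexicographic order. Strictness itself is a direct consequence of \cref{lem:supall}.
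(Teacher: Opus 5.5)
Your proposal is correct and follows the paper's own (one-line) proof: inclusion via the compilers $\pcompile{-}$ and $\icompile{-}$ (\cref{prop:cits}, \cref{prop:itw}) and strictness via $\gamma_5$ (\cref{lem:supall}); you also make explicit the finite/infinite bookkeeping that the paper leaves implicit, and your separate non-emptiness check is redundant because \cref{lem:supall} with $\phi=\neg\top$ already gives $L\neq\emptyset$. One small repair: your candidate $\ueventually{\globally{(0,\infty)}{\bot}}$ fails at $(0,0)$ on finite words such as $(a,0)$, where the strict Until finds no pair strictly above $(0,0)$, so use $\globally{(0,\infty)}{\bot}\vee\ueventually{\globally{(0,\infty)}{\bot}}$ instead, which holds at $(0,0)$ exactly on the finite compact words.
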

\begin{proof} Direct from~\cref{prop:cits}, \cref{prop:itw} and~\cref{lem:supall}. %
\end{proof}
Note that the witness formula for the strictness of the inclusion in~\cref{thm:last} is $\gamma_5$ in~\cref{lem:supall}, which does not use the atomic formula $\beta$. This is an interesting observation: $\mx$ over MTL already characterises at least one language that can be expressed neither in $\itw$ nor in $\pw$.

%% file: 8.conclusion.tex
\section{Conclusion}
\label{sec:conclusion}

In this paper, we revisited an important aspect of the extensively studied expressiveness of MTL, namely that of its pointwise semantics as opposed to the interval-based one. %
We have demonstrated that the two semantics are incomparable over the general models of action-based timed executions, giving thereby a clear answer, by the negative, to the question on whether reasoning over arbitrary time points \emph{instead of} ordered positions of timed events increases the expressive power of MTL. %
To get the better of both worlds over action-based executions, we proposed a new mixed semantics that embeds the interval-based and the pointwise ones, therefore conjugating their full expressive powers. %

For future work, the results presented within this paper call for further research on at least two fronts. 
On the decidability/complexity front, model checking is already shown to be undecidable over infinite timed words for both the interval-based and the pointwise semantics~\cite{ouaknine2006metric,alur1991logics} and over finite timed words for the interval-based one~\cite{alur1991logics}. %
An interesting future direction would be to rather investigate the complexity of monitoring under the mixed semantics, already shown to be tractable for the pointwise semantics \eeg in~\cite{ho2014online}. %
In particular, offline monitoring is a promising venue to reason on executions of distributed systems where the order over simultaneous events is a key aspect~\cite{liebig1999event,fidge2002logical,lamport2019time}. %
On the expressiveness front, it would be interesting to consider extending our mixed semantics to reason on mixed signal-event executions while maintaining reasonable complexity. %
This would require to define a mixed structure, \eeg $((a,b),[0,0])(\{p,q\},(0,1))((c),[1,1])$ where $a$, $b$ and $c$ are in the actions set and $p$,$q$ in the atomic propositions one, then further extend our mixed semantics to both sets. %
In the same vein, we aim to investigate relating the mixed semantics to the different semantics of more expressive logics such as TPTL~\cite{alur1994really} and \logic~\cite{DBLP:conf/ecoop/Amara0FF25}. %

%% file: incomparability.tex
\section{Proofs about incomparability}\label{app:inc}

\odd*

\begin{proof} Let $\kappa = (\eta,\iota)$ be an element of $\Kappa_a$ such that $|\kappa|\in\naturals$. %
Suppose that $|\kappa|\bmod 2 = 0$. %
This means that $\iota_{|\kappa| - 1}$ is not punctual which contradicts~\cref{def:ackk}. 
\end{proof}

\cbij*

\begin{proof} 

\emph{$\mathcal{C}$ is injective.} To prove injection, suppose that two different timed words $\rho = (\sigma,\tau)$ and $\rho' = (\sigma',\tau')$ have the same image $\trho = (\tsig,\ttau)$ by $\mathcal{C}$. %
If $|\rho| \neq |\rho'|$, then we have $|\tilde{\rho}| = |s\tau| = |s\tau'|$ leading to a contradiction. %
Suppose now $|\rho| = |\rho'|$. %
Since by~\cref{def:compact} we have $|\tilde{\rho}| = |s\tau| = |s\tau'|$, we get from the same definition necessarily $\forall 0\leq  i < |\tilde{\rho}|: \tilde{\tau}_i = \tau_{s\tau_i} = \tau_{s\tau'_i}$, which means that $\tau$ and $\tau'$ contain exactly the same sequences of MFSs and therefore $\tau = \tau'$. %
Now, since $\rho$ and $\rho'$ are different, there exists $i < |\rho|$ such that $\rho$ and $\rho'$ differ, and since $\tau=\tau'$, the only remaining possibility is $\sigma_i \neq \sigma'_i$. %
Since by~\cref{def:compact} we have $0 \leq i \leq s\tau_0$ or $\exists j,k: s\tau_j<i\leq s\tau_k$, this leads to $\tsig_0$ or $\tsig_k$ equalling a sequence differing at one element. Contradiction. %

\emph{$\mathcal{C}$ is surjective.} We prove that for every $\trho = (\tsig,\ttau) \in\tRho$, $\mathcal{C}^{-1}[\trho]$ is not empty. %
Whether $\trho$ is finite or infinite, there exists by definition a sequence over $\naturals$ with $|s\tau| = |\trho|$ and   $\tau_{s\tau_i} = \tilde{\tau}_i$ for every $0\leq i < |s\tau|$. %
Accordingly, there exists a time sequence $\tau$ corresponding to $s\tau$ and there exists $\rho = (\sigma,\tau)$ satisfying $\sigma_0\ldots s\tau_0 = \tsig_0$ and $\sigma_{s\tau_{i-1} + 1}\ldots\sigma_{s\tau_i} = \tsig_i$ for all  $0 < i < |s\tau|$. %

\end{proof}

\rsurj*

\begin{proof}
To prove that $\mathcal{R}$ is surjective, it suffices to prove that the inverse relation $\mathcal{R}^{-1}$ is left total. %
Let $\kappa$ be an arbitrary finite element of $\Kappa_a$. %
We want to prove that there exists some $\tilde{\rho}$ (not unique in this case) satisfying the conditions given by $\mathcal{R}$. %
If $\kappa$ is finite, any such $\tilde{\rho}$ must be of length $\frac{|\kappa| + 1}{2}$, which is guaranteed to be a natural since $|\kappa|$ is odd (\cref{lemodd}). %
We can construct an instance of such $\tilde{\rho}$ from $\kappa$ as follows: (1) $\tilde{\sigma}_{i}$ as a sequence in \eeg alphabetical order from $\eta_{2i}$, (2) $\tilde{\tau}_i = (\iota^l_{2i})$ for every $i\in\{0\ldots\frac{|\kappa| + 1}{2} - 1\}$. %
If $\kappa$ is infinite, then we can relate $\kappa$ to \eeg an infinite compact timed words $\trho$ satisfying conditions (1) and (2) above for every $i\in\naturals$. 
  
\end{proof}

\eqitw*

\begin{proof}

By structural induction on $\phi$. %

\begin{itemize}
\item{} Case $a$: ($\Rightarrow$) By definition, $\rho, t\models_{itw} a$ iff $\exists i : \sigma_i = a$ and $\tau_i = t$. %
By the function $\mathcal{C}$, we have $\exists j: a\in \tilde{\sigma}_j$ and $\tilde{\tau}_j = t$. %
By the function $\mathcal{R}$, we have $a\in \eta_{2j}$ and $t\in \iota_{2j} = [t,t]$, which implies that $a\in(\mathcal{F}(\rho))(t)$ and therefore $\mathcal{F}(\rho),t\models_{its} a$. \\%
($\Leftarrow$) By definition, $\mathcal{F}(\rho),t\models_{its} a$ iff $a\in\mathcal{F}(\rho)(t)$, which implies the existence of a position $i$ such that $a\in\eta_{i}$ and $t\in\iota_i$. %
By the definition of $\Kappa_a$ (\cref{def:ackk}), such $i$ is  even with $\iota_i$ punctual, and therefore $\iota_i = [t,t]$. %
By the relation $\mathcal{R}^{-1}$, we have $a\in\tsig_{\frac{i}{2}}$ and $\ttau_{\frac{i}{2}} = t$ and finally by the function $\mathcal{C}^{-1}$ there exists $j$ such that $\tau_j = t$ and $\sigma_j = a$. %

\item{} Case $\neg \phi$: direct from the induction hypothesis $\rho, t\itw \phi \Leftrightarrow \mathcal{F}(\rho),t\its\phi$. %

\item{} Case $\phi\wedge\psi$. %
By definition of \itw:
$$\rho, t\itw \phi\wedge\psi \Leftrightarrow \rho,t\itw\phi \text{ and } \rho,t\itw\psi$$
By the induction hypotheses $\rho, t\itw \phi \Leftrightarrow \mathcal{F}(\rho),t\its\phi$ and $\rho, t\itw \psi \Leftrightarrow \mathcal{F}(\rho),t\its\psi$ we have: 
$$\rho, t\itw \phi\wedge\psi \Leftrightarrow \mathcal{F}(\rho),t\its\phi \text{ and } \mathcal{F}(\rho),t\its\psi$$ 
and finally by definition of \its:
$$\rho, t\itw \phi\wedge\psi \Leftrightarrow \mathcal{F}(\rho),t\its\phi \wedge\psi$$

\item{} Case $\until{\phi}{I}{\psi}$.

By definition of \itw:
$$
  \begin{array}{lll}
        \rho,t\itw \until{\phi}{I}{\psi} & \Leftrightarrow & \exists t' > t  : t'-t\in I \text{ and } \rho,t'\itw \psi \text{ and } \forall t < t'' < t' : \rho,t''\itw \phi%
    \end{array}
$$

By the induction hypotheses $\rho, t''\itw \phi \Leftrightarrow \mathcal{F}(\rho),t''\its\phi$ and $\rho, t'\itw \psi \Leftrightarrow \mathcal{F}(\rho),t'\its\psi$ we get, for any such $t'$ and $t''$ above: 

$$\mathcal{F}(\rho),t'\its\psi \text{ and } \mathcal{F}(\rho),t''\its\phi,t''$$

and finally by the definition of $\its$ we get $$\rho,t\itw \until{\phi}{I}{\psi}\Leftrightarrow\mathcal{F}(\rho),t\its \until{\phi}{I}{\psi}$$
\end{itemize}

\end{proof}

\propinc*

\begin{proof} The proof is straightforward, we give it only for the case $\lf{\mtl}{\pww} \# \lf{\mtl}{\itww}$. \\
($\Rightarrow$) Suppose $\lf{\mtl}{\pww} \# \lf{\mtl}{\itww}$. We then have by definition $A\nsubseteq B$ and $B\nsubseteq A$ where $A = \{\lf{\phi}{\pww}\, \, |\, \,  \phi\in \mtl\}$ and $B = \{\lf{\phi}{\itww}\, \, |\, \,  \phi\in \mtl\}$. %
Neither $A$ nor $B$ is empty (since there is at least one formula in MTL), this implies that they differ at at least two elements (which, we recall, are sets), and therefore exists $\phi_1,\phi_2$ in MTL such that the element $\lf{\phi_1}{\pww}$ of $A$ does not belong to $B$ and $\lf{\phi_2}{\itww}$ of $B$ does not belong to $A$, which is the same as $\exists\phi_1,\phi_2\in\mtl\, \, \forall\psi\in\mtl$: $\lf{\phi_1}{\pww} \neq \lf{\psi}{\itww}$ and  $\lf{\phi_2}{\itww} \neq \lf{\psi}{\pww}$. \\%
($\Leftarrow$) Analogous. 
\end{proof} 

%% file: mixed.tex
\section{Proofs about mixed semantics}\label{app:mx}

\compileits*

\begin{proof} From~\cref{aux1}, and~\cref{aux2} with $t=0$ and $j=0$. %
\end{proof}

Let $\trho$ be an arbitrary compact timed word and $\rho = (\sigma,\tau) = \mathcal{C}^{-1}(\trho)$. 
We first relate every couple of indices $k$ of $\tsig$ and $0\leq j < |\tsig_k|$ with the corresponding index $i$ of $\sigma$. %
This can be easily done using the definition of $\mathcal{C}$ through the bijective function $\mathcal{m}_{\trho}$ defined as: 

$$
\mathcal{m}_{\rho}(k,j) = \begin{cases}
			j & \text{if $k = 0$}\\
            s\tau_{k-1} + j + 1 & \text{otherwise}
		 \end{cases}
$$ 

That is, $\tsig_k(j) = \sigma_{\mathcal{m}_{\trho}(k,j)}$ (we also have $\ttau_k = \tau_{\mathcal{m}_{\trho}(k,j)}$ for all $0\leq j < |\tsig_k|$). Similarly, for all $i\in \naturals$, $i<|\rho|$, we have $\sigma_i = \tsig_{\mathcal{m}^{-1}_{\trho}(i)|_{1}}(\mathcal{m}^{-1}_{\trho}(i)|_{2})$, with  $\mathcal{m}^{-1}_{\trho}(i)|_{i'}$ denoting the $i'^{\mathit{th}}$ component of $\mathcal{m}^{-1}_{\trho}(i)$. %
(The inverse bijective function $\mathcal{m}^{-1}_{\trho}$ can be easily defined as $\mathcal{m}^{-1}_{\trho}(i) = (0,i)$ if $i\leq s\tau_0$, otherwise $\mathcal{m}^{-1}_{\trho}(i) = (k,j)$ such that $k$ is the largest index satisfying $s\tau_{k-1} < i$ and $j = i - s\tau_{k-1} - 1$). 

Second, we define the right-total partial function $\mathit{ind}_{\trho}$ to relate time instants to indices in $\trho$, \ie $\mathit{ind}_{\trho}(t)$ relates $t\in \rplus$ ($t\leq \mu(\trho)$ if $\trho$ is finite) to $i$ if $\exists i : \ttau_i = \trho(t)$ (such $i$ is then unique), and $\mathit{ind_{\trho}(t)}$ is undefined otherwise. %
Accordingly, the inverse relation $\mathit{ind}^{-1}_{\trho}$ is an injective function. %
We use the standard notation $\mathit{ind_{\trho}(t)\downarrow}$ (\resp $\mathit{ind_{\trho}(t)\uparrow}$) to denote that $\mathit{ind_{\trho}}$ is defined (\resp undefined) at $t$. %

Next, we prove the following lemma. It states that the compilation $\pcompile{-}$ forces the satisfaction of a formula at points where actions occur. %
In other words, if a triple $\trho,t,j$ satisfies a compiled formula through $\pcompile{-}$ then $t$ corresponds necessarily to a timestamp in $\rho$. %

\begin{lemma}\label{aux1} For all $\phi$ in MTL, $\trho\in\tRho$, $t\in \rplus$ ($t\leq \mu(\trho)$ if $\trho$ is finite), and $0\leq j < |\trho(t)|$: $$\trho,t,j\mx  \pcompile{\phi} \Rightarrow  \mathit{ind_{\trho}(t)\downarrow}$$
\end{lemma}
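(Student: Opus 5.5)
The plan is to first establish the auxiliary fact that $\trho,t,j\mx\varphi_\Sigma$ implies $\mathit{ind}_{\trho}(t)\downarrow$, and then show that every compiled formula $\pcompile{\phi}$ forces $\varphi_\Sigma$. Together these give the lemma.

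For the auxiliary fact, $\varphi_\Sigma=\bigvee_{a\in\Sigma}a$ holds at $(t,j)$ iff $\trho,t,j\mx a$ for some $a\in\Sigma$, that is, $\trho(t)(j)=a$. If $\mathit{ind}_{\trho}(t)\uparrow$, no $\ttau_i$ equals $t$, so by definition $\trho(t)=(\vdash)$. Then $j=0$ is forced by $0\leq j<|\trho(t)|=1$, and $\trho(t)(0)=\vdash\notin\Sigma$. Hence no disjunct of $\varphi_\Sigma$ holds. By contraposition, $\trho,t,j\mx\varphi_\Sigma$ implies that $t=\ttau_i$ for some (unique) $i$, i.e.\ $\mathit{ind}_{\trho}(t)\downarrow$.

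Next I show, by case analysis on the outermost constructor of $\phi$ (no induction hypothesis is needed), that $\trho,t,j\mx\pcompile{\phi}$ implies $\trho,t,j\mx\varphi_\Sigma$.
\begin{itemize}
\item Case $a$: $\pcompile{a}=a$, and $a$ is one of the disjuncts of $\varphi_\Sigma$.
\item Case $\neg\phi$: $\pcompile{\neg\phi}$ is the conjunction $\varphi_\Sigma\wedge\neg\pcompile{\phi}$, whose first conjunct is $\varphi_\Sigma$.
\item Case $\phi\wedge\psi$: $\pcompile{\phi\wedge\psi}$ is also a conjunction whose first conjunct is $\varphi_\Sigma$.
\item Case $\until{\phi}{I}{\psi}$: $\pcompile{\until{\phi}{I}{\psi}}$ is again a conjunction whose first conjunct is $\varphi_\Sigma$.
\end{itemize}
In the last three cases, the semantics of $\wedge$ under $\mx$ yields $\trho,t,j\mx\varphi_\Sigma$ directly.

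Combining the two steps proves the lemma. There is no real obstacle. The only points needing care are the definition of $\trho(t)$ when $t$ is not a timestamp, where it is the singleton sequence $(\vdash)$ and hence $j=0$, and the fact that $\vdash\notin\Sigma$, so that no atomic $a\in\Sigma$ can hold there. It is also worth noting that $\varphi_\Sigma$, being sugar built from $\neg$ and $\wedge$, is interpreted under $\mx$ with its intended meaning.
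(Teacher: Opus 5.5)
Your proposal is correct and matches the paper's proof. The paper also argues that $\pcompile{\phi}$ is either an atom $a$ or a conjunction with $\varphi_\Sigma$ as a conjunct, and that neither can hold where $\trho(t)=(\vdash)$. You additionally make explicit that the nominal structural induction never uses its hypothesis, and you spell out why $\vdash\notin\Sigma$ rules out every atom at non-timestamp points.
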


\begin{proof}
By structural induction on $\phi$. 

\begin{itemize}
\item{} Case $a$: $\trho,t,j\mx  \pcompile{a}$  is equivalent to $\trho,t,j\mx  a$ which implies $\trho,t,j\nmx  \neg \varphi_\Sigma$ which in turn implies $\mathit{ind_{\trho}(t)\downarrow}$. 
\item{} The remaining cases are straightforward since $\pcompile{\phi}$ for each of them is a conjunction containing the clause $\varphi_\Sigma$.  
\end{itemize}
\end{proof}

It is perhaps easier to grasp the utility of~\cref{aux1} through its contraposition, stating that it is impossible to satisfy the compilation of any $\phi$ in MTL at time $t$ not corresponding to a timestamp in $\rho$:  $$\mathit{ind_{\trho}(t)\uparrow} \Rightarrow \trho,t,j\nmx  \pcompile{\phi}$$
which is equally easy to prove as~\cref{aux1}. %
\Cref{aux1} emphasises an important property of $\pcompile{\phi}$ that constrains the mixed semantics to only ``meaningful'' formulae in the pointwise context: for any formula $\phi$, the set of compact timed words satisfying $\pcompile{\phi}$ starting at a timepoint not corresponding to a position in $\rho$ is empty: 

$$\forall \phi\in \mtl,\, t\in\rplus, t\leq \mu(\trho) \text{ if } |\trho| < \omega: \mathit{ind_{\trho}(t)\uparrow} \Rightarrow \alf{\pcompile{\phi}}{mx} = \emptyset$$ 

It suffices now to prove the following lemma: 

\begin{lemma}\label{aux2}
For every $\phi\in\mtl$, $\trho\in\tRho$, $t\in \rplus$ such that $t\leq \mu(\trho)$ if $\trho$ is finite and $\mathit{ind}_{\trho}(t)\downarrow$, and $0\leq j < |\trho(t)|$: 
$$\trho,t,j\mx \pcompile{\phi} \Leftrightarrow \rho,\mathcal{m}_{\trho}(\mathit{ind_{\trho}(t)},j) \pw \phi$$
\end{lemma}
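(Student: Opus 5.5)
We argue by structural induction on $\phi$, for all admissible $t$ (with $\mathit{ind}_{\trho}(t)\downarrow$) and $j$ at once. Throughout, write $k=\mathit{ind}_{\trho}(t)$ and $i=\mathcal{m}_{\trho}(k,j)$. Since $t$ is a timestamp, $\trho(t)=\tsig_k$, which contains no $\vdash$. Hence $\trho,t,j\mx\varphi_\Sigma$ always holds. This makes the extra $\varphi_\Sigma$ conjuncts introduced by $\pcompile{-}$ harmless at such points.

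\textbf{Base and Boolean cases.} For an atom, $\trho,t,j\mx a$ iff $\tsig_k(j)=a$ iff $\sigma_i=a$, using $\tsig_k(j)=\sigma_{\mathcal{m}_{\trho}(k,j)}$. For negation, $\pcompile{\neg\phi}=\varphi_\Sigma\wedge\neg\pcompile{\phi}$. Since $\varphi_\Sigma$ holds at $(t,j)$, this formula holds iff $\pcompile{\phi}$ fails at $(t,j)$, which by the induction hypothesis holds iff $\rho,i\npw\phi$. Conjunction is handled the same way.

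\textbf{Until case.} This is the heart of the proof. Since $\varphi_\Sigma$ holds at $(t,j)$, the outer conjunct is discharged, and it remains to relate the lexicographic order on pairs $(t',j')$ to the order on positions of $\rho$. The key auxiliary fact is that $\mathcal{m}_{\trho}$ is an order isomorphism. Precisely, restricted to pairs $(t',j')$ with $\mathit{ind}_{\trho}(t')\downarrow$, the map $(t',j')\mapsto\mathcal{m}_{\trho}(\mathit{ind}_{\trho}(t'),j')$ is a bijection onto the positions of $\rho$ that preserves the lexicographic order. In addition, $\tau$ at the image position equals $t'$. Both properties follow from the definition of $\mathcal{C}$: MFSs are consecutive, and their timestamps $\ttau$ are strictly increasing. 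Each MFS occupies a block of consecutive indices, and inside a block the index increases with $j'$.

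\textbf{($\Rightarrow$)} Take a witness $(t',j')>(t,j)$ with $t'-t\in I$ and $\pcompile{\psi}\wedge\varphi_\Sigma$ true at $(t',j')$. By \cref{aux1}, $\mathit{ind}_{\trho}(t')\downarrow$. Set $i'=\mathcal{m}_{\trho}(\mathit{ind}_{\trho}(t'),j')$. Then $i'>i$, $\tau_{i'}-\tau_i=t'-t\in I$, and the induction hypothesis gives $\rho,i'\pw\psi$. Now consider any position $i<k'<i'$. It is the image of some pair strictly between $(t,j)$ and $(t',j')$, and that pair lies at a timestamp, so $\neg\varphi_\Sigma$ fails there. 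Therefore $\pcompile{\phi}$ must hold at that pair, and the induction hypothesis gives $\rho,k'\pw\phi$.

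\textbf{($\Leftarrow$)} Map the pointwise witness $i'$ back through $\mathcal{m}^{-1}_{\trho}$. Consider any intermediate pair $(t'',j'')$. Either $t''$ is not a timestamp, in which case $j''=0$, $\trho(t'')=(\vdash)$, and $\neg\varphi_\Sigma$ holds. Or $t''$ is a timestamp, in which case its image lies strictly between $i$ and $i'$, and $\pcompile{\phi}$ holds there by the induction hypothesis.

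The main obstacle is making the order-isomorphism claim rigorous. It requires carefully unfolding $s\tau$ and $\mathcal{m}_{\trho}$, including the edge cases $k=0$ and the last MFS of a finite word. There is one further subtlety: the strict-Until quantification over pairs $(t'',j'')$ ranges only over $j''<|\trho(t'')|$, so that hole points contribute exactly one pair each. Once the order-isomorphism lemma is established, the induction itself is routine.
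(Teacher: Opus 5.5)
Your proof is correct and follows the paper's own argument: structural induction, using that $\varphi_\Sigma$ holds at every pair on a timestamp and $\neg\varphi_\Sigma$ holds at hole points, and transporting the Until witnesses and intermediate points through the bijection $\mathcal{m}_{\trho}$. The order-isomorphism property you isolate is used only implicitly in the paper's final step of the Until case, so stating it explicitly makes the argument more rigorous. That property relies on $\ttau$ being strictly increasing, which holds for images of $\mathcal{C}$.
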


\begin{proof}

Let $\trho$ be an arbitrary compact timed word and $\rho = \mathcal{C}^{-1}(\trho)$. In the following, we drop accordingly the subscripts $_{\trho}$ when referring to the functions $\mathcal{m}_{\trho}$ and $\mathit{ind}_{\trho}$. %

The proof is by structural induction on $\phi$. 

\begin{itemize}
\item{} Case $a$: By the definition of $\mx$ and $\pcompile{a}$, we have 
$$\trho,t,j\mx\pcompile{a} \Leftrightarrow a = \trho(t)(j)$$
And given the bijective function $\mathcal{m}$:
$$\trho,t,j\mx\pcompile{a} \Leftrightarrow a = \sigma_{\mathcal{m}(\mathit{ind}(t),j)}$$
And finally by the definition of \pw: 
$$\trho,t,j\mx\pcompile{a} \Leftrightarrow \rho, \mathcal{m}(\mathit{ind}(t),j)\pw a$$

\item{} Case $\neg\phi$: 
By the definition of $\mx$ and $\pcompile{\neg\phi}$, we have: 
$$\trho,t,j\mx\pcompile{\neg\phi} \Leftrightarrow \trho,t,j\mx\varphi_\Sigma \text{ and } \trho,t,j\mx \neg \pcompile{\phi}$$ 
By the induction hypothesis and the fact that $\trho,t,j\mx\varphi_\Sigma$ is equivalent to $\mathit{ind}(t)\downarrow$ which is always true by the premise of~\cref{aux2} we get: 
$$\trho,t,j\mx\pcompile{\neg\phi} \Leftrightarrow \rho, \mathcal{m}(\mathit{ind}(t),j)\pw \neg\phi$$

\item{} Case $\phi\wedge\psi$: analogous to the previous case. 
\item{} Case $\until{\phi}{I}{\psi}$: By the definition of $\mx$ and $\pcompile{\until{\phi}{I}{\psi}}$:  

$$
  \begin{array}{lll}
        \trho,t,j\mx\pcompile{\until{\phi}{I}{\psi}} & \Leftrightarrow & \trho,t,j\mx \varphi_\Sigma \wedge\, \,  \exists (t,j) < (t_1,j_1)  : t_1-t\in I \wedge\, \,  \trho,t_1,j_1\mx (\pcompile{\psi} \wedge \varphi_\Sigma)\\
      &&\wedge\, \, \forall (t,j) < (t_2,j_2) < (t_1,j_1) : \trho,t_2,j_2\mx (\pcompile{\phi} \vee \neg \varphi_\Sigma)
    \end{array}
$$

Now, since $\trho,t_1,j_1\mx \varphi_\Sigma$, we have $\mathit{ind}(t_1)\downarrow$ and accordingly: 

\begin{align}\label{eq2}
\trho,t_1,j_1\mx (\pcompile{\psi} \wedge \varphi_\Sigma) \Leftrightarrow t_1 = \ttau_{\mathit{ind}(t_1)} \wedge\, \,  \trho,t_1,j_1\mx \pcompile{\psi} 
\end{align} 

And, since $\trho,t_2,j_2 \mx  \neg \varphi_\Sigma$ whenever $\rho(t_2)(j_2) = (\vdash)$ we get

\begin{align}\label{eq3}
\forall (t,j) < (t_2,j_2) < (t_1,j_1) : \mathit{ind}(t_2)\downarrow \Rightarrow \trho,\ttau_{\mathit{ind}(t_2)},j_2 \mx \pcompile{\phi}
\end{align}

Finally, given~\cref{eq2},~\cref{eq3} the induction hypotheses we get 

$$
  \begin{array}{lll}
        \trho,t,j\mx\pcompile{\until{\phi}{I}{\psi}} & \Leftrightarrow & \exists k = \mathcal{m}(\mathit{ind(t)},j) < k_1 = (\mathit{ind(t_1)},j_1)  : \tau_{k_1} -\tau_k\in I \wedge\, \,  \rho, k_1\pw \psi\\
      &&\wedge\, \, \forall k < k_2 < k_1 : \rho, k_2\pw \phi
    \end{array}
$$

\end{itemize}
\end{proof}

\supmxpw*

\begin{proof}
We first characterise the language $\ilf{\gamma_2}{\itww}$. To simplify the notation, we use $\Rho^{\omega}$ to refer to the set of all infinite timed words, \ie $\Rho^{\omega} = \{\rho\in\Rho\, \, |\, \, |\rho| = \omega\}$ and similarly $\tilde{\Rho}^{\omega}$ for the set of all infinite compact timed words. 
 
$$\ilf{\gamma_2}{\itww} = \{\rho\in\Rho^\omega\, \, |\, \, \exists t > 0 : t \in [0,1] \wedge \rho,t\models (\eventually{[0,1)}{b} \wedge \eventually{[1,1]}{c})\}$$
And by unfolding the definitions of the inner Eventually operators:  
$$\ilf{\gamma_2}{\itww} = \{\rho\in\Rho^\omega\, \, |\, \, \exists t \in (0,1] : (\exists t_1 > t: t_1 - t \in [0,1)\, \,  \wedge\, \,  \rho,t_1\models b)\, \,  \wedge\, \,  (\exists t_2 > t: t_2 - t = 1\, \,  \wedge\, \,    \rho,t_2\models c)\}$$
And we get using standard arithmetics over dense intervals:
$$\ilf{\gamma_2}{\itww} = \{\rho\in\Rho^\omega\, \, |\, \, \exists t_1,t_2 : (t_1 \in (0,2)\, \,  \wedge\, \,  \rho,t_1\models b)\, \,  \wedge\, \,  (t_2 \in (1,2]\, \,  \wedge\, \,    \rho,t_2\models c)\, \,  \wedge\, \,  t_2 - t_1 \in (0,1)\}$$
Now, since $\rho,t_1\models b$, $\rho,t_2\models c$, and $t_2 - t_1 \in (0,1)$, both $t_1$ and $t_2$ correspond to timestamps $\tau_i$ and $\tau_j$, respectively, with $j>i$: 
$$\ilf{\gamma_2}{\itww} = \{\rho\in\Rho^\omega\, \, |\, \, \exists j>i : (\tau_i \in (0,2)\, \,  \wedge\, \,  \sigma_i = b)\, \,  \wedge\, \,  (\tau_j \in (1,2]\, \,  \wedge\, \,    \sigma_j = c)\, \,  \wedge\, \,  \tau_j - \tau_i \in (0,1)\}$$

Similarly, when characterising the language $\ilf{\gamma_3}{mx}$ we get:  
$$\ilf{\gamma_3}{\mx} = \{\tilde{\rho} \in \tilde{\Rho}^{\omega}\, \,  |\, \, \exists (j,j')>(i,i') : (\tilde{\tau}_{i} \in (0,2)\,  \wedge\, \tilde{\sigma}_{i}(i') = b)\, \,  \wedge\, \,  (\tilde{\tau}_{j} \in (1,2]\, \,  \wedge\, \,    \tilde{\sigma}_{j}(j') = c)\, \,  \wedge\, \,  \tilde{\tau}_{j} - \tilde{\tau}_{i} \in (0,1)\}$$

And finally by the properties of the bijective function $\mathcal{C}$ we may link the last two characterisations: 
$$\ilf{\gamma_2}{itw} = \{\mathcal{C}^{-1}(\trho)\, \, | \, \, \trho\in\ilf{\gamma_3}{mx}\}$$
\end{proof}

\supall*

\begin{proof}
We start with the easy case: 

$$\forall \phi\in\mtl: \ilf{\phi}{\itww} \neq \{\mathcal{C}^{-1}(\trho)\, \, |\, \, \trho\in \ilf{\gamma_5}{\mxx}\}$$

The proof is straightforward using the function $\mathcal{F}$. %
Let $\rho = (\sigma,\tau) = (c,0)(b,0)(a,0)(c,5)\ldots$ and $\rho' = (\sigma',\tau') = (c,0)(a,0)(b,0)(c,5)\ldots$ such that $\sigma_i = \sigma'_i = c$ and $\tau_i = \tau'_i$ for all $i>2$ (and both $\rho$ and $\rho'$ are assumed time divergent by definition). %
Then we have $\mathcal{F}(\rho) = \mathcal{F}(\rho')$ and by~\cref{thm1} for all $\phi\in\mtl$ we have either both $\rho$ and $\rho'$ in $\ilf{\phi}{\itww}$ or none of them is. %
Now, we have $\mathcal{C}(\rho)\in\ilf{\gamma_5}{\mxx}$ (since $\mathcal{C}(\rho)$ is in $\ilf{\zeta_1 \wedge \gamma_4}{\mxx}$)  but $\mathcal{C}(\rho')\notin\ilf{\gamma_5}{\mxx}$. %

For the remaining item   

$$\forall \phi\in\mtl: \ilf{\phi}{\pww} \neq \{\mathcal{C}^{-1}(\trho)\, \, |\, \, \trho\in \ilf{\gamma_3 \vee (\zeta_1 \wedge \gamma_4)}{\mxx}\}$$

We first prove that $\ilf{\gamma_3}{\mxx}$ and $\ilf{\zeta_1 \wedge \gamma_4}{\mxx}$ are disjoint. %
This is easy to see since for any $\trho$ that satisfies $\zeta_1 \wedge \gamma_4$, we have $\trho\models\gamma_4$ and therefore for all $(t,j)$ greater than $(0,0)$ with $0<t\leq2$, we have $\trho,t,j\models\neg\varphi_\Sigma$, and in particular no action occurs within the absolute interval $(0,2]$. %
Yet, for $\trho$ to be in $\ilf{\gamma_3}{\mxx}$, it must have at least one action $b$ occurring within $(0,2)$. %
Consequently, we have  $\ilf{\gamma_3}{\mxx}\cap\ilf{\zeta_1 \wedge \gamma_4}{\mxx}=\emptyset$. %
Using the function $\mathcal{C}^{-1}$, we get:
$$\{\mathcal{C}^{-1}(\trho)\, \, | \, \, \trho\in\ilf{\gamma_3}{\mxx}\}\cap\{\mathcal{C}^{-1}(\trho)\, \, | \, \, \trho\in\ilf{\zeta_1 \wedge \gamma_4}{\mxx}\}=\emptyset$$

The second step is to show that both $\zeta_1$ and $\gamma_4$ are expressible under $\pw$. %
This part is also rather straightforward, as each of these formulae have the same semantics over $\mx$ and $\pw$ modulo $\mathcal{C}$. 
Since for every infinite compact timed word $\tilde{\rho}$ in $\ilf{\gamma_4}{\mxx}$  there is no action within $(0,2]$, $\gamma_4$ will be vacuously satisfied under $\pw$ by $\mathcal{C}^{-1}(\tilde{\rho})$ (and similarly the other way around). %

$$\ilf{\gamma_4}{\pww} = \{\mathcal{C}^{-1}(\trho)\, \, |\, \, \trho\in\ilf{\gamma_4}{\mxx}\}$$

Similarly, since for every infinite $\tilde{\rho}$ in $\ilf{\zeta_1}{\mxx}$  there is a $b$ (at a pair strictly greater than $(0,0)$) immediately followed by an $a$, $\zeta_1$ will be trivially satisfied under $\pw$ as $\mathcal{C}^{-1}(\tilde{\rho})$ does not change the timestamps or order of actions (and similarly the other way around). 

$$\ilf{\zeta_1}{\pww} = \{\mathcal{C}^{-1}(\trho)\, \, |\, \, \trho\in\ilf{\zeta_1}{\mxx}\}$$

To summarise, we have: 

\begin{equation}\label{eqq3}
$$\{\mathcal{C}^{-1}(\trho)\, \, | \, \, \trho\in\ilf{\gamma_3}{\mxx}\}\cap\{\mathcal{C}^{-1}(\trho)\, \, | \, \, \trho\in\ilf{\zeta_1 \wedge \gamma_4}{\mxx}\}=\emptyset$$ 
\end{equation}

\begin{equation}\label{eqq4}
$$\ilf{\zeta_1}{\pww} = \{\mathcal{C}^{-1}(\trho)\, \, |\, \, \trho\in\ilf{\zeta_1}{\mxx}\}$$
\end{equation}

\begin{equation}\label{eqq5}
$$\ilf{\gamma_4}{\pww} = \{\mathcal{C}^{-1}(\trho)\, \, |\, \, \trho\in\ilf{\gamma_4}{\mxx}\}$$
\end{equation}

But we also have, by~\cref{lem:exist} and~\cref{lem:itw}: 

\begin{equation}\label{eqq6}
$$\forall\psi\in\mtl: \ilf{\psi}{\pww} \neq \{\mathcal{C}^{-1}(\trho)\, \, |\, \, \trho\in\ilf{\gamma_3}{\mxx}\}$$
\end{equation}

We may now prove the remaining item 

$$\forall \phi\in\mtl: \ilf{\phi}{\pww} \neq \{\mathcal{C}^{-1}(\trho)\, \, |\, \, \trho\in \ilf{\gamma_3 \vee (\zeta_1 \wedge \gamma_4)}{\mxx}\}$$

by contradiction. Assume that 

$$\exists \phi\in\mtl: \ilf{\phi}{\pww} = \{\mathcal{C}^{-1}(\trho)\, \, |\, \, \trho\in \ilf{\gamma_3 \vee (\zeta_1 \wedge \gamma_4)}{\mxx}\}$$

This implies, using equations \ref{eqq4} and \ref{eqq5}, the following

$$\exists \phi\in\mtl: \ilf{\phi}{\pww} = \{\mathcal{C}^{-1}(\trho)\, \, |\, \, \trho\in \ilf{\gamma_3}{\mxx}\} \cup (\ilf{\zeta_1}{\pww} \cap \ilf{\gamma_4}{\pww})$$ 

Using equation \ref{eqq3} we get 
$$\{\mathcal{C}^{-1}(\trho)\, \, |\, \, \trho\in \ilf{\gamma_3}{\mxx}\} =  \ilf{\phi}{\pww} \backslash(\ilf{\zeta_1}{\pww} \cap\ilf{\gamma_4}{\pww})$$
And finally 
$$\{\mathcal{C}^{-1}(\trho)\, \, |\, \, \trho\in \ilf{\gamma_3}{\mxx}\} =  \ilf{\phi\wedge\neg(\zeta_1\wedge\gamma_4)}{\pww}$$ 
Implying that there exists a formula $\psi = \phi\wedge\neg(\zeta_1\wedge\gamma_4)$ such that 
$$\{\mathcal{C}^{-1}(\trho)\, \, |\, \, \trho\in \ilf{\gamma_3}{\mxx}\} =  \ilf{\psi}{\pww}$$
which contradicts formula~\ref{eqq6}.

\end{proof}

\compileitw*

\begin{proof}
It suffices to prove the following statement: 

For all $\phi$ in MTL, $\trho\in\tRho$ and $t\in \rplus$ (with $t\leq \mu(\trho)$ if $\trho$ is finite): 
$$\trho,t,0\mx\icompile{\phi} \Leftrightarrow \mathcal{C}^{-1}(\trho),t\itw \phi$$

We let in the following $\trho$ be an arbitrary element of $\tRho$ and $\rho = \mathcal{C}^{-1}(\trho)$. The proof is by structural induction on $\phi$. 

\begin{itemize}
\item{} Case $a$: By the definition of $\mx$ and $\icompile{a}$ we have: $\trho,t,0\mx a\vee \eventually{[0,0]}{a}$ is equivalent to $$\exists j : \trho,t,j \mx a$$
Which is equivalent, by the definition of $\mx$, to: 
$$\exists j : \trho(t)(j) = a$$
Which is equivalent, by the definition of $\trho(t)$, to:
$$\exists i,j: \ttau_i = t \wedge \tsig_i(j) = a$$
By the definition of $\mathcal{C}$, this is equivalent to:
$$\exists i : \tau_i = t \wedge \sigma_i = a$$
Which is equivalent to the following, using the definition of $\itw$:
$$\rho,t\itw a$$

\item{} Cases $\neg\phi$ and $\phi\wedge\psi$ are straightforward. 
\item{} Case $\until{\phi}{I}{\psi}$: By the definition of $\mx$ and $\icompile{\until{\phi}{I}{\psi}}$ we have $$\trho,t,0\mx \until{(\beta\Rightarrow \icompile{\phi})}{I}{(\beta \wedge \icompile{\psi})}$$
is equivalent to (by the definition of $\mx$):  
$$
\begin{array}{ll}
\exists (t,0)<(t_1,j_1): t_1-t\in I & \wedge\, \,  \trho,t_1,j_1\mx (\beta\wedge \icompile{\psi}) \\
&\wedge\, \,  \forall  (t,0)<(t_2,j_2)<(t_1,j_1): \trho,t_2,j_2\mx (\beta\Rightarrow \icompile{\phi})
\end{array}
$$
Now, $\trho,t_1,j_1\mx \beta$ is equivalent to $j_1 = 0$, and  $\trho,t_2,j_2\mx (\beta\Rightarrow \icompile{\phi})$ is equivalent to $\trho,t_2,j_2\mx \top$ for every $j_2\neq 0$. It follows that the above statement is equivalent to: 
$$
\begin{array}{ll}
\exists (t,0)<(t_1,0): t_1-t\in I & \wedge\, \,  \trho,t_1,0\mx \icompile{\psi} \\
&\wedge\, \,  \forall  (t,0)<(t_2,0)<(t_1,0): \trho,t_2,0\mx \icompile{\phi}
\end{array}
$$

And since the lexicographical order $(t,0)<(t_1,0)$ is equivalent to $t<t_1$ (and similarly for $(t,0)<(t_2,0)$ and $(t_2,0)<(t_1,0)$) we get: 
$$
\begin{array}{ll}
\exists t<t_1: t_1-t\in I & \wedge\, \,  \trho,t_1,0\mx \icompile{\psi} \\
&\wedge\, \,  \forall  t<t_2<t_1: \trho,t_2,0\mx \icompile{\phi}
\end{array}
$$

And by the induction hypotheses: 
$$
\begin{array}{ll}
\exists t<t_1: t_1-t\in I & \wedge\, \,  \rho,t_1\itw \psi \\
&\wedge\, \,  \forall  t<t_2<t_1: \rho,t_2\itw \phi
\end{array}
$$
Which is, by the definition of $\itw$, equivalent to: 
$$\rho,t\itw\until{\phi}{I}{\psi}$$
\end{itemize}

\end{proof}